\documentclass[draftcls,onecolumn,11pt]{IEEEtran}

\usepackage{units}
\usepackage{amsmath}
\usepackage{graphicx}
\usepackage{amssymb}
\usepackage{stfloats}
\makeatletter

\providecommand{\tabularnewline}{\\}

\newtheorem{example}{Example}
\newtheorem{condition}{Condition}
\newtheorem{definitn}{Definition}
\newtheorem{prop}{Proposition}
\newtheorem{remrk}{Remark}
\newtheorem{lemma}{Lemma}
\newtheorem{cor}{Corollary}
\newtheorem{thm}{Theorem}

\newtheorem{alg}{Algorithm}

\usepackage{bbm}

\makeatother

\begin{document}

\title{On Multiple Decoding Attempts for Reed-Solomon Codes: A Rate-Distortion
Approach}

\author{
 Phong S. Nguyen, \IEEEmembership{Student Member, IEEE}, Henry D. Pfister, \IEEEmembership{Member, IEEE}, 
and \\Krishna R. Narayanan, \IEEEmembership{Senior Member, IEEE}
\thanks{
This work was supported by the National Science
Foundation under Grant No. 0802124. The work of P. S. Nguyen was also
supported in part by a Vietnam Education Foundation fellowship. The material in this paper was presented
in part at the 47th Annual Allerton Conference on Communications, Control and Computing, Monticello, IL, October 2009
and in part at IEEE International Symposium on Information Theory (ISIT), Austin, TX, June 2010.
}
\thanks{The authors are with the Department of Electrical and Computer
Engineering, Texas A\&M University, College Station, TX 77843, USA (email: psn@tamu.edu; hpfister@tamu.edu; krn@tamu.edu).}

}
\maketitle

\begin{abstract}
One popular approach to soft-decision decoding of Reed-Solomon (RS)
codes is based on using multiple trials of a simple RS
decoding algorithm in combination with erasing or flipping
a set of symbols or bits in each trial. This paper presents
a framework based on rate-distortion (RD) theory to analyze these multiple-decoding
algorithms. By defining an appropriate distortion measure
between an error pattern and an erasure pattern, the successful decoding
condition, for a single errors-and-erasures decoding trial, becomes
equivalent to distortion being less than a fixed threshold.
Finding the best set of erasure patterns also turns into a covering
problem which can be solved asymptotically by rate-distortion theory.
Thus, the proposed approach can be used to understand the asymptotic
performance-versus-complexity trade-off of multiple errors-and-erasures
decoding of RS codes. 

This initial result is also extended a few directions.
The rate-distortion exponent (RDE) is computed to give more precise results for moderate blocklengths.
Multiple trials of algebraic soft-decision (ASD) decoding are analyzed using this framework.
Analytical and numerical computations of the RD and RDE functions are also presented.
Finally, simulation results show that sets of erasure patterns designed using
the proposed methods outperform other algorithms with the same number of decoding trials. 
\end{abstract}


\section{Introduction}

\PARstart{R}{eed-Solomon} (RS) codes are among the most popular error-correcting
codes in communication and data storage systems. An $(N,K)$ RS code
of length $N$ and dimension $K$ is a maximum distance separable
(MDS) linear code with minimum distance $d_{\min}=N-K+1$. RS codes
have efficient hard-decision decoding (HDD) algorithms, such as the
Berlekamp-Massey (BM) algorithm, which can correct up to $\big\lfloor\frac{d_{\min}-1}{2}\big\rfloor$
errors.

Since the discovery of RS codes \cite{Reed-jsim60}, researchers have
spent a considerable effort on improving the decoding performance
at the expense of complexity. A breakthrough result of Guruswami and
Sudan (GS) introduced an algebraic hard-decision list-decoding algorithm,
based on bivariate interpolation and factorization, that
can correct errors well beyond half the minimum distance of the code
\cite{Guruswami-it99}. Nevertheless, HDD algorithms do not fully
exploit the information provided by the channel output. Koetter and
Vardy (KV) later extended the GS decoder to an algebraic soft-decision
(ASD) decoding algorithm by converting the probabilities observed
at the channel output into algebraic interpolation conditions in terms
of a multiplicity matrix \cite{Koetter-it03}. 

The GS and KV algorithms, however, have significant computational
complexity. Therefore, multiple runs of errors-and-erasures and errors-only
decoding with some low-complexity algorithm, such as the BM algorithm,
has renewed the interest of researchers. These algorithms
use the soft-information available at the channel output to
construct a set of either erasure patterns \cite{Forney-it66,Lee-globecom08},
test patterns \cite{Chase-it72}, or patterns combining both \cite{Tang-comlett01,Tokushige-ieice06}
and then attempt to decode using each pattern. Techniques have also
been introduced to lower the complexity per decoding trial in \cite{Bellorado-it10,Xia-com07,Xia-icc08}.
Other soft-decision decoding algorithms for RS codes include \cite{Jiang-it06,Fossorier-it95}
that use the binary expansion of RS codes to work on the bit-level.
In \cite{Jiang-it06}, belief propagation is run while the parity-check
matrix is iteratively adapted on the least reliable basis. Meanwhile,
\cite{Fossorier-it95} adapts the generator matrix on the most reliable
basis and uses reprocessing techniques based on ordered statistics.

In the scope of multiple errors-and-erasures decoding, there have
been several algorithms proposed that use different erasure codebooks
(i.e., different sets of erasure patterns). After running the errors-and-erasures
decoding algorithm multiple times, each time using one erasure pattern
in the set, these algorithms produce a list of candidate codewords,
whose size is usually small, and then pick the best codeword on this
list. The common idea of constructing the set of erasure patterns
in these multiple errors-and-erasures decoding algorithms is to erase
some of the least reliable symbols since those symbols are more prone
to be erroneous. The first algorithm of this type is called Generalized
Minimum Distance (GMD) \cite{Forney-it66} and it repeats errors-and-erasures
decoding while successively erasing an even number of the least reliable
positions (LRPs) (assuming that $d_{\min}$ is odd). More recent work
by Lee and Kumar \cite{Lee-globecom08} proposes a soft-information
successive (multiple) error-and-erasure decoding (SED) that achieves
better performance but also increases the number of decoding attempts.
Literally, the Lee-Kumar's SED$(l,f)$ algorithm runs multiple errors-and-erasures
decoding trials with every combination of an even number $\leq f$
of erasures within the $l$ LRPs. 

A natural question that arises is how to construct the {}``best''
set of erasure patterns for multiple errors-and-erasures decoding.
Inspired by this, we first develop a rate-distortion (RD) framework
to analyze the asymptotic trade-off between performance and complexity
of multiple errors-and-erasures decoding of RS codes. The main idea
is to choose an appropriate distortion measure so that the decoding
is successful if and only if the distortion between the error pattern
and erasure pattern is smaller than a fixed threshold. After that,
a set of erasure patterns is generated randomly (similar to a random
codebook generation) in order to minimize the expected minimum distortion. 

One of the drawbacks in the RD approach is that the mathematical framework
is only valid as the block-length goes to infinity. Therefore, we
also consider the natural extension to a rate-distortion exponent
(RDE) approach that studies the behavior of the probability, $p_{e}$,
that the transmitted codeword is not on the list as a function of
the block-length $N$. The overall error probability can be approximated
by $p_{e}$ because the probability that the transmitted codeword
is on the list but not chosen is very small compared to $p_{e}$.
Hence, our RDE approach essentially focuses on maximizing the exponent
at which the error probability decays as $N$ goes to infinity. The
RDE approach can also be considered as the generalization of the RD
approach since the latter is a special case of the former when the
rate-distortion exponent tends to zero. Using the RDE analysis, this
approach also helps answer the following two questions: (i) What is
the minimum error probability achievable for a given number of decoding
attempts (or a given size of the set of erasure patterns)? (ii) What
is the minimum number of decoding attempts required to achieve a certain
error probability? 

The RD and RDE approaches are also extended beyond conventional errors-and-erasures
decoding to analyze multiple-decoding for decoding schemes such as
ASD decoding. It is interesting to note that the RDE approach for
ASD decoding schemes contains the special case where the codebook
has exactly one entry (i.e., ASD decoding is run only once). In this case,
the distribution of the codebook that maximizes the exponent implicitly
generates the optimal multiplicity matrix. This is similar to the
line of work \cite{Parvaresh-isit03,El-Khamy-dimacs05,Ratnakar-it05,Das-isit09}
where various researchers solve for a multiplicity matrix that minimizes
the error probability obtained by either using a Gaussian approximation
\cite{Parvaresh-isit03}, applying a Chernoff bound \cite{El-Khamy-dimacs05,Ratnakar-it05},
or using Sanov's theorem \cite{Das-isit09}.

Finally, we propose a family of multiple-decoding algorithms based
on these two approaches that achieve better performance-versus-complexity
trade-off than other algorithms.

\subsection{Outline of the paper}

The paper is organized as follows. In Section~\ref{sec:MultipleBMA},
we design an appropriate distortion measure and present a rate-distortion
framework, for both the RD and RDE approaches, to analyze the performance-versus-complexity
trade-off of multiple errors-and-erasures decoding of RS codes. Also
in this section, we propose a general multiple-decoding algorithm
that can be applied to errors-and-erasures decoding. Then, in Section~\ref{sec:Computing-RD},
we discuss numerical computations of RD and RDE functions together
with their complexity analyses which are needed for the proposed algorithm.
In Section~\ref{sec:Multiple ASD}, we analyze both bit-level and
symbol-level ASD decoding and design distortion measures compatible
with the general algorithm. A closed-form analysis of some RD and
RDE functions is presented in Section~\ref{sec:Closed-Form-Analysis-of}.
Next, in Section~\ref{sec:Ext-and-Gen}, we offer some extensions
that combine covering codes with random codes and also consider the
case of a single decoding attempt. Simulation results are presented
in Section~\ref{sec:Simulation-results} and, finally, conclusions
are provided in Section~\ref{sec:Conclusion}.

\section{A RD Framework For Multiple Errors-and-Erasures Decoding\label{sec:MultipleBMA}}

In this section, we first set up a rate-distortion framework to analyze
multiple attempts of conventional hard decision errors-and-erasures
decoding.

Let $\mathbb{F}_{m}$ with $m=2^{\eta}$ be the Galois field with
$m$ elements denoted as $\alpha_{1},\alpha_{2},\ldots,\alpha_{m}$.
We consider an $(N,K)$ RS code of length $N$, dimension $K$ over
$\mathbb{F}_{m}$. Assume that we transmit a codeword $\mathbf{c}=(c_{1},c_{2},\ldots,c_{N})\in\mathbb{F}_{m}^{N}$
over some channel and receive a vector $\mathbf{r}=(r_{1},r_{2},\ldots,r_{N})\in\mathcal{Y}^{N}$
where $\mathcal{Y}$ is the received alphabet for a single RS symbol.
While our approach can be applied to much more general channels, our
simulations focus on the Additive White Gaussian Noise (AWGN) channel
and two common modulation formats, namely BPSK and $m$-QAM. Correspondingly,
we use $\mathcal{Y}=\mathbb{R}^{\eta}$ for BPSK and $\mathcal{Y}=\mathbb{R}^{2}$
for $m$-QAM. For each codeword index $i$, let $\varphi_{i}:\{1,2,\ldots,m\}\rightarrow\{1,2,\ldots,m\}$
be the permutation given by sorting $\pi_{i,j}=\Pr(c_{i}=\alpha_{j}|r_{i})$
in decreasing order so that $\pi_{i,\varphi_{i}(1)}\geq\pi_{i,\varphi_{i}(2)}\geq\ldots\geq\pi_{i,\varphi_{i}(m)}$.
Then, we can specify $y_{i,j}=\alpha_{\varphi_{i}(j)}$ as the $j$-th
most reliable symbol for $j=1,\ldots,m$ at codeword index $i$. To
obtain the reliability of the codeword positions (indices), we construct
the permutation $\sigma:\{1,2,\ldots,N\}\rightarrow\{1,2,\ldots,N\}$
given by sorting the probabilities $\pi_{i,\varphi_{i}(1)}$ of the
most likely symbols in increasing order.%
\footnote{Other measures such as entropy or the average number of guesses might
improve Algorithm B in Section \ref{sec:Proposed-Algorithm}.%
} Thus, codeword position $\sigma(i)$ is the $i$-th LRP. These above
notations will be used throughout this paper.
\begin{example}
Consider $N=3$ and $m=4$. Assume that we have the probability $\pi_{i,j}$
written in a matrix form as follows:\[
\mathbf{\Pi}=\left(\begin{array}{ccc}
0.01 & 0.01 & \mathbf{0.93}\\
\mathbf{0.94} & 0.03 & 0.04\\
0.03 & \mathbf{0.49} & 0.01\\
0.02 & 0.47 & 0.02\end{array}\right) \text{where}\,\,\pi_{i,j}=[\mathbf{\Pi}]_{j,i}.\]
then $\varphi_{1}(1,2,3,4)=(2,3,4,1)$, $\varphi_{2}(1,2,3,4)=(3,4,2,1)$,
$\varphi_{3}(1,2,3,4)=(1,2,4,3)$ and $\sigma(1,2,3)=(2,3,1)$.\end{example}
\begin{condition}
\label{con:BMAerr-n-era}(Classical decoding threshold, see \cite{Lin-1983,Blahut-2003}):
If $e$ symbols are erased, a conventional hard-decision errors-and-erasures
decoder such as the BM algorithm is able to correct $\nu$ errors
in unerased positions if and only if \begin{equation}
2\nu+e<N-K+1.\label{eq:scbma_0}\end{equation}

\end{condition}

\subsection{Conventional error patterns and erasure patterns.}
\begin{definitn}
\label{def:(Conv. patterns)}(Conventional error patterns and erasure
patterns) We define $x^{N}\in\mathbb{Z}_{2}^{N}\triangleq\{0,1\}^{N}$
and $\hat{x}^{N}\in\mathbb{Z}_{2}^{N}$ as an error pattern and an
erasure pattern respectively, where $x_{i}=0$ means that an error
occurs (i.e., the most likely symbol is incorrect) and $\hat{x}_{i}=0$
means that the symbol at index $i$ is erased (i.e., an erasure is
applied at index $i$). $X^{N}$ and $\hat{X}^{N}$ will be used to
denote the random vectors which generate the realizations $x^{N}$
and $\hat{x}^{N}$, respectively.\end{definitn}
\begin{example}
If $d_{\min}$ is odd then the GMD algorithm corresponds to the set\[
\{111111\ldots,001111\ldots,000011\ldots,\ldots,\underbrace{00\ldots0}_{d_{\min}-1}11\ldots1\}\]
of erasure patterns. Meanwhile, the SED$(3,2)$ uses the following
set \[
\{\underline{111}111\ldots,\underline{001}111\ldots,\underline{010}111\ldots,\underline{100}111\ldots\}.\]
Here, in each erasure pattern, the letters are written in increasing
reliability order of the codeword positions.
\end{example}
Let us revisit the question of how to construct the best set of erasure
patterns for multiple errors-and-erasures decoding. First, it can
be seen that a multiple errors-and-erasures decoding succeeds if the
condition (\ref{eq:scbma_0}) is satisfied during at least one round
of decoding. Thus, our approach is to design a distortion measure
that converts the condition (\ref{eq:scbma_0}) into a form where
the distortion between an error pattern $x^{N}$ and an erasure pattern
$\hat{x}^{N}$, denoted as $d(x^{N},\hat{x}^{N})$, is less than a
fixed threshold.
\begin{definitn}
Given a \emph{letter-by-letter }distortion measure $\delta$, the
distortion between an error pattern $x^{N}$ and an erasure pattern
$\hat{x}^{N}$ is defined by\[
d(x^{N},\hat{x}^{N})=\sum_{i=1}^{N}\delta(x_{i},\hat{x}_{i}).\]
\end{definitn}
\begin{prop}
\label{prop:bma1}If we choose the \emph{letter-by-letter} distortion
measure $\delta:\mathcal{X}\times\hat{\mathcal{X}}\rightarrow\mathbb{R}_{\geq0}$,
where in this case $\mathcal{X}=\hat{\mathcal{X}}=\mathbb{Z}_{2}$,
as follows:\begin{equation}
\begin{array}{cc}
\delta(0,0)=1, & \delta(0,1)=2,\\
\delta(1,0)=1, & \delta(1,1)=0,\end{array}\label{eq:dstfnBMA}\end{equation}
then the condition (\ref{eq:scbma_0}) for a successful errors-and-erasures
decoding is equivalent to\begin{equation}
d(x^{N},\hat{x}^{N})<N-K+1\label{eq:distorthreshold}\end{equation}
where the distortion is less than a fixed threshold.\end{prop}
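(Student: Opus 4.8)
The plan is to show that the letter-by-letter distortion defined in \eqref{eq:dstfnBMA} exactly counts the quantity $2\nu + e$ appearing in Condition~\ref{con:BMAerr-n-era}, so that the threshold inequality \eqref{eq:scbma_0} becomes \eqref{eq:distorthreshold}. The key observation is that each codeword index $i$ falls into exactly one of four mutually exclusive categories determined by the pair $(x_i,\hat{x}_i)$: the position is either erased or not ($\hat{x}_i=0$ or $\hat{x}_i=1$), and the most likely symbol is either in error or correct ($x_i=0$ or $x_i=1$). I would make explicit how each category contributes to the error/erasure accounting used by the decoder.

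First I would partition $\{1,\ldots,N\}$ according to $(x_i,\hat{x}_i)$ and identify the operational meaning of each case. A position with $\hat{x}_i=0$ is erased, so it contributes to the erasure count $e$ regardless of whether it was erroneous; a position with $\hat{x}_i=1$ is left unerased, and it contributes an error to $\nu$ precisely when $x_i=0$ (the most likely symbol is wrong) and contributes nothing when $x_i=1$ (correct and unerased). Thus I would write $e = |\{i : \hat{x}_i = 0\}|$ and $\nu = |\{i : x_i=0,\ \hat{x}_i=1\}|$, the latter being exactly the unerased error positions that the BM decoder must correct.

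Next I would verify that the per-letter distortion assigns to each category exactly its contribution to $2\nu + e$. Summing over the four cases: the pair $(1,1)$ (correct, unerased) contributes $0$ to both the decoder count and to $\delta$, matching $\delta(1,1)=0$; the pair $(0,1)$ (error, unerased) contributes $2$ to $2\nu+e$ and $\delta(0,1)=2$; the pair $(1,0)$ (correct, erased) and the pair $(0,0)$ (error, erased) each contribute $1$ to the erasure count $e$, matching $\delta(1,0)=\delta(0,0)=1$. Summing the letter-by-letter contributions therefore gives
\begin{equation}
d(x^N,\hat{x}^N)=\sum_{i=1}^{N}\delta(x_i,\hat{x}_i)=2\nu+e,
\end{equation}
so \eqref{eq:scbma_0} reads $d(x^N,\hat{x}^N) < N-K+1$, which is \eqref{eq:distorthreshold}. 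Since Condition~\ref{con:BMAerr-n-era} holds if and only if \eqref{eq:scbma_0} is satisfied, the equivalence follows.

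I do not anticipate a serious obstacle here, as the result is essentially a bookkeeping identity; the only subtle point worth stating carefully is the convention embedded in Definition~\ref{def:(Conv. patterns)}, namely that $x_i=0$ encodes an error and $\hat{x}_i=0$ encodes an erasure (the ``reversed'' $0/1$ convention), so one must be careful not to transpose the roles when reading off the four cases. I would make the case analysis explicit precisely to guard against that sign-flip confusion, and I would note that the distortion correctly charges $1$ for an erasure of an already-erroneous position $(0,0)$ rather than double-counting it as both an error and an erasure — this is exactly why $\delta(0,0)=1$ and not $3$, reflecting that an erased symbol is never processed as an unerased error by the decoder.
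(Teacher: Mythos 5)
Your proposal is correct and follows essentially the same route as the paper's own proof: the paper likewise partitions the indices by the pair $(x_i,\hat{x}_i)$ (using counts $\chi_{j,k}$), identifies $e=\chi_{0,0}+\chi_{1,0}$ and $\nu=\chi_{0,1}$, and observes that the chosen $\delta$ makes $d(x^N,\hat{x}^N)=2\chi_{0,1}+\chi_{0,0}+\chi_{1,0}=2\nu+e$. Your explicit four-case bookkeeping is just a notational variant of that same argument, so there is nothing to add.
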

\begin{proof}
First, we define \[\chi_{j,k}\triangleq\left|\{i\in\left\{ 1,2,\ldots,N\right\} :x_{i}=j,\hat{x}_{i}=k\}\right|\]
to count the number of $(x_{i},\hat{x}_{i})$ pairs equal to $(j,k)$
for every $j\in\mathcal{X}$ and $k\in\hat{\mathcal{X}}$. With the
chosen distortion measure, we have \[d(x^{N},\hat{x}^{N})=2\chi_{0,1}+\chi_{0,0}+\chi_{1,0}.\]
Noticing that $e=\chi_{0,0}+\chi_{1,0}$ and $\nu=\chi_{0,1}$, the
condition (\ref{eq:scbma_0}) for one errors-and-erasures decoding
attempt to succeed becomes $2\chi_{0,1}+\chi_{0,0}+\chi_{1,0}<N-K+1$
which is equivalent to $d(x^{N},\hat{x}^{N})<N-K+1$. 
\end{proof}
Next, we try to maximize the chance that this successful decoding
condition is satisfied by at least one of the decoding attempts (i.e.,
$d(x^{N},\hat{x}^{N})<N-K+1$ for at least one erasure pattern $\hat{x}^{N}$).
Mathematically, we want to build a set $\mathcal{B}$ of no more than
$2^{R}$ erasure patterns $\hat{x}^{N}$ that achieves the maximum
\begin{equation*}
\max_{\mathcal{B}:|\mathcal{B}|\leq2^{R}}\Pr\left\{ \min_{\hat{x}^{N}\in\mathcal{B}}d(X^{N},\hat{x}^{N})<N-K+1\right\} .\label{eq:ProbStatement}\end{equation*}
Solving this problem exactly is very difficult.
However, one can observe that it is a covering problem where tries to
cover the most-likely error patterns using a fixed number of spheres
centered at the chosen erasure patterns. This view leads to two asymptotic
solutions of the problem based on rate-distortion theory. Taking this
point of view, we view the error pattern $x^{N}$ as a source sequence
and the erasure pattern $\hat{x}^{N}$ as a reproduction sequence.

\begin{figure}[t]
\begin{centering}
\includegraphics[scale=0.85]{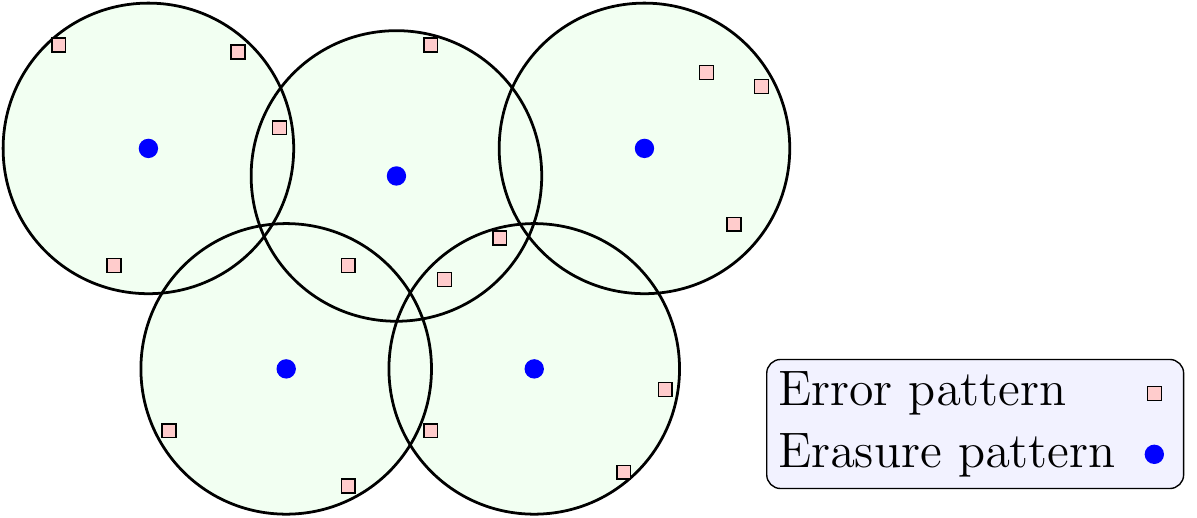}
\par\end{centering}

\centering{}\caption{Pictorial illustration of a covering problem}

\end{figure}

\subsubsection{RD approach}

Rate-distortion theory (see \cite[Chapter 13]{Cover-1991}) characterizes
the trade-off between $\bar{R}$ and $\bar{D}$ such that sets $\mathcal{B}$
of $2^{N\bar{R}}$ reproduction sequences exist (and can be generated
randomly) so that\[
\lim_{N\rightarrow\infty}\frac{1}{N}E_{X^{N},\mathcal{B}}\left[\min_{\hat{x}^{N}\in\mathcal{B}}d(X^{N},\hat{x}^{N})\right]<\bar{D}.\]
Under mild conditions, this implies that, for large enough $N$, we
have\[
\min_{\hat{x}^{N}\in\mathcal{B}}d(X^{N},\hat{x}^{N})<N\bar{D}\]
with high probability. Here, $\bar{R}$ and $\bar{D}$ are closely
related to the complexity and the performance, respectively, of the
decoding algorithm. Therefore, we characterize the trade-off between
those two aspects using the relationship between $\bar{R}$ and $\bar{D}$.
In this paper, we denote the rate and distortion by $R$ and $D$,
respectively, using unnormalized quantities, i.e., $R=N\bar{R}$ and
$D=N\bar{D}$.

\subsubsection{RDE approach}

The above-mentioned RD approach focuses on minimizing the average
minimum distortion with little knowledge of how the tail of the distribution
behaves. In this RDE approach, we instead focus on directly minimizing
the probability that the minimum distortion is not less than the predetermined
threshold $D=N-K+1$ (due to the condition (\ref{eq:distorthreshold}))
with the help of an error-exponent analysis. The exact probability
of interest is \[p_{e}=\Pr\left(X^{N}:\min_{\hat{x}^{N}\in\mathcal{B}}d(X^{N},\hat{x}^{N})>D\right)\]
that reflects how likely the decoding threshold (\ref{eq:scbma_0})
is going to fail. In other words, every error pattern $x^{N}$ can
be covered by a sphere centered at an erasure pattern $\hat{x}^{N}$
except for a set of error patterns of probability $p_{e}$. The RDE
analysis shows that $p_{e}$ decays exponentially as $N\to\infty$
and the maximum exponent attainable is the RDE function $F(R,D)$.
Throughout this paper, we denote the rate-distortion exponent by $F(R,D)$
using unnormalized quantities (i.e., without dividing by $N$) and
note that exponent used by other authors in \cite{Blahut-it74,Marton-it74,Csiszar-1981}
is often the normalized version $\bar{F}(R,D)\triangleq\frac{F(R,D)}{N}$. 

RDE analysis is discussed extensively in \cite{Blahut-it74,Marton-it74}
and it is shown that a set $\mathcal{B}$ of roughly
$2^{N\bar{R}}$ codewords, generated randomly using the test-channel
input distribution,
can be used to achieve $\bar{F}(R,D)$.
An upper bound is also given that shows, for any $\epsilon>0$, there is a
sufficiently large $N$ (see \cite[p. 229]{Blahut-1987}) such that
\begin{equation*}
p_{e}\leq2^{-N[\bar{F}(R,D)-\epsilon]}.\end{equation*}
An exponentially tight lower bound for $p_{e}$ can also be obtained
(see \cite[p. 236]{Blahut-1987}) and it implies that the best sequence
of codebooks satisfy
\[ \lim_{N\to\infty}-\frac{1}{N}\log p_{e}=\bar{F}(R,D).\]

\begin{remrk}
\label{rem:advantage}The RDE approach possesses several advantages.
First, the converse of the RDE \cite[p. 236]{Blahut-1987} provides
a lower bound for $p_{e}$. This implies that, given an arbitrary
set $\mathcal{B}$ of roughly $2^{N\bar{R}}$ erasure patterns and
any $\epsilon>0$, the probability $p_{e}$ cannot be made lower than
$2^{-N[\bar{F}(R,D)+\epsilon]}$ for $N$ large enough. Thus, no matter
how one chooses the set $\mathcal{B}$ of erasure patterns, the difference
between the induced probability of error and the $p_{e}$ for the
RDE approach becomes negligible for $N$ large enough. Second, it
can help one estimate the smallest number of decoding attempts to
get to a RDE of $F$ (or get to an error probability of roughly $2^{-N\bar{F}}$)
or, similarly, allow one to estimate the RDE (and error probability)
for a fixed number of decoding attempts.
\end{remrk}

\subsection{Generalized error patterns and erasure patterns}

In this subsection, we consider a generalization of the conventional
error patterns and erasure patterns under the same framework to make
better use of the soft information. At each index of the RS codeword,
besides erasing a symbol, we also try to decode using not only the
most likely symbol but also less likely ones as the hard decision
(HD) symbol. To handle up to the $\ell$ most likely symbols at each
index $i$, we let $\mathbb{Z}_{\ell+1}\triangleq\{0,1,\ldots,\ell\}$
and consider the following definition.
\begin{definitn}
\label{def:(PatternsBMA)}(Generalized error patterns and erasure
patterns) Consider a positive integer $\ell$ smaller than the field
size $m$. Let $x^{N}\in\mathbb{Z}_{\ell+1}^{N}$ be a \emph{generalized
error pattern} where, at index $i$, $x_{i}=j$ implies that the $j$-th
most likely symbol is correct for $j\in\{1,2,\ldots\ell\}$, and $x_{i}=0$
implies none of the first $\ell$ most likely symbols is correct.
Let $\hat{x}^{N}\in\mathbb{Z}_{\ell+1}^{N}$ be a \emph{generalized
erasure pattern} used for decoding where, at index $i$, $\hat{x}_{i}=k$
implies that the $k$-th most likely symbol is used as the hard-decision
symbol for $k\in\{1,2,\ldots,\ell\}$, and $\hat{x}_{i}=0$ implies
that an erasure is used at that index. 

For simplicity, we refer to $x^{N}$ as the error pattern and $\hat{x}^{N}$
as the erasure pattern like in the conventional case. Now, we need
to convert the condition (\ref{eq:scbma_0}) to the form where $d(x^{N},\hat{x}^{N})$
is less than a fixed threshold. Proposition \ref{prop:bma1}\emph{
}is thereby generalized into the following proposition.\end{definitn}
\begin{prop}
\label{thm:ExtBMA-1}We choose the \emph{letter-by-letter} distortion
measure $\delta:\mathcal{X}\times\hat{\mathcal{X}}\rightarrow\mathbb{R}_{\geq0}$,
where in this case $\mathcal{X}=\hat{\mathcal{X}}=\mathbb{Z}_{\ell+1}$,
defined by $\delta(x,\hat{x})=[\Delta]_{x,\hat{x}}$ in terms of the
$(\ell+1)\times(\ell+1$) matrix\begin{equation}
\Delta=\left(\begin{array}{ccccc}
1 & 2 & \ldots & 2 & 2\\
1 & 0 & \ldots & 2 & 2\\
\vdots & \vdots & \ddots & \vdots & \vdots\\
1 & 2 & \ldots & 0 & 2\\
1 & 2 & \ldots & 2 & 0\end{array}\right).\label{eq:dstmBMl}\end{equation}
Using this, the condition (\ref{eq:scbma_0}) for a successful errors-and-erasures
decoding is equivalent to\[
d(x^{N},\hat{x}^{N})<N-K+1.\]
\end{prop}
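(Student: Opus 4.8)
The plan is to generalize the counting argument from Proposition~\ref{prop:bma1} to the larger alphabet $\mathbb{Z}_{\ell+1}$. As before, I would introduce the counts
\[
\chi_{j,k}\triangleq\left|\{i\in\{1,2,\ldots,N\}:x_{i}=j,\hat{x}_{i}=k\}\right|
\]
for every $j,k\in\mathbb{Z}_{\ell+1}$, so that the total distortion decomposes as $d(x^{N},\hat{x}^{N})=\sum_{j,k}[\Delta]_{j,k}\,\chi_{j,k}$. The entire task then reduces to interpreting $[\Delta]_{j,k}$ operationally and showing that this weighted sum equals $2\nu+e$, where $\nu$ is the number of errors in unerased positions and $e$ is the number of erasures under the generalized decoding rule of Definition~\ref{def:(PatternsBMA)}.

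The key step is to read off, for each position $i$, what the pair $(x_i,\hat{x}_i)$ means for the errors-and-erasures decoder. First I would handle the erasure column $k=0$: whenever $\hat{x}_i=0$ an erasure is applied regardless of $x_i$, and indeed the first column of $\Delta$ is identically $1$, so these positions each contribute $1$ to the distortion and account for the erasure count $e=\sum_{j}\chi_{j,0}$. Next I would handle the unerased positions $k\in\{1,\ldots,\ell\}$. Here the hard-decision symbol is the $k$-th most likely symbol, so the position is decoded correctly precisely when $x_i=k$ (the $k$-th most likely symbol is the correct one): the diagonal entries $[\Delta]_{k,k}=0$ for $k\geq1$ capture exactly this error-free case. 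In every other unerased case, i.e. $x_i\neq k$ with $k\geq1$ (including $x_i=0$, meaning none of the $\ell$ most likely symbols is correct), the decoded symbol is wrong, contributing an error; the off-diagonal entries $[\Delta]_{j,k}=2$ for $j\neq k$, $k\geq1$ encode that each such error costs $2$, matching the coefficient of $\nu$ in $2\nu+e$.

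Assembling these observations, I would write $e=\sum_{j=0}^{\ell}\chi_{j,0}$ and $\nu=\sum_{k=1}^{\ell}\sum_{j\neq k}\chi_{j,k}$, and then verify directly that $\sum_{j,k}[\Delta]_{j,k}\chi_{j,k}=e+2\nu$ by grouping the terms exactly according to the three cases above (first column, diagonal for $k\geq1$, and off-diagonal for $k\geq1$). Substituting into Condition~\ref{con:BMAerr-n-era}, the inequality $2\nu+e<N-K+1$ becomes $d(x^{N},\hat{x}^{N})<N-K+1$, which is the claim.

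I expect the only genuine subtlety to be the bookkeeping for the $x_i=0$ row, i.e. confirming that when none of the $\ell$ most likely symbols is correct, using any one of them as the hard-decision symbol still counts as a single symbol error of cost $2$ (never as a correct decode), so that the entire first row of $\Delta$ below the corner is $2$ except for the erasure entry $[\Delta]_{0,0}=1$. Once that case is pinned down, the remaining work is the routine verification that the weighted count collapses to $2\nu+e$, and the equivalence with~\eqref{eq:scbma_0} follows immediately as in the conventional case.
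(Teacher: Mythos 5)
Your proposal is correct and follows essentially the same route as the paper: the paper's proof likewise introduces the counts $\chi_{j,k}$, identifies $e=\sum_{j=0}^{\ell}\chi_{j,0}$ and $\nu=\sum_{k=1}^{\ell}\sum_{j=0,j\neq k}^{\ell}\chi_{j,k}$, and observes that with the matrix $\Delta$ the weighted sum $\sum_{j,k}[\Delta]_{j,k}\chi_{j,k}$ collapses to $2\nu+e$, so that Condition~\ref{con:BMAerr-n-era} becomes $d(x^{N},\hat{x}^{N})<N-K+1$. Your more explicit case analysis (erasure column, diagonal, off-diagonal, and the $x_{i}=0$ row) is exactly the bookkeeping the paper leaves implicit by saying the reasoning is ``very similar'' to Proposition~\ref{prop:bma1}.
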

\begin{proof}
The reasoning is very similar to the proof of Proposition \ref{prop:bma1}
using the fact that $e=\sum_{j=0}^{\ell}\chi_{j,0}$ and $\nu=\sum_{k=1}^{\ell}\sum_{j=0,j\neq k}^{\ell}\chi_{j,k}$
where $\chi_{j,k}\triangleq\left|\{i\in\left\{ 1,2,\ldots,N\right\} :x_{i}=j,\hat{x}_{i}=k\}\right|$
for every $j,k\in\mathbb{Z}_{\ell+1}$.
\end{proof}
For each $\ell=1,2,\ldots,m$, we will refer to this generalized case
as mBM-$\ell$ decoding.
\begin{example}
Consider mBM-2 (or top-$\ell$ decoding with $\ell=2$). In this case,
the distortion measure is given by following the matrix\[
\Delta=\left(\begin{array}{ccc}
1 & 2 & 2\\
1 & 0 & 2\\
1 & 2 & 0\end{array}\right).\]
\end{example}
\begin{remrk}
The distortion measure matrix changes slightly if we use the errors-only
decoding instead of errors-and-erasures decoding. In this case, $\hat{\mathcal{X}}=\mathbb{Z}_{\ell+1}\setminus\{0\}$
and the chosen letter-by-letter distortion measure is given in terms
of the $(\ell+1)\times\ell$ matrix obtained by deleting the first
column of (\ref{eq:dstmBMl}). When $\ell=2,$ we consider the first
and second most likely symbols as the two hard-decision symbols at
each codeword position. This is similar to the Chase-type decoding
method proposed by Bellorado and Kavcic \cite{Bellorado-it10}. Das
and Vardy also suggest this approach by considering only several highest
entries in each column of the reliability matrix $\Pi$ for single
ASD decoding of RS codes \cite{Das-isit09}.
\end{remrk}

\subsection{Proposed General Multiple-Decoding Algorithm \label{sec:Proposed-Algorithm}}

In this section, we propose two general multiple-decoding algorithms
for RS codes. In each algorithm, one can choose either Step 2a that
corresponds to the RD approach or Step 2b that corresponds to the
RDE approach. These general algorithms apply to not only multiple
errors-and-erasures decoding but also multiple-decoding of other decoding
schemes that we will discuss later. The common first step is designing
a distortion measure $\delta:\mathcal{X}\times\hat{\mathcal{X}}\rightarrow\mathbb{R}_{\geq0}$
that converts the condition for a single decoding to succeed to the
form where distortion is less than a fixed threshold. After that,
decoding proceeds as described below.

\subsubsection{Algorithm A}

~

\emph{Step 1:} Based on the received signal sequence, compute an $m\times N$
reliability matrix $\mathbf{\Pi}$ where $[\mathbf{\Pi}]_{j,i}=\pi_{i,j}$.
From this, determine the probability matrix $\mathbf{P}$ where $p_{i,j}=\Pr(X_{i}=j)$
for $i=1,2,\ldots,N$ and $j\in\mathcal{X}.$

\emph{Step 2a:} (RD approach) Compute the RD function of a source
sequence (error pattern) with probability of source letters derived
from $\mathbf{P}$ and the chosen distortion measure (see Section
\ref{sec:Computing-RD} and Section \ref{sec:Closed-Form-Analysis-of}).
Given the design rate $R$, determine the optimal input-probability
distribution matrix $\mathbf{Q}$, for the test channel, with entries
$q_{i,k}=\Pr(\hat{X}_{i}=k)$ for $i=1,2,\ldots,N$ and $k\in\hat{\mathcal{X}}.$

\emph{Step 2b:} (RDE approach) Given $D$ (in most cases $D=N-K+1)$
and the design rate $R$, compute the RDE function of a source
sequence (error pattern) with probability of source letters derived
from $\mathbf{P}$ and the chosen distortion measure (see Section
\ref{sec:Computing-RD} and Section \ref{sec:Closed-Form-Analysis-of}).
Also determine the optimal input-probability distribution matrix $\mathbf{Q}$,
for the test channel, with entries $q_{i,k}=\Pr(\hat{X}_{i}=k)$ for
$i=1,2,\ldots,N$ and $k\in\hat{\mathcal{X}}.$

\emph{Step 3:} Randomly generate a set of $2^{R}$ erasure patterns
using the test-channel input-probability distribution matrix $\mathbf{Q}$. 

\emph{Step 4:} Run multiple attempts of the corresponding decoding
scheme (e.g., errors-and erasures decoding) using the set of erasure
patterns in Step 3 to produce a list of candidate codewords.

\emph{Step 5:} Use the maximum-likelihood (ML) rule to pick the best
codeword on the list.
\begin{remrk}
In Algorithm A, the RD (or RDE) function is computed on the fly, i.e.,
after every received signal sequence. In practice, it may be preferable
to precompute the RD (or RDE) function based on the empirical distribution
measured from the channel. We refer to this approach as Algorithm
B, and simulation results show a negligible difference in the performance
of these two algorithms.
\end{remrk}

\subsubsection{Algorithm B}

~%

\emph{Step 1:} Transmit $\tau$ (e.g., $\tau=10^{3}-10^{6}$) arbitrary
test RS codewords, indexed by time $t=1,2,\ldots,\tau$, over the
channel and compute a set of $\tau$ $m\times N$ matrices $\mathbf{\Pi}_{1}^{(t)}$
where $[\mathbf{\Pi}_{1}^{(t)}]_{j,i}=\pi_{i,\varphi_{i}^{(t)}(j)}^{(t)}$
is the probability of the $j$-th most likely symbol at position $i$
during time $t$. For each time~$t$, obtain the matrix $\mathbf{\Pi}_{2}^{(t)}$
from $\mathbf{\Pi}_{1}^{(t)}$ through a permutation $\sigma^{(t)}:\{1,2,\ldots,N\}\rightarrow\{1,2,\dots,N\}$
that sorts the probabilities $\pi_{i,\varphi_{i}^{(t)}(1)}^{(t)}$
in increasing order to indicate the reliability order of codeword
positions. Take the entry-wise average of all $\tau$ matrices $\mathbf{\Pi}_{2}^{(t)}$
to get an average matrix $\mathbf{\bar{\Pi}}$.%
\footnote{In fact, one need not store separately each $\Pi_{2}^{(t)}$ matrix.
The average $\bar{\Pi}$ can be computed on the fly.%
} The matrix $\mathbf{\bar{\Pi}}$ serves as $\mathbf{\Pi}$ in Algorithm
A and from this, determine the probability matrix $\mathbf{P}$ where
$p_{i,j}=\Pr(X_{i}=j)$ for $i=1,2,\ldots,N$ and $j\in\mathcal{X}.$

\emph{Step 2a:} (RD approach) Compute the RD function of a source
sequence (error pattern) with probability of source letters derived
from \textbf{$\mathbf{P}$} and the chosen distortion measure. Given
a design rate $R$, determine the test-channel input-probability
distribution matrix $\mathbf{Q}$ where $q_{i,k}=\Pr(\hat{X}_{i}=k)$
for $i=1,2,\ldots,N$ and $k\in\hat{\mathcal{X}}.$

\emph{Step 2b:} (RDE approach) Given $D$ (in most cases $D=N-K+1)$
and the design rate $R$, compute the RDE function of a source
sequence (error pattern) with probability of source letters derived
from $\mathbf{P}$ and the chosen distortion measure. Also determine
the optimal test-channel input-probability distribution matrix $\mathbf{Q}$
where $q_{i,k}=\Pr(\hat{X}_{i}=k)$ for $i=1,2,\ldots,N$ and $k\in\hat{\mathcal{X}}.$

\emph{Step 3: }Based on the actual received signal sequence, compute
$\pi_{i,\varphi_{i}(1)}$ and determine the permutation~$\sigma$~that
gives the reliability order of codeword positions by sorting $\pi_{i,\varphi_{i}(1)}$
in increasing order.

\emph{Step 4:} Randomly generate a set of $2^{R}$ erasure patterns
using the test-channel input-probability distribution matrix $\mathbf{Q}$
and permute the indices of each erasure pattern by the permutation
$\sigma^{-1}.$

\emph{Step 5:} Run multiple attempts of the corresponding decoding
scheme (e.g., errors-and-erasures decoding) using the set of erasure
patterns in Step 4 to produce a list of candidate codewords.

\emph{Step 6:} Use the ML rule to pick the best codeword on the list.

\section{Computing The RD and RDE Functions\label{sec:Computing-RD}}

In this section, we will discuss some numerical methods to compute
the RD and RDE functions and the corresponding test-channel input-probability
distribution matrix $\mathbf{Q}$, whose entries are $q_{i,k}=\Pr(\hat{X}_{i}=k)$
for $i=1,2,\ldots,N$ and $k\in\hat{\mathcal{X}}$. These numerical
methods allow us to efficiently compute the RD and RDE functions discussed
in the previous section for arbitrary discrete distortion measures.
For some simple distortion measures, closed-form solutions are given
in Section \ref{sec:Closed-Form-Analysis-of}.

\subsection{Computing the RD function\label{sub:Computing-the-R-D}}

For an arbitrary discrete distortion measure, it can be difficult
to compute the RD function analytically. Fortunately, for a single
source $X$, the Blahut algorithm (see details in \cite{Blahut-it72})
gives an alternating minimization technique that efficiently computes
the RD function which is given by%
\footnote{All logarithms in this paper are taken to base 2.%
}\[
R(D)=\min_{\mathbf{w}\in\mathcal{W}_{D}}\sum_{j}\sum_{k}p_{j}w_{k|j}\log\frac{w_{k|j}}{\sum_{j'}p_{j'}w_{k|j'}}\]
where $p_{j}\triangleq\Pr(X=j)$, $q_{k}\triangleq\Pr(\hat{X}=k)$,
$w_{k|j}\triangleq\Pr(\hat{X}=k|X=j)$, and%
\footnote{$\delta(j,k)$ is sometimes written as $\delta_{jk}$ for convenience.%
}

\[
\mathcal{W}_{D}=\left\{ \mathbf{w}\bigg|\begin{array}{c}
w_{k|j}\geq0,\sum_{k}w_{k|j}=1\\
\sum_{j}\sum_{k}p_{j}w_{k|j}\delta_{jk}\leq D\end{array}\right\} .\]
More precisely, given the Lagrange multiplier $t\leq0$
that represents the slope of the RD curve at a specific point
(see \cite[Thm 2.5.1]{Berger-1971}) and an arbitrary all-positive
initial test-channel input-probability distribution vector $\underbar{q}^{(0)}$,
the Blahut algorithm shows us how to compute the rate-distortion pair
$(R_{t},D_{t})$. 

However, it is not straightforward to apply the Blahut algorithm to
compute the RD for a discrete source sequence $x^{N}$ (an error pattern
in our context) of $N$ independent but not necessarily identical
(i.n.d.) source components $x_{i}$. In order to do that, we consider
the group of source letters $(j_{1},j_{2},\ldots,j_{N})$ where $j_{i}\in\mathcal{X}$
as a super-source letter $\mathcal{J}\in\mathcal{X}^{N}$, the group
of reproduction letters $(k_{1},k_{2},\ldots,k_{N})$ where $k_{i}\in\mathcal{\hat{X}}$
as a super-reproduction letter $\mathcal{K}\in\hat{\mathcal{X}}^{N}$,
and the source sequence $x^{N}$ as a single source. For each super-source
letter $\mathcal{J}$, $p_{\mathcal{J}}=\Pr(X^{N}=\mathcal{J})=\prod_{i=1}^{N}\Pr(X_{i}=j_{i})=\prod_{i=1}^{N}p_{j_{i}}$
follows from the independence of source components.%
\footnote{In this paper, the notations $p_{j_{i}}$ and $p_{i,j}$ are interchangeable.
The notations $q_{k_{i}}$ and $q_{i,k}$ are also interchangeable.}

While we could apply the Blahut algorithm to this source directly,
the complexity is a problem because the alphabet sizes for $\mathcal{J}$
and $\mathcal{K}$ become the super-alphabet sizes $|\chi|^{N}$ and
$|\hat{\chi}|^{N}$ respectively.  Instead, we avoid this computational
challenge by choosing the initial test-channel input-probability distribution
so that it can be factored into a product of $N$ initial test-channel
input-probability components, i.e., $q_{\mathcal{K}}^{(0)}=\prod_{i=1}q_{k_{i}}^{(0)}$.
One can verify that this factorization rule still applies after every
step $\tau$ of the iterative process, i.e., $q_{\mathcal{K}}^{(\tau)}=\prod_{i=1}q_{k_{i}}^{(\tau)}$.
Therefore, the convergence of the Blahut algorithm \cite{Csiszar-it74}
implies that the optimal distribution is a product distribution, i.e.,
$q_{\mathcal{K}}^{\star}=\prod_{i=1}q_{k_{i}}^{\star}$.

One can also finds that, for each parameter $t$, one only needs to compute the
rate-distortion pair for each source component $x_{i}$ separately
and sum them together. This is captured into the following algorithm.

\begin{alg}\label{thm:(Factored-Blahut)}(Factored Blahut algorithm
for RD function) Consider a discrete source sequence $x^{N}$ of $N$
i.n.d. source components $x_{i}$'s with probability $p_{j_{i}}\triangleq\Pr(X_{i}=j_{i})$.
Given a parameter $t\leq0$, the rate and the distortion for this
source sequence under a specified distortion measure are given by\begin{equation}
R_{t}=\sum_{i=1}^{N}R_{i,t}\,\,\text{and}\,\, D_{t}=\sum_{i=1}^{N}D_{i,t}\label{eq:RDit}\end{equation}
where the components $R_{i,t}$ and $D_{i,t}$ are computed by the
Blahut algorithm with the Lagrange multiplier $t$. This rate-distortion
pair can be achieved by the corresponding test-channel input-probability
distribution $q_{\mathcal{K}}\triangleq\Pr(\hat{X}^{N}=\mathcal{K})=\prod_{i=1}^{n}q_{k_{i}}$
where the component probability distribution $q_{k_{i}}\triangleq\Pr(\hat{X}_{i}=k_{i})$.\end{alg}
\begin{remrk}
Equation (\ref{eq:RDit}) can also be derived from \cite[Corollary 2.8.3]{Berger-1971}
in a way that does not use the convergence property of the Blahut
algorithm.
\end{remrk}

\subsection{Computing the RDE function\label{sub:Computing-the-RDE}}

The original RDE function $F(R,D)$, defined in \cite[Sec. VI]{Blahut-it74}
for a single source $X$, is given by\begin{equation}
F(R,D)=\max_{\mathbf{w}}\min_{\mathbf{\tilde{p}}\in\mathcal{P}_{R,D}}\sum_{j}\tilde{p}_{j}\log\frac{\tilde{p}_{j}}{p_{j}}\label{eq:RDEmaxmin}\end{equation}
where $p_{j}=\Pr(X=j)$, $q_{k}=\Pr(\hat{X}=k)$, $w_{k|j}=\Pr(\hat{X}=k|X=j)$,
and

\begin{equation}
\mathcal{P}_{R,D}=\left\{ \mathbf{\tilde{p}}\bigg|\begin{array}{c}
\sum_{j}\sum_{k}\tilde{p}_{j}w_{k|j}\log\frac{w_{k|j}}{\sum_{j'}\tilde{p}_{j'}w_{k|j'}}\geq R\\
\sum_{j}\sum_{k}\tilde{p}_{j}w_{k|j}\delta_{jk}\geq D\end{array}\right\} .\label{eq:RDEmaxminPRD}\end{equation}

For a single source $X$, given two parameters $s\geq0$ and $t\leq0$
which are the Lagrange multipliers introduced in the optimization
problem (see \cite[p. 415]{Blahut-it74}), the Arimoto algorithm given
in \cite[Sec. V]{Arimoto-it76} can be used to compute the exponent,
rate, and distortion numerically.

In the context we consider, the source (error pattern) $x^{N}$ comprises
i.n.d. source components $x_{i}$'s. We follow the same method as
in the RD function case, i.e., by choosing the initial distribution
still arbitrarily but following a factorization rule $q_{\mathcal{K}}^{(0)}=\prod_{i=1}^{N}q_{k_{i}}^{(0)}$,
and this gives the following algorithm. 

\begin{alg}\label{pro:FactoredRDE}(Factored Arimoto algorithm for
RDE function) Consider a discrete source $x^{N}$ of i.n.d. source
components $x_{i}$'s with probability $p_{j_{i}}\triangleq\Pr(X_{i}=j_{i})$.
Given Lagrange multipliers $s\geq0$ and $t\leq0$, the exponent,
rate and distortion under a specified distortion measure are given
by\[ \left.F\right|_{s,t}=\sum_{i=1}^{N}\left.F_i\right|_{s,t},\,\, \left.R\right|_{s,t}=\sum_{i=1}^{N}\left.R_i\right|_{s,t},\,\, \left.D\right|_{s,t}=\sum_{i=1}^{N}\left.D_i\right|_{s,t}\]where
the components $\left.F_i\right|_{s,t},\left.R_i\right|_{s,t},\left.D_i\right|_{s,t}$
are computed parametrically by the Arimoto algorithm.\end{alg}
\begin{remrk}
Though it is standard practice to compute error-exponents using the
implicit form given above, this approach may provide points that,
while achievable, are strictly below the true RDE curve. The problem
is that the true RDE curve may have a slope discontinuity that forces
the implicit representation to have extra points. An example of this
behavior for the channel coding error exponent is given by Gallager
\cite[p. 147]{Gallager-1968}. For the i.n.d. source considered above,
a cautious person could solve the problem as described and then check
that the component RDE functions are differentiable at the optimum
point. In this work, we largely neglect this subtlety.
\end{remrk}

\subsection{Complexity of computing RD/RDE functions}

\subsubsection{Complexity of computing RD function.}

For each parameter $t<0,$ if we directly apply of the original Blahut
algorithm to compute the $(R_{t},D_{t})$ pair, the complexity is
$O(\mbox{\ensuremath{\tau}}_{\max}|\mathcal{X}|^{N}|\hat{\mathcal{X}}|^{N})$
where $\tau_{\max}$ is the number of iterations in the Blahut algorithm.
However, using the factored Blahut algorithm (Algorithm \ref{thm:(Factored-Blahut)})
greatly reduces this complexity to $O(\tau_{\max}|\mathcal{X}||\hat{\mathcal{X}}|N)$.
In Section~\ref{sec:Proposed-Algorithm}, one of the proposed algorithms
needs to compute the RD function for a design rate $R$. To do this,
we apply the bisection method on $t$ to find the correct $t$ that
corresponds to the chosen rate $R$. 
\begin{itemize}
\item \emph{Step~0}: Set $t_{\min}<0$ (e.g., $t_{\min}=-10$)
\item \emph{Step~1}: If $R_{t_{\min}}>R,$ go to Step 3. Else go to Step 2.
\item \emph{Step~2}: If $R_{t_{\min}}=R$ then stop. Else if $R_{t_{\min}}<R$,
set $t_{\min}\leftarrow2t_{\min}$ and go to Step 1.
\item \emph{Step~3}: Find $t$ using the bisection method to get the correct rate
$R$ within $\epsilon_{R}$.
\end{itemize}
The overall complexity of computing the RD function for a design
rate $R$ is \[O\left(\tau_{\max}\log_{2}\left(\frac{-t_{\min}}{\epsilon_{R}}\right)|\mathcal{X}||\hat{\mathcal{X}}|N\right).\] 

Now, we consider the dependence of $\tau_{\max}$ on $\epsilon_{R}$.
It follows from \cite{Csiszar-it74} that the error due to early termination
of the Blahut algorithm is $O\left(\frac{1}{\tau_{\max}}\right)$.
This implies that choosing $\tau_{\max}=O\left(\frac{1}{\epsilon_{R}}\right)$
is sufficient. However, recent work has shown that a slight modification
of the Blahut algorithm can drastically increase the convergence rate
\cite{Matz-itw04}. For this reason, we leave the number of iterations
as the separate constant $\tau_{\max}$ and do not consider its relationship
to the error tolerance.

\subsubsection{Complexity of computing RDE function.}

Similarly, for each pair of parameters $t<0$ and $s\geq0$, the complexity
if we directly apply of the original Arimoto algorithm to compute
the $(R|_{s,t},D|_{s,t})$ pair is $O(\tau_{\max}|\mathcal{X}|^{N}|\hat{\mathcal{X}}|^{N})$
where $\tau_{\max}$ is the number of iterations. Instead, if the
factored Arimoto algorithm (Algorithm \ref{pro:FactoredRDE}) is employed,
this complexity can also be reduced to $O(\tau_{\max}|\mathcal{X}||\hat{\mathcal{X}}|N)$.
In one of our proposed general algorithms in Section \ref{sec:Proposed-Algorithm},
we need to compute the RDE function for a pre-determined $(R,D)$
pair. We use a nested bisection technique to find the Lagrange multipliers
$s,t$ that give the correct $R$ and $D$.
\begin{itemize}
\item \emph{Step~0}: Set $t_{\min}<0$ and $s_{\max}>0$ $($e.g., $t_{\min}=-10$
and $s_{\max}=2$)
\item \emph{Step~1}: If $R|_{s_{\max},t_{\min}}\leq R$, set $t_{\min}\leftarrow2t_{\min}$
and repeat Step 1. Else go to Step 2.
\item \emph{Step~2}: Find $t$ using the bisection method to obtain $R|_{s_{\max},t}=R$
within $\epsilon_{R}$. If $D|_{s_{\max},t}>D,$ go to Step 3. If
$D|_{s_{\max},t}=D$ then stop. Else if $D|_{s_{\max},t}<D$, set
$s_{\max}\leftarrow2s_{\max}$ and go to Step 1.
\item \emph{Step~3}: Find $s$ using the bisection method to get the correct distortion
$D$ within $\epsilon_{D}$ while with each $s$ doing the following
steps

\begin{itemize}
\item \emph{Step~3a}: If $R|_{s,t_{\min}}>R$, go to Step 3c.
\item \emph{Step~3b}: If $R|_{s,t_{\min}}=R$, then stop. Else if $R|_{s,t_{\min}}<R$,
set $t_{\min}\leftarrow2t_{\min}$ and go to Step 1. 
\item \emph{Step~3c}: Find $t$ using the bisection method to get the correct $R$
within $\epsilon_{R}$.
\end{itemize}
\end{itemize}
The overall complexity of computing the RD function for a design
rate $R$ is therefore \[
O\left(\tau_{\max}\log_{2}\left(\frac{-t_{\min}}{\epsilon_{R}}\right)\log_{2}\left(\frac{s_{\max}}{\epsilon_{D}}\right)|\mathcal{X}||\hat{\mathcal{X}}|N\right).\]

\section{Multiple Algebraic Soft-Decision (ASD) Decoding \label{sec:Multiple ASD}}

In this section, we analyze and design a distortion measure to convert
the condition for successful ASD decoding to a suitable form so that
we can apply the general multiple-decoding algorithm to ASD decoding.

First, let us give a brief review on ASD decoding of RS codes. Let
$\{\beta_{1},\beta_{2},\ldots,\beta_{N}\}$ be a set of $N$ distinct
elements in $\mathbb{F}_{m}.$ From each message polynomial $f(X)=f_{0}+f_{1}X+\ldots+f_{K-1}X^{K-1}$
whose coefficients are in $\mathbb{F}_{m}$, we can obtain a codeword
$c=(c_{1},c_{2},\ldots,c_{N})$ by evaluating the message polynomial
at $\{\beta_{i}\}_{i=1}^{N}$, i.e., $c_{i}=f(\beta_{i})$ for $i=1,2,\ldots,N$.
Given a received vector $\mathbf{r}=(r_{1},r_{2},\ldots,r_{N})$,
we can compute the \emph{a posteriori} probability (APP) matrix $\mathbf{\Pi}$
as follows:\[
[\mathbf{\Pi}]_{j,i}=\pi_{i,j}=\Pr(c_{i}=\alpha_{j}|r_{i})\,\,\,\mbox{for}\,\,1\leq i\leq N,1\leq j\leq m.\]
The ASD decoding as in \cite{Koetter-it03} has the following main
steps.
\begin{enumerate}
\item \emph{Multiplicity Assignment}: Use a particular multiplicity assignment
scheme (MAS) to derive an $m\times N$ multiplicity matrix, denoted
as $\mathbf{M}$, of non-negative integer entries $\{M_{i,j}\}$ from
the APP matrix $\mathbf{\Pi}$.
\item \emph{Interpolation}: Construct a bivariate polynomial $Q(X,Y)$ of
minimum $(1,K-1)$ weighted degree that passes through each of the
point $(\beta_{j},\alpha_{i})$ with multiplicity $M_{i,j}$ for $i=1,2,\ldots,m$
and $j=1,2,\ldots,N$.
\item \emph{Factorization}: Find all polynomials $f(X)$ of degree less
than $K$ such that $Y-f(X)$ is a factor of $Q(X,Y)$ and re-evaluate
these polynomials to form a list of candidate codewords.
\end{enumerate}
In this paper, we denote $\mu=\max_{i,j}M_{i,j}$ as the maximum multiplicity.
Intuitively, higher multiplicity should be put on more likely symbols.
A higher $\mu$ generally allows ASD decoding to achieve a better
performance. However, one of the drawbacks of ASD decoding is that
its decoding complexity is roughly $O(N^{2}\mu^{4})$ \cite{McEliece-IPN03}.
Even though there have been several reduced complexity variations
and fast architectures as discussed in \cite{Gross-com06,Zhang-vlsi06,Ma-vlsi07},
the decoding complexity still increases rapidly with $\mu$. Thus,
in this section we will mainly work with small $\mu$ to keep the
complexity affordable. 

One of the main contributions of \cite{Koetter-it03} is to offer
a condition for successful ASD decoding represented in terms of two
quantities specified as the score and the cost as follows.
\begin{definitn}
The score $S_{\mathbf{M}}(\mathbf{c})$ with respect to a codeword
$\mathbf{c}$ and a multiplicity matrix $\mathbf{M}$ is defined as\[
S_{\mathbf{M}}(\mathbf{c})=\sum_{j=1}^{N}M_{[c_{j}],j}\]
where $[c_{j}]=i$ such that $\alpha_{i}=c_{j}$. The cost $C_{\mathfrak{\mathbf{M}}}$
of a multiplicity matrix $\mathbf{M}$ is defined as \[C_{\mathbf{M}}=\frac{1}{2}\sum_{i=1}^{m}\sum_{j=1}^{N}M_{i,j}(M_{i,j}+1).\]\end{definitn}
\begin{condition}
\label{con:ASDcond}(ASD decoding threshold, see \cite{Koetter-it03,Jiang-it08,McEliece-IPN03}).
The transmitted codeword will be on the list if\begin{equation}
(a+1)\left[S_{\mathbf{M}}-\frac{a}{2}(K-1)\right]>C_{\mathbf{M}}\label{eq:maincond}\end{equation}
for some $a\in\mathbb{N}$ such that
\begin{equation}
a(K-1)<S_{\mathbf{M}}\leq(a+1)(K-1).\label{eq:piecewise}\end{equation}

To match the general framework, the ASD decoding threshold (or condition
for successful ASD decoding) should be converted to the form where
the distortion is smaller than a fixed threshold. 
\end{condition}

\subsection{Bit-level ASD case}
In this subsection, we consider multiple trials of ASD decoding using
bit-level erasure patterns. A bit-level error pattern $b^{n}\in\mathbb{Z}_{2}^{n}$
and a bit-level erasure pattern $\hat{b}^{n}\in\mathbb{Z}_{2}^{n}$
have length $n=N\times\eta$ since each symbol has $\eta$ bits. Similar
to Definition \ref{def:(Conv. patterns)} of a conventional error
pattern and a conventional erasure pattern, $b_{i}=0$ in a bit-level
error pattern implies a bit-level error occurs and $\hat{b}_{i}$
in a bit-level erasure pattern implies that a bit-level erasure is
applied. We also use $B^{N}$ and $\hat{B}^{N}$ to denote the random
vectors which generate the realizations $b^{N}$ and $\hat{b}^{N}$,
respectively.

From each bit-level erasure pattern, we can specify entries of the
multiplicity matrix $\mathbf{M}$ using the bit-level MAS proposed
in \cite{Jiang-it08} as follows: for each codeword position, assign
multiplicity 2 to the symbol with no bit erased, assign multiplicity
1 to each of the two candidate symbols if there is 1 bit erased, and
assign multiplicity zero to all the symbols if there are $\geq2$
bits erased. All the other entries are zeros by default. This MAS
has a larger decoding region compared to the conventional errors-and-erasures
decoding scheme.
\begin{condition}
(Bit-level ASD decoding threshold, see \cite{Jiang-it08}) For RS
codes of rate $\frac{K}{N}\geq\frac{2}{3}+\frac{1}{N}$, ASD decoding
using the bit-level MAS will succeed (i.e., the transmitted codeword
is on the list) if\begin{equation}
3\nu_{b}+e_{b}<\frac{3}{2}(N-K+1)\label{eq:bgmdsc}\end{equation}
where $e_{b}$ is the number of bit-level erasures and $\nu_{b}$
is the number of bit-level errors in unerased locations.
\end{condition}
We can choose an appropriate distortion measure according to the following
proposition which is a natural extension of Proposition \ref{prop:bma1}
in the symbol level.
\begin{prop}
\label{prop:BitASD}If we choose the bit-level \emph{letter-by-letter}
distortion measure~$\delta:\mathbb{Z}_{2}\times\mathbb{Z}_{2}\rightarrow\mathbb{R}_{\geq0}$~as
follows\begin{equation*}
\begin{array}{cc}
\delta(0,0)=1, & \delta(0,1)=3,\\
\delta(1,0)=1, & \delta(1,1)=0,\end{array}\label{eq:dstfnBGMD}\end{equation*}
then the condition (\ref{eq:bgmdsc}) becomes \begin{equation}
d(b^{n},\hat{b}^{n})<\frac{3}{2}\left(N-K+1\right).\end{equation}
\end{prop}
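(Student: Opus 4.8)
The plan is to mirror the proof of Proposition \ref{prop:bma1} almost verbatim, since the only differences are that we now work at the bit level (patterns of length $n = N\eta$) and that the distortion entry for a bit-error in an unerased location is $3$ rather than $2$, reflecting the coefficient of $\nu_b$ in the bit-level threshold (\ref{eq:bgmdsc}). First I would introduce the bit-level counting quantities
\[
\chi_{j,k} \triangleq \left|\{i \in \{1,2,\ldots,n\} : b_i = j,\ \hat{b}_i = k\}\right|
\]
for $j,k \in \mathbb{Z}_{2}$, exactly as in the symbol-level argument, so that every bit position is accounted for by precisely one of the four counts.

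Next I would expand the total distortion using the chosen letter-by-letter measure. Since $\delta(0,0)=1$, $\delta(0,1)=3$, $\delta(1,0)=1$, and $\delta(1,1)=0$, summing over all $n$ bit positions gives
\[
d(b^n,\hat{b}^n) = \sum_{i=1}^{n}\delta(b_i,\hat{b}_i) = \chi_{0,0} + 3\chi_{0,1} + \chi_{1,0}.
\]

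The key step is then to identify the combinatorial quantities $e_b$ and $\nu_b$ with these counts, following the conventions of Definition \ref{def:(Conv. patterns)} lifted to the bit level. A bit is erased precisely when $\hat{b}_i = 0$, so the number of bit-level erasures is $e_b = \chi_{0,0} + \chi_{1,0}$, regardless of whether the erased bit was actually in error; a bit-level error in an unerased location occurs precisely when $b_i = 0$ (error) and $\hat{b}_i = 1$ (not erased), so $\nu_b = \chi_{0,1}$. Substituting these into the distortion expression yields $d(b^n,\hat{b}^n) = 3\chi_{0,1} + (\chi_{0,0}+\chi_{1,0}) = 3\nu_b + e_b$, whence the threshold (\ref{eq:bgmdsc}) is seen to be exactly $d(b^n,\hat{b}^n) < \tfrac{3}{2}(N-K+1)$.

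I do not anticipate a genuine obstacle here, as each arithmetic step is routine; the only care required is bookkeeping — ensuring that the distortion entries reproduce the coefficients $3$ (on $\nu_b$) and $1$ (on $e_b$) correctly, and that ``erasure'' is tied to $\hat{b}_i=0$ while ``error'' is tied to $b_i=0$. The one structural insight worth flagging is that the factor $\tfrac{3}{2}$ on the right-hand side of (\ref{eq:bgmdsc}) appears only as the fixed threshold, so the weighting of errors versus erasures is absorbed entirely into the single entry $\delta(0,1)=3$; were the threshold's error coefficient to change, only that one entry of the distortion matrix would need to be adjusted.
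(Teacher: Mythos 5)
Your proof is correct and follows essentially the same route as the paper: the paper's own argument simply invokes the counting reasoning of Proposition~\ref{prop:bma1} (the $\chi_{j,k}$ bookkeeping yielding $d(b^n,\hat{b}^n)=3\nu_b+e_b$) and rescales by $\tfrac{2}{3}$, which is exactly the computation you spell out explicitly. No gaps.
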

\begin{proof}
The condition (\ref{eq:bgmdsc}) can be seen to be equivalent to \[
\frac{2}{3}d(b^{n},\hat{b}^{n})<N-K+1\]
using the same reasoning as in\emph{ }Proposition\emph{ }\ref{prop:bma1}.
The results then follows right away.\end{proof}
\begin{remrk}
We refer the multiple-decoding of bit-level ASD as m-bASD.
\end{remrk}

\subsection{Symbol-level ASD case}

In this subsection, we try to convert the condition for successful
ASD decoding in general to the form that suits our goal. We will also
determine which multiplicity assignment schemes allow us to do so. 
\begin{definitn}
(Multiplicity type) Consider a positive integer $\ell\leq m$ where
$m$ is the number of elements in $\mathbb{F}^{m}$. For some codeword
position, let us assign multiplicity $m_{j}$ to the $j$-th most
likely symbol for $j=1,2,\ldots,\ell$. The remaining entries in the
column are zeros by default. We call the sequence, $(m_{1},m_{2},\ldots,m_{\ell})$,
the column \emph{multiplicity type} for {}``top-$\ell$'' decoding.
\end{definitn}
First, we notice that a choice of multiplicity types in ASD decoding
at each codeword position has the similar meaning to a choice of erasure
decisions in the conventional errors-and-erasures decoding. However,
in ASD decoding we are more flexible and may have more types of erasures.
For example, assigning multiplicity zero to all the symbols (all-zero
multiplicity type) at codeword position $i$ is similar to erasing
that position. Assigning the maximum multiplicity $\mu$ to one symbol
corresponds to the case when we choose that symbol as the hard-decision
one. Hence, with some abuse of terminology, we also use the term (generalized)
erasure pattern $\hat{x}^{N}$ for the multiplicity assignment scheme
in the ASD context. Each erasure-letter $x_{i}$ gives the multiplicity
type for the corresponding column of the multiplicity matrix $\mathbf{M}$. 
\begin{definitn}
(Error patterns and erasure patterns for ASD decoding) Consider a
MAS with $T$ multiplicity types. Let $\hat{x}^{N}\in\{1,2\ldots,T\}^{N}$
be an erasure pattern where, at index $i$, $x_{i}=j$ implies that
multiplicity type $j$ is used at column $i$ of the multiplicity
matrix $\mathbf{M}$. Notice that the definition of an error pattern
$x^{N}\in\mathbb{Z}_{\ell+1}^{N}$ in Definition \ref{def:(PatternsBMA)}
applies unchanged here. \end{definitn}
\begin{remrk}
In our method, we generally choose an appropriate integer $a$ in
Condition \ref{con:ASDcond} and design a distortion measure corresponding
to the chosen $a$ so that the condition for successful ASD decoding
can be converted to the form where distortion is less than a fixed
threshold. The following definition of allowable multiplicity types
will lead us to the result of Lemma \ref{lem:GenASD} and consequently,
$a\geq\mu$, as stated in Corollary \ref{cor:ageqmu}. Also, we want
to find as many as possible multiplicity types since rate-distortion
theory gives us the intuition that in general the more multiplicity
types (erasure choices) we have, the better performance of multiple
ASD decoding we achieve as $N$ becomes large.\end{remrk}

\begin{definitn}
\label{def:AllowableTypes}The set of allowable multiplicity types
for {}``top-$\ell$'' decoding with maximum multiplicity $\mu$
is defined to be\footnote{We use the convention that~$\min_{j:m_{j}\neq0}m_{j}=0$ if $\left\{ j:m_{j}\neq0\right\} =\emptyset$.%
}
\begin{equation}
\mathcal{A}(\mu,\ell)\triangleq\left\{ (m_{1},m_{2},\ldots,m_{\ell})\Bigg|\begin{array}{c}
\sum_{j=1}^{\ell}m_{j}\leq\mu,\\
\sum_{j=1}^{\ell}m_{j}(\mu-m_{j})\leq(\mu+1)\left(\left|\left\{ j:m_{j}\neq0\right\} \right|-1\right)\min_{j:m_{j}\neq0}m_{j}\end{array}\right\} .\label{eq:AllowableMAS}
\end{equation}
We take the elements of this set in an arbitrary order and label them
as $1,2,\ldots,|\mathcal{A}(\mu,\ell)|$ with the convention that
the multiplicity type 1 is always $(\mu,0,\ldots,0)$ which assigns
the whole multiplicity $\mu$ to the most likely symbol. The multiplicity
type $k$ is denoted as $(m_{1,k},m_{2,k},\ldots m_{\ell,k})$. \end{definitn}
\begin{remrk}
Multiplicity types $(0,0,\ldots,0),(1,1\ldots,1)$ as well as any
permutations of $(\mu,0,\ldots,0)$ and $(\lfloor\frac{\mu}{2}\rfloor,\lfloor\frac{\mu}{2}\rfloor,0,\ldots,0)$
are always in the allowable set $\mathcal{A}(\mu,\mu)$. We use mASD-$\mu$
to denote the proposed multiple ASD decoding using $\mathcal{A}(\mu,\mu)$.\end{remrk}


\begin{example}
\label{exa:mASD2a}Consider mASD-2 where $\mu=\ell=2$. We have $\mathcal{A}(2,2)\!=\!\{\!(2,0),(1,1),(0,2),(0,0)\}$ which comprises
four allowable multiplicity types for {}``top-2'' decoding as follows:
the first is $(2,0)$ where we assign multiplicity 2 to the most likely
symbol $y_{i,1}$, the second is $(1,1)$ where we assign equal multiplicity
1 to the first and second most likely symbols $y_{i,1}$ and $y_{i,2}$,
the third is $(0,2)$ where we assign multiplicity 2 to the second
most likely symbol $y_{i,2}$, and the fourth is $(0,0)$ where we
assign multiplicity zero to all the symbols at index $i$ (i.e., the
$i$-th column of $\mathbf{M}$ is an all-zero column). We also consider
a restricted set, called mASD-2a, that uses the set of multiplicity
types $\{(2,0),(1,1),(0,0)\}$.
\end{example}

\begin{example}
Consider mASD-3. In this case, the allowable set $\mathcal{A}(3,3)$
consists of all the permutations of $(3,0,0),(0,0,0),(1,1,0),(2,1,0),(1,1,1)$.
We can see that the set $\mathcal{A}(3,2)$ consists of all permutations
of $(3,0),(2,1),(1,1),(0,0)$ and $\left|\mathcal{A}(3,2)\right|<\left|\mathcal{A}(3,3)\right|$.
\end{example}
From now on, we assume that only allowable multiplicity types are
considered throughout most of the paper. With that setting in mind,
we can obtain the following lemmas and theorems.
\begin{lemma}
\label{lem:GenASD}Consider a MAS($\mu,\ell$) for {}``top-$\ell$''
ASD decoding with multiplicity matrix $\mathbf{M}$ that only uses multiplicity
types in the allowable set $\mathcal{A}(\mu,\ell)$. Then, the score and the cost satisfy the
following inequality: \[2C_{\mathbf{M}}\geq(\mu+1)S_{\mathbf{M}}.\]\end{lemma}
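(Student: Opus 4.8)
The plan is to reduce the claimed global inequality $2C_{\mathbf{M}}\ge(\mu+1)S_{\mathbf{M}}$ to a single-column statement and then verify that statement directly from the defining inequalities of the allowable set $\mathcal{A}(\mu,\ell)$.

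First I would note that both quantities split across the $N$ columns of $\mathbf{M}$. Writing $C_{\mathbf{M}}=\sum_{p=1}^{N}C_{p}$ with $2C_{p}=\sum_{i=1}^{m}M_{i,p}(M_{i,p}+1)$, and observing that the score picks up exactly one entry per column, $M_{[c_{p}],p}\le\max_{i}M_{i,p}$, it suffices to establish the per-column bound
\[
2C_{p}=\sum_{i=1}^{m}M_{i,p}(M_{i,p}+1)\ \ge\ (\mu+1)\max_{i}M_{i,p}.
\]
Summing over $p$ then gives $2C_{\mathbf{M}}\ge(\mu+1)\sum_{p}\max_{i}M_{i,p}\ge(\mu+1)S_{\mathbf{M}}$, so all the content lives in one column and this reduction is pure bookkeeping.

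Next I would fix a column whose multiplicity type is $(m_{1},\ldots,m_{\ell})\in\mathcal{A}(\mu,\ell)$, so that its nonzero entries are precisely the nonzero $m_{j}$'s. Let $n_{0}=|\{j:m_{j}\neq0\}|$, $m_{\min}=\min_{j:m_{j}\neq0}m_{j}$, and $m_{\max}=\max_{j}m_{j}$; the all-zero type makes both sides vanish, so I assume $n_{0}\ge1$. Expanding $\sum_{j}m_{j}(m_{j}+1)=\sum_{j}m_{j}^{2}+\sum_{j}m_{j}$ and rewriting the second defining inequality of $\mathcal{A}(\mu,\ell)$ as $\sum_{j}m_{j}^{2}\ge\mu\sum_{j}m_{j}-(\mu+1)(n_{0}-1)m_{\min}$ yields
\[
\sum_{j}m_{j}(m_{j}+1)\ \ge\ (\mu+1)\Big[\sum_{j}m_{j}-(n_{0}-1)m_{\min}\Big].
\]

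Finally I would close with the elementary ordering fact that the $n_{0}$ nonzero multiplicities sum to at least one copy of $m_{\max}$ plus $(n_{0}-1)$ copies of $m_{\min}$, i.e.\ $\sum_{j}m_{j}\ge m_{\max}+(n_{0}-1)m_{\min}$, so that $\sum_{j}m_{j}-(n_{0}-1)m_{\min}\ge m_{\max}$. Combined with the previous display this gives the per-column bound $\sum_{j}m_{j}(m_{j}+1)\ge(\mu+1)m_{\max}$, and summing over columns completes the argument. The main obstacle here is conceptual rather than computational: recognizing that the second condition defining $\mathcal{A}(\mu,\ell)$ is engineered precisely so that the lower bound on $\sum_{j}m_{j}^{2}$ collapses to $(\mu+1)m_{\max}$ after the ordering step. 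It is worth remarking that the first condition $\sum_{j}m_{j}\le\mu$ is never used in this lemma, and that the degenerate single-nonzero case is handled automatically, since there the second condition already forces $m_{\max}=\mu$ and the bound holds with equality.
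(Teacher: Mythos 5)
Your proof is correct, but it takes a genuinely different and in fact cleaner route than the paper. The paper never isolates a per-column statement: it expands $S_{\mathbf{M}}$ and $C_{\mathbf{M}}$ in terms of error-pattern statistics ($e_{k}$, $\chi_{j,k}$, $\nu_{j,k}$), reduces the claim to the per-type aggregate inequality $(\mu+1)\sum_{j}m_{j,k}\nu_{j,k}\geq e_{k}\sum_{j}m_{j,k}(\mu-m_{j,k})$, and proves that using the combinatorial fact that at most one of the top-$\ell$ symbols at any position can be correct, i.e.\ $\sum_{j:m_{j,k}\neq0}\chi_{j,k}\leq e_{k}$, followed by the second defining inequality of $\mathcal{A}(\mu,\ell)$. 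You instead bound the score entrywise by the column maximum, $M_{[c_{p}],p}\leq\max_{i}M_{i,p}$, which replaces all of the error-pattern bookkeeping with the deterministic per-column inequality $\sum_{j}m_{j}(m_{j}+1)\geq(\mu+1)\max_{j}m_{j}$; your derivation of that inequality from the second condition of $\mathcal{A}(\mu,\ell)$ plus the ordering fact $\sum_{j}m_{j}\geq m_{\max}+(n_{0}-1)m_{\min}$ is exactly the same algebra the paper performs in aggregate form, so the two proofs hinge on the same mechanism. What your version buys is a strictly stronger, codeword-independent statement ($2C_{\mathbf{M}}\geq(\mu+1)\sum_{p}\max_{i}M_{i,p}$ uniformly over all error patterns) with much less notation, and it makes transparent your correct observation that the constraint $\sum_{j}m_{j}\leq\mu$ plays no role here. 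What the paper's heavier route buys is reuse: the quantities $e_{k},\nu_{j,k},\chi_{j,k}$ and the expansions of score and cost set up in that proof are exactly what is needed again in Theorem~\ref{thm:GenASD-allrate} and Lemma~\ref{lem:Midhighrate}. One tiny inconsistency in your closing remark: within $\mathcal{A}(\mu,\ell)$ the single-nonzero-entry case forces $m_{\max}=\mu$ only because of the first condition $\sum_{j}m_{j}\leq\mu$; the second condition alone forces $m_{\max}\geq\mu$, which still yields your bound, so nothing breaks, but the remark as worded slightly contradicts your claim that the first condition is never used.
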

\begin{proof}
Let us denote $e_{k}=|\{i\in\{1,\ldots,N\}:\hat{x}_{i}=k\}|$ to count
the number of positions $i$ that use multiplicity type $k$ for $k=1,\ldots,T$
and notice that $\sum_{k=1}^{T}e_{k}=N$. We also use $\nu_{j,k}=|\{i\in\{1,\ldots,N\}:x_{i}\neq j,\hat{x}_{i}=k\}|$
to count the number of positions $i$ that use multiplicity type $k$
where the $j$-th most reliable symbol $y_{i,j}$ is incorrect for
$j=0,\ldots,\ell$ and $k=1,\ldots,T$. The notation $\chi_{j,k}=|\{i\in\{1,\ldots,N\}:x_{i}=j,\hat{x}_{i}=k\}|$
remains the same. Notice also that \begin{equation}e_{k}=\sum_{j=0}^{\ell}\chi_{j,k}\quad
\text{and}\quad\chi_{j,k}=e_{k}-\nu_{j,k}.\label{eqlm1_4}\end{equation}

The score and the cost can therefore be written as
\begin{align}
&S_{\mathbf{M}}(\mathbf{c}) = \sum_{j=1}^{N}M_{[c_{j}],j}\nonumber\\
& \phantom{S_{\mathbf{M}}(\mathbf{c})} =\sum_{k=1}^{T}\sum_{j=1}^{\ell}m_{j,k}\chi_{j,k}\label{eqscorechi}\\
& \phantom{S_{\mathbf{M}}(\mathbf{c})}=\mu\chi_{1,1}+\sum_{k=2}^{T}\sum_{j=1}^{\ell}m_{j,k}\chi_{j,k}\label{eqmult1} \\
& \phantom{S_{\mathbf{M}}(\mathbf{c})}= \mu\left(N-\sum_{k=2}^{T}e_{k}-\nu_{1,1}\right)+\sum_{k=2}^{T}\sum_{j=1}^{\ell}m_{j,k}(e_{k}-\nu_{j,k})
\label{eq:score} 
\end{align}
and
\begin{align}
&C_{\mathbf{M}}  =  \frac{1}{2}\sum_{i=1}^{m}\sum_{j=1}^{N}M_{i,j}(M_{i,j}+1)\nonumber\\
&\phantom{C_{\mathbf{M}}} = \frac{1}{2}\sum_{k=1}^{T}e_{k}\sum_{j=1}^{\ell}m_{j,k}(m_{j,k}+1)\nonumber \\
 & \phantom{C_{\mathbf{M}}}= \frac{1}{2}\mu(\mu+1)\left(N-\sum_{k=2}^{T}e_{k}\right)+\frac{1}{2}\sum_{k=2}^{T}e_{k}\sum_{j=1}^{\ell}m_{j,k}(m_{j,k}+1)\label{eq:cost}\end{align}
where (\ref{eqmult1}) and (\ref{eq:cost}) use the fact that the multiplicity type 1 is always
assumed to be $(\mu,0,\ldots,0)$.

Hence, we obtain
\begin{multline*}
2C_{\mathbf{M}} \! - \! (\mu \! + \! 1)S_{\mathbf{M}}=\mu(\mu+1)\nu_{1,1} + \sum_{k=2}^{T} \! (\mu \! + \! 1)\sum_{j=1}^{\ell}m_{j,k}\nu_{j,k}- \sum_{k=2}^{T} e_{k}\sum_{j=1}^{\ell}m_{j,k}(\mu-m_{j,k}), 
\end{multline*}
and therefore, since $\mu$ and $\nu_{1,1}$ are non-negative, Lemma \ref{lem:GenASD} holds if we can show
\begin{equation}
 (\mu+1)\sum_{j=1}^{\ell}m_{j,k}\nu_{j,k}\geq e_{k}\sum_{j=1}^{\ell}m_{j,k}(\mu-m_{j,k})\label{eqlm1_0}
\end{equation}
for every $k=2,\ldots T$.

Next, we observe that \begin{equation}
(\mu+1)\sum_{j=1}^{\ell}m_{j,k}\nu_{j,k}\geq(\mu+1)\left(\sum_{j:m_{j,k}\neq0}\nu_{j,k}\right)\min_{j:m_{j,k}\neq0}m_{j,k}\label{eqlm1_1}
\end{equation}
and \begin{align}
& \sum_{j:m_{j,k}\neq0}\nu_{j,k} = \sum_{j:m_{j,k}\neq0}(e_{k}-\chi_{j,k})\label{eqlm1_6}\\
& \phantom{...................} = e_{k}|\{j:m_{j,k}\neq0\}|-\sum_{j:m_{j,k}\neq0}\chi_{j,k}\nonumber\\
& \phantom{....................} {\geq} e_{k}(|\{j:m_{j,k}\neq0\}|-1)\label{eqlm1_2}
\end{align}
where (\ref{eqlm1_6}) follows from (\ref{eqlm1_4}) and (\ref{eqlm1_2}) follows from \[
\sum_{j:m_{j,k}\neq0}\chi_{j,k}\leq\sum_{j=0}^{\ell}\chi_{j,k}=e_{k}.\]

From (\ref{eqlm1_1}) and (\ref{eqlm1_2}), we have
\begin{align}
&(\mu+1)\sum_{j=1}^{\ell}m_{j,k}\nu_{j,k}\geq e_{k}(\mu+1)(|\{j:m_{j,k}\neq0\}|-1)\min_{j:m_{j,k}\neq0}m_{j,k}\label{eqlm1_3}\end{align}
and this motivates our definition of allowable multiplicity types.


Specifically, if we choose $\{m_{1,k},m_{2,k},\ldots,m_{\ell,k}\}$
in the allowable set $\mathcal{A}(\mu,\ell)$, defined in (\ref{eq:AllowableMAS}),
then by combining with (\ref{eqlm1_3}), we obtain (\ref{eqlm1_0}) and this completes the proof.\end{proof}
\begin{cor}
\label{cor:ageqmu}With the setting as in Lemma \ref{lem:GenASD},
the integer $a$ in Condition \ref{con:ASDcond} must satisfy $a\geq\mu$.\end{cor}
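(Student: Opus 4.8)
The plan is to treat $a \geq \mu$ as a \emph{necessary} consequence of simultaneously satisfying the success inequality~(\ref{eq:maincond}) and the range constraint~(\ref{eq:piecewise}) in Condition~\ref{con:ASDcond}, using the cost-versus-score bound of Lemma~\ref{lem:GenASD} as the bridge. First I would assume that decoding succeeds, so that $(a+1)\big[S_{\mathbf{M}}-\tfrac{a}{2}(K-1)\big] > C_{\mathbf{M}}$ holds for the (unique) integer $a$ determined by $a(K-1) < S_{\mathbf{M}} \leq (a+1)(K-1)$. Substituting the lower bound $C_{\mathbf{M}} \geq \tfrac{\mu+1}{2}S_{\mathbf{M}}$ from Lemma~\ref{lem:GenASD} into the right-hand side gives a single inequality relating only $S_{\mathbf{M}}$, $K-1$, $a$, and $\mu$.

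Next I would expand and collect the terms in $S_{\mathbf{M}}$. Moving the $\tfrac{\mu+1}{2}S_{\mathbf{M}}$ term to the left and clearing the factor $\tfrac12$ yields $(2a+1-\mu)S_{\mathbf{M}} > a(a+1)(K-1)$. The decisive step is then to invoke the \emph{upper} half of the range constraint~(\ref{eq:piecewise}), namely $S_{\mathbf{M}} \leq (a+1)(K-1)$, in the form $a(a+1)(K-1) \geq a\,S_{\mathbf{M}}$ (valid since $a \geq 0$). Chaining these two inequalities produces $(2a+1-\mu)S_{\mathbf{M}} > a\,S_{\mathbf{M}}$, and since $S_{\mathbf{M}} > a(K-1) \geq 0$ forces $S_{\mathbf{M}} > 0$, I may divide through to obtain $2a+1-\mu > a$, i.e. $a+1 > \mu$. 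Because $a$ and $\mu$ are integers, this is exactly $a \geq \mu$.

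I expect the main obstacle to be conceptual rather than computational: recognizing that the \emph{upper} bound $S_{\mathbf{M}} \leq (a+1)(K-1)$ from~(\ref{eq:piecewise})---not merely the lower bound $S_{\mathbf{M}} > a(K-1)$---is what sharpens the conclusion from the weak estimate $a \gtrsim \mu/2$ (obtainable from the sign of $2a+1-\mu$ alone) to the claimed $a \geq \mu$. The only routine care needed is handling the degenerate possibilities: if the coefficient $2a+1-\mu$ is nonpositive the chained inequality is immediately impossible, since its left side is then $\leq 0$ while its right side $a\,S_{\mathbf{M}} \geq 0$; and the case $a=0$ is absorbed cleanly because $a(a+1)(K-1) \geq a\,S_{\mathbf{M}}$ holds trivially there. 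No further appeal to the structure of the allowable set $\mathcal{A}(\mu,\ell)$ is required beyond the inequality already packaged in Lemma~\ref{lem:GenASD}.
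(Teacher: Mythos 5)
Your proposal is correct and takes essentially the same route as the paper's proof: both combine the success inequality~(\ref{eq:maincond}), the upper half $S_{\mathbf{M}}\leq(a+1)(K-1)$ of~(\ref{eq:piecewise}), and the bound $2C_{\mathbf{M}}\geq(\mu+1)S_{\mathbf{M}}$ from Lemma~\ref{lem:GenASD}, then divide by $S_{\mathbf{M}}>0$ and use integrality to get $a\geq\mu$. The only difference is cosmetic---you substitute the lemma's bound first and eliminate $K-1$ second, while the paper does the reverse (arriving at $2C_{\mathbf{M}}<(a+2)S_{\mathbf{M}}$ before invoking the lemma)---plus your slightly more careful handling of the degenerate cases.
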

\begin{proof}
From $(a+1)\left[S_{\mathbf{M}}-\frac{a}{2}(K-1)\right]>C_{\mathbf{M}}$
and $S_{\mathbf{M}}\leq(a+1)(K-1)$ in (\ref{eq:maincond}) and (\ref{eq:piecewise}),
we have  \begin{align*}&(a+1)S_{\mathbf{M}}-C_{\mathbf{M}}>\frac{1}{2}a(a+1)(K-1)\\
&\phantom{(a+1)S_{\mathbf{M}}-C_{\mathbf{M}}}\geq\frac{1}{2}aS_{\mathbf{M}}\end{align*}
and this implies that \begin{equation}2C_{\mathbf{M}}<(a+2)S_{\mathbf{M}}.\label{eqcol1}\end{equation}
But, Lemma \ref{lem:GenASD} states that $2C_{\mathbf{M}}\geq(\mu+1)S_{\mathbf{M}}$.
Combining this with (\ref{eqcol1}) gives a
contradiction unless $a>\mu-1$.
\end{proof}
In Condition \ref{con:ASDcond}, if we carefully design a distortion
measure then for every $a\geq\mu,$ the first constraint (\ref{eq:maincond})
can be equivalently converted to the form where distortion is smaller
than a fixed threshold. 
\begin{thm}
\label{thm:GenASD-allrate} Consider an $(N,K)$ RS code and a MAS($\mu,\ell$)
for {}``top-$\ell$'' decoding with multiplicity matrix $\mathbf{M}$
that only uses $T$ multiplicity types in the allowable set $\mathcal{A}(\mu,\ell)$.
Consider an arbitrary integer $a\geq\mu$. Let $\delta_{a}:\mathcal{X}\times\hat{\mathcal{X}}\rightarrow\mathbb{R}_{\geq0}$,
where in this case $\mathcal{X}=\mathbb{Z}_{\ell+1}$ and $\mathcal{\hat{X}}=\mathbb{Z}_{T+1}\setminus\{0\}$,
be a \emph{letter-by-letter} distortion measure defined by $\delta_{a}(x,\hat{x})=[\Delta_{a}]_{x,\hat{x}}$,
where $\Delta_{a}$ is the $(\ell+1)\times T$ matrix%
\footnote{The first column of $\Delta_{a}$ is $[\frac{2\mu}{a},0,\frac{2\mu}{a},\frac{2\mu}{a},\ldots,\frac{2\mu}{a}]^{T}$
since multiplicity type 1 is always chosen to be $(\mu,0,0,\ldots,0)$. %
}\begin{equation}
\Delta_{a}=\left(\begin{array}{cccc}
\rho_{1,a} & \rho_{2,a} & \ldots & \rho_{T,a}\\
\rho_{1,a}-\frac{2m_{1,1}}{a} & \rho_{2,a}-\frac{2m_{1,2}}{a} & \ldots & \rho_{T,a}-\frac{2m_{1,T}}{a}\\
\rho_{1,a}-\frac{2m_{2,1}}{a} & \rho_{2,a}-\frac{2m_{2,2}}{a} & \ldots & \rho_{T,a}-\frac{2m_{2,T}}{a}\\
\vdots & \vdots & \ddots & \vdots\\
\rho_{1,a}-\frac{2m_{\ell,1}}{a} & \rho_{2,a}-\frac{2m_{\ell,2}}{a} & \ldots & \rho_{T,a}-\frac{2m_{\ell,T}}{a}\end{array}\right)\label{eq:dstmASDmu_allrate}\end{equation}
with \[\rho_{k,a}=\frac{\mu(2a+1-\mu)}{a(a+1)}+\sum_{j=1}^{\ell}\frac{m_{j,k}(m_{j,k}+1)}{a(a+1)}\]
for $k=1,\ldots,T$. Then, the equation (\ref{eq:maincond}) in Condition
\ref{con:ASDcond} is equivalent to \begin{equation*}
d(x^{N},\hat{x}^{N})<\frac{\mu(2a+1-\mu)}{a(a+1)}N-K+1\triangleq D_{a},\label{eq:scasd1_allrate}\end{equation*}
\end{thm}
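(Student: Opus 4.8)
The plan is to show that, under the distortion measure $\delta_a$, the total distortion $d(x^N,\hat{x}^N)$ is an explicit affine function of the score $S_{\mathbf{M}}$ and cost $C_{\mathbf{M}}$, and that feeding this expression into the threshold inequality $d(x^N,\hat{x}^N)<D_a$ reproduces (\ref{eq:maincond}) after elementary manipulation. This is the same strategy used in Propositions \ref{prop:bma1} and \ref{thm:ExtBMA-1}, where the distortion matrix was engineered precisely so that the decoding condition collapses into a fixed-threshold statement; here the quantities $\rho_{k,a}$ have been reverse-engineered to make the score and cost fall out.

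First I would reuse the counting variables from the proof of Lemma \ref{lem:GenASD}, namely $\chi_{j,k}=|\{i:x_i=j,\hat{x}_i=k\}|$ and $e_k=\sum_{j=0}^{\ell}\chi_{j,k}$, and write the total distortion as the double sum $d(x^N,\hat{x}^N)=\sum_{j=0}^{\ell}\sum_{k=1}^{T}\chi_{j,k}[\Delta_a]_{j,k}$. Substituting the matrix entries of (\ref{eq:dstmASDmu_allrate})---namely $\rho_{k,a}$ in the row $x=0$ and $\rho_{k,a}-\frac{2m_{j,k}}{a}$ in the rows $x=j$ for $j\geq 1$---and collecting the coefficient of each $\rho_{k,a}$, the inner sum $\sum_{j=0}^{\ell}\chi_{j,k}$ collapses to $e_k$, while the correction terms assemble into $-\frac{2}{a}\sum_{k=1}^{T}\sum_{j=1}^{\ell}m_{j,k}\chi_{j,k}$.

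The heart of the calculation is recognizing the two structured sums that result. By (\ref{eqscorechi}) the correction sum is exactly $S_{\mathbf{M}}$ (note that the row $x=0$ carries no score contribution, since the score sums only over $j\geq 1$). Expanding $\rho_{k,a}$ and using $\sum_k e_k=N$ gives $\sum_{k=1}^{T}\rho_{k,a}e_k=\frac{\mu(2a+1-\mu)}{a(a+1)}N+\frac{1}{a(a+1)}\sum_{k=1}^{T}e_k\sum_{j=1}^{\ell}m_{j,k}(m_{j,k}+1)$, and the final double sum equals $2C_{\mathbf{M}}$ by the cost computation (\ref{eq:cost}) in the proof of Lemma \ref{lem:GenASD}. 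Combining these two identifications yields the clean closed form $d(x^N,\hat{x}^N)=\frac{\mu(2a+1-\mu)}{a(a+1)}N+\frac{2C_{\mathbf{M}}}{a(a+1)}-\frac{2}{a}S_{\mathbf{M}}$.

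Finally I would substitute this identity into $d(x^N,\hat{x}^N)<D_a$, cancel the common term $\frac{\mu(2a+1-\mu)}{a(a+1)}N$ on both sides, and multiply through by $\frac{a(a+1)}{2}>0$ (valid since $a\geq\mu\geq 1$, as type $1$ is $(\mu,0,\ldots,0)$) to obtain $(a+1)S_{\mathbf{M}}-C_{\mathbf{M}}>\frac{a(a+1)}{2}(K-1)$, which factors immediately into (\ref{eq:maincond}). Since every step is a cancellation or multiplication by a positive constant, the chain is reversible, giving the claimed equivalence rather than a one-way implication. I do not anticipate a genuine obstacle: the argument is pure bookkeeping, and the only care required is in tracking that the $x=0$ row contributes to the cost-bearing $\rho_{k,a}e_k$ sum but not to the score, and that the $\rho_{k,a}$ expansion produces exactly the $\sum_j m_{j,k}(m_{j,k}+1)$ grouping matching the cost formula. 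The real content of the theorem is simply the definition of $\rho_{k,a}$, which is chosen so that these two sums emerge in the right proportions.
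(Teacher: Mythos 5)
Your equivalence argument is correct, and it checks out line by line: the identity
\begin{equation*}
d(x^{N},\hat{x}^{N})=\frac{\mu(2a+1-\mu)}{a(a+1)}N+\frac{2C_{\mathbf{M}}}{a(a+1)}-\frac{2}{a}S_{\mathbf{M}}
\end{equation*}
follows exactly as you say from (\ref{eqscorechi}) and the intermediate form $C_{\mathbf{M}}=\frac{1}{2}\sum_{k=1}^{T}e_{k}\sum_{j=1}^{\ell}m_{j,k}(m_{j,k}+1)$ in (\ref{eq:cost}), and substituting it into $d<D_{a}$ recovers (\ref{eq:maincond}) reversibly. This is essentially the paper's strategy (the same counting variables $\chi_{j,k}$, $e_{k}$ and the same score/cost bookkeeping), but your organization is cleaner: the paper never isolates this closed-form identity; instead it keeps the $k=1$ column separate from $k\geq 2$ (writing $d$ via $\nu_{1,1}$ as in (\ref{eq:dxx})), rewrites (\ref{eq:maincond}) as $\frac{2S_{\mathbf{M}}}{a}-K+1>\frac{2C_{\mathbf{M}}}{a(a+1)}$, substitutes the split expressions (\ref{eq:score}) and (\ref{eq:cost}), and then matches the two sides term by term. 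Your version avoids that case-splitting entirely and makes the reversibility of every step transparent, which the paper leaves implicit.

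One piece of the paper's proof is missing from yours: the verification that $\Delta_{a}$ actually has non-negative entries, i.e.\ that $\rho_{k,a}\geq\frac{2m_{j,k}}{a}$ for all $j,k$. The theorem declares $\delta_{a}:\mathcal{X}\times\hat{\mathcal{X}}\rightarrow\mathbb{R}_{\geq0}$, and this is not automatic; it is where the hypotheses $a\geq\mu$ and the allowable-set constraint $\sum_{j}m_{j,k}\leq\mu$ do real work. The paper reduces it to
\begin{equation*}
2(a+1)(\mu-m_{j,k})+\sum_{j'=1}^{\ell}m_{j',k}(m_{j',k}+1)-\mu(\mu+1)\geq(\mu-m_{j,k})(\mu+1-m_{j,k})\geq0 .
\end{equation*}
Your algebraic equivalence holds regardless of the sign of the entries, so this omission does not invalidate the equivalence itself, but without it $\delta_{a}$ is not known to be a legitimate distortion measure, which is what the downstream rate-distortion machinery requires. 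Adding that one check would make your proof complete.
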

and it is easy to verify that $D_{\mu}=N-K+1$.
\begin{proof}
First, we show that $\Delta_{a}$ consists of non-zero entries. It
suffices to show that $\rho_{k,a}\geq\frac{2m_{j,k}}{a}$ for all $j=1,\ldots,\ell$ and $k=1,\ldots,T$, i.e.,
\[\mu(2a+1-\mu)+\sum_{j'=1}^{\ell}m_{j',k}(m_{j',k}+1)  \geq  2m_{j,k}(a+1)\]
which is equivalent to
\begin{equation}
2(a+1)(\mu-m_{j,k})+\sum_{j'=1}^{\ell}m_{j',k}(m_{j',k}+1)-\mu(\mu+1)\geq0.\label{eq:nonzeroentries}\end{equation}
This is true since the left hand side of (\ref{eq:nonzeroentries})
is at least 
\begin{align*}
&2(\mu+1)(\mu-m_{j,k})+m_{j,k}(m_{j,k}+1)-\mu(\mu+1)=(\mu-m_{j,k})(\mu+1-m_{j,k})\geq 0.\end{align*}

With the same $e_{k},\nu_{j,k},\chi_{j,k}$ as defined
in the proof of Lemma \ref{lem:GenASD} and the chosen distortion
matrix $\Delta_{a}$, we have
\begin{align*}
&d(x^{N},\hat{x}^{N}) =\sum_{k=1}^{T}\left(\sum_{j=1}^{\ell}\left(\rho_{k,a}-\frac{2m_{j,k}}{a}\right)\chi_{j,k}+\rho_{k,a}\chi_{0,k}\right)\\
& \phantom{d(x^{N},\hat{x}^{N})} =\sum_{k=1}^{T}\left(\rho_{k,a}\sum_{j=0}^{\ell}\chi_{j,k}-2\sum_{j=1}^{\ell}\frac{m_{j,k}}{a}\chi_{j,k}\right)\\
& \phantom{d(x^{N},\hat{x}^{N})} =\sum_{k=1}^{T}\left(\rho_{k,a}e_{k}-2\sum_{j=1}^{\ell}\frac{m_{j,k}}{a}\chi_{j,k}\right).\end{align*}
Noting that the first column of $\Delta_{a}$ is always $[\frac{2\mu}{a},0,\frac{2\mu}{a},\frac{2\mu}{a},\ldots,\frac{2\mu}{a}]^{T}$
and $\nu_{1,1}=e_{1}-\chi_{1,1}$, we obtain\begin{equation}
d(x^{N},\hat{x}^{N})=\frac{2\mu}{a}\nu_{1,1}+\sum_{k=2}^{T}\rho_{k,a}e_{k}-2\sum_{k=2}^{T}\sum_{j=1}^{\ell}\frac{m_{j,k}}{a}\chi_{j,k}.\label{eq:dxx}\end{equation}
Next, one can see that (\ref{eq:maincond}) can be rewritten as\[
\frac{2S_{\mathbf{M}}}{a}-K+1>\frac{2C_{\mathbf{M}}}{a(a+1)}\]
which, by substituting $S_{\mathbf{M}}$ and $C_{\mathbf{M}}$ in (\ref{eq:score})
and (\ref{eq:cost}), is equivalent to 
\begin{align*}
&\frac{2\mu}{a}\!\left(\!N\!-\!\sum_{k=2}^{T}e_{k}\!-\!\nu_{1,1}\!\right)\!\!+\!2\sum_{k=2}^{T}\sum_{j=1}^{\ell}\frac{m_{j,k}}{a}\chi_{j,k}\!-\!K\!+\!1\!>\!\frac{\mu(\mu+1)}{a(a+1)}\!\left(\!N\!-\!\sum_{k=2}^{T}\!e_{k}\!\right)\!\!+\!\!\sum_{k=2}^{T}\!e_{k}\!\sum_{j=1}^{\ell}\!\frac{m_{j,k}(m_{j,k}+1)}{a(a+1)}.
\end{align*}

Equivalently, this gives
\begin{align*}
&\left(\!\frac{2\mu}{a}\!-\!\frac{\mu(\mu+1)}{a(a+1)}\!\right)\!\!N\!-\!K\!+\!1\! >\! \frac{2\mu}{a}\nu_{1,1}\!-\!2\!\sum_{k=2}^{T}\!\sum_{j=1}^{\ell}\!\frac{m_{j,k}}{a}\!\chi_{j,k}\!+\!\!\sum_{k=2}^{T}\!e_{k}\!\!\left(\!\frac{2\mu}{a}\!-\!\frac{\mu(\mu+1)}{a(a+1)}\!+\!\sum_{j=1}^{\ell}\frac{m_{j,k}\!(m_{j,k}+1)}{\mu(\mu+1)}\!\right)
\end{align*}
which in turn is equivalent to 
\begin{align}
&\frac{\mu(2a+1-\mu)}{a(a+1)}N-K+1  > \frac{2\mu}{a}\nu_{1,1}+\sum_{k=2}^{T}e_{k}\rho_{k,a}-\frac{2}{a}\sum_{k=2}^{T}\sum_{j=1}^{\ell}m_{j,k}\chi_{j,k}.\label{eq:dxx2}\end{align}


Finally, combining (\ref{eq:dxx}) and (\ref{eq:dxx2}) gives the proof.\end{proof}
\begin{example}
Consider mASD-2 for $a=\mu=2$. In this case, the distortion matrix is\begin{equation}
\Delta=\left(\begin{array}{cccc}
2 & \nicefrac{5}{3} & 2 & 1\\
0 & \nicefrac{2}{3} & 2 & 1\\
2 & \nicefrac{2}{3} & 0 & 1\end{array}\right).\label{eq:dstmASD2}\end{equation}

\end{example}
However, Condition \ref{con:ASDcond} also requires the second constraint
(\ref{eq:piecewise}) to be satisfied. In addition, we need to choose
an integer $a\geq\mu$ in order to apply our proposed approach. Therefore,
we first consider the case of high-rate RS codes where if $a=\mu$
then the satisfaction of (\ref{eq:maincond}) also implies the satisfaction
of (\ref{eq:piecewise}). For the case of lower-rate RS codes, we
obtain a range of $a$ and also propose a heuristic method to choose
an appropriate $a$.

\subsubsection{High-rate Reed-Solomon codes}

In this subsection, we focus on high-rate RS codes which are usually
seen in many practical applications. The high-rate constraint allows
us to see that $a=\mu$ is essentially the correct choice.
\begin{lemma}
\label{lem:Midhighrate}Consider an $(N,K)$ RS code with rate \[\frac{K}{N}\geq\frac{1}{N}+\frac{\mu}{\mu+1}.\]
If equation (\ref{eq:maincond}) is satisfied for $a=\mu$, or equivalently,
\[d(x^{N},\hat{x}^{N})<N-K+1\] under the distortion measure $\Delta_{\mu}$
then whole Condition \ref{con:ASDcond} is satisfied and the transmitted
codeword will be therefore on the list.\end{lemma}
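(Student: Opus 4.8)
The plan is to fix $a=\mu$ throughout and show that this one choice satisfies \emph{both} inequalities of Condition~\ref{con:ASDcond}. The hypothesis $d(x^{N},\hat{x}^{N})<N-K+1$ is, by Theorem~\ref{thm:GenASD-allrate} together with the stated identity $D_{\mu}=N-K+1$, precisely the statement that (\ref{eq:maincond}) holds for $a=\mu$. So the only thing left to establish is the two-sided bound (\ref{eq:piecewise}) with $a=\mu$, namely $\mu(K-1)<S_{\mathbf{M}}\leq(\mu+1)(K-1)$, after which the existence of an admissible $a$ is witnessed by $a=\mu$ and the transmitted codeword is on the list.

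For the lower bound I would feed Lemma~\ref{lem:GenASD} into (\ref{eq:maincond}). Rewriting (\ref{eq:maincond}) at $a=\mu$ as $(\mu+1)S_{\mathbf{M}}-\tfrac{\mu(\mu+1)}{2}(K-1)>C_{\mathbf{M}}$ and then substituting the cost--score bound $C_{\mathbf{M}}\geq\tfrac{\mu+1}{2}S_{\mathbf{M}}$ from Lemma~\ref{lem:GenASD}, the $C_{\mathbf{M}}$ term absorbs exactly half of $(\mu+1)S_{\mathbf{M}}$, and what remains rearranges to $S_{\mathbf{M}}>\mu(K-1)$. Thus the left inequality of (\ref{eq:piecewise}) is automatic and, notably, does not even invoke the rate assumption.

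For the upper bound I would use a trivial cap on the score. Since $\mu=\max_{i,j}M_{i,j}$ is by definition the largest entry of $\mathbf{M}$, the multiplicity assigned to the true symbol at any position obeys $M_{[c_{j}],j}\leq\mu$, whence $S_{\mathbf{M}}=\sum_{j=1}^{N}M_{[c_{j}],j}\leq\mu N$. The rate hypothesis $\frac{K}{N}\geq\frac{1}{N}+\frac{\mu}{\mu+1}$ is exactly the rearranged statement $(\mu+1)(K-1)\geq\mu N$, so chaining the two bounds gives $S_{\mathbf{M}}\leq\mu N\leq(\mu+1)(K-1)$, which is the right inequality of (\ref{eq:piecewise}).

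With both halves of (\ref{eq:piecewise}) verified for $a=\mu$ and (\ref{eq:maincond}) holding by hypothesis, Condition~\ref{con:ASDcond} is met in full. The genuinely instructive step is the lower bound: it is the interplay between Lemma~\ref{lem:GenASD} and the decoding inequality (\ref{eq:maincond}) that pins $S_{\mathbf{M}}$ into the correct piecewise interval, while the upper bound is where the high-rate condition enters, reducing to a one-line counting-and-substitution argument. I expect no real obstacle beyond recognizing that (\ref{eq:piecewise}) is the only outstanding requirement and that its lower half comes for free from the cost--score inequality.
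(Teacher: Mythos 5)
Your proof is correct and follows essentially the same route as the paper's: the lower half of (\ref{eq:piecewise}) comes from combining Lemma~\ref{lem:GenASD} with (\ref{eq:maincond}) at $a=\mu$, and the upper half from $S_{\mathbf{M}}\leq\mu N$ together with the rearranged rate hypothesis $(\mu+1)(K-1)\geq\mu N$. Your only (harmless) simplification is obtaining $S_{\mathbf{M}}\leq\mu N$ directly from the entrywise bound $M_{[c_j],j}\leq\mu$, whereas the paper re-expands the score in terms of $e_k$ and $\nu_{j,k}$ and drops non-negative terms to reach the same inequality.
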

\begin{proof}
Suppose (\ref{eq:maincond}) is satisfied for $a=\mu$, i.e., 
\begin{equation}S_{\mathbf{M}}>\frac{C_{\mathbf{M}}}{\mu+1}+\frac{\mu}{2}(K-1).\label{eqlm2_0}\end{equation}
                                                                                              
We will show that 
\begin{align}
 &\mu(K-1)<S_{\mathbf{M}}\label{eqlm2_1}\\
 &\phantom{...............}\leq(\mu+1)(K-1)\label{eqlm2_2}
\end{align}
and, therefore, both (\ref{eq:maincond}) and (\ref{eq:piecewise})
in Condition \ref{con:ASDcond} are satisfied for $a=\mu$.

Firstly, using Lemma \ref{lem:GenASD} we have \[
                                               \frac{S_{\mathbf{M}}}{2}\geq S_{\mathbf{M}}-\frac{C_{\mathbf{M}}}{\mu+1}
                                              \]
and consequently, (\ref{eqlm2_1}) is implied
by (\ref{eqlm2_0})
since\[
\frac{S_{\mathbf{M}}}{2}\geq S_{\mathbf{M}}-\frac{C_{\mathbf{M}}}{\mu+1}>\frac{\mu}{2}(K-1).\]
Secondly, note that (\ref{eqlm2_2}) holds since
\begin{align}
&S_{\mathbf{M}} =\mu\left(N-\sum_{k=2}^{T}e_{k}-\nu_{1,1}\right)+\sum_{k=2}^{T}\sum_{j=1}^{\ell}m_{j,k}(e_{k}-\nu_{j,k})\nonumber\\
& \phantom{\mathbb{E}} =\mu N-\mu\nu_{1,1}-\sum_{k=2}^{T}\sum_{j=0}^{\ell}m_{j,k}\nu_{j,k}-\sum_{k=2}^{T}e_{k}\left(\mu-\sum_{j=1}^{\ell}m_{j,k}\right)\nonumber\\
& \phantom{\mathbb{E}} \leq \mu N\label{eqlm2_3}\\
& \phantom{\mathbb{E}} \leq (\mu+1)(K-1)\label{eqlm2_4}\end{align}
where (\ref{eqlm2_3}) is obtained by dropping non-negative terms and (\ref{eqlm2_4}) follows from the high-rate constraint $\frac{K-1}{N}\geq\frac{\mu}{\mu+1}$.

Finally, by Theorem \ref{thm:GenASD-allrate}, one can verify that
equation (\ref{eq:maincond}) with $a=\mu$ is equivalent to \[d(x^{N},\hat{x}^{N})<D_{\mu}=N-K+1\]
under the distortion measure $\Delta_{\mu}$. 
\end{proof}
However, there are possibly other integers $a\neq\mu$ that can also
satisfy Condition \ref{con:ASDcond}. If we consider higher-rate RS
codes, as in the following theorem, then we can claim that $a=\mu$ is the
only such integer.
\begin{thm}
\label{thm:GenASD} Consider an $(N,K)$ RS code with rate \[\frac{K}{N}\geq\frac{1}{N}+\frac{\mu(\mu+3)}{(\mu+1)(\mu+2)}.\]
The integer $a$ in Condition \ref{con:ASDcond} must satisfy $a=\mu$
and, consequently, the set of constraints (\ref{eq:maincond}) and (\ref{eq:piecewise})
in Condition \ref{con:ASDcond} is equivalent to\[
d(x^{N},\hat{x}^{N})<N-K+1\]
 under the distortion measure $\Delta_{\mu}$. \end{thm}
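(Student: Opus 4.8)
The plan is to pin down the value of the integer $a$ in Condition~\ref{con:ASDcond} and then read off the equivalence by specializing Theorem~\ref{thm:GenASD-allrate} to $a=\mu$. Corollary~\ref{cor:ageqmu} already gives the lower bound $a\geq\mu$, so everything hinges on supplying the matching upper bound $a\leq\mu$. A useful preliminary fact, which I would simply quote from the proof of Lemma~\ref{lem:Midhighrate}, is the unconditional bound $S_{\mathbf{M}}\leq\mu N$ in~(\ref{eqlm2_3}): it follows from the expansion~(\ref{eq:score}) of $S_{\mathbf{M}}$ by dropping nonnegative terms, using only the allowable-type constraint $\sum_{j=1}^{\ell}m_{j,k}\leq\mu$, and so holds with no assumption on the rate.

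The crux is ruling out every $a\geq\mu+1$. Suppose Condition~\ref{con:ASDcond} holds for some integer $a$. I would rewrite~(\ref{eq:maincond}) as $(a+1)S_{\mathbf{M}}-C_{\mathbf{M}}>\tfrac12 a(a+1)(K-1)$ and then eliminate the cost using the bound $2C_{\mathbf{M}}\geq(\mu+1)S_{\mathbf{M}}$ from Lemma~\ref{lem:GenASD}; substituting $C_{\mathbf{M}}\geq\tfrac{\mu+1}{2}S_{\mathbf{M}}$ and simplifying yields $(2a+1-\mu)S_{\mathbf{M}}>a(a+1)(K-1)$. Since $2a+1-\mu>0$ for $a\geq\mu$, feeding in $S_{\mathbf{M}}\leq\mu N$ produces the necessary condition $\frac{K-1}{N}<\frac{\mu(2a+1-\mu)}{a(a+1)}$, which I will call $g(a)$, for success with that particular $a$. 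A direct computation shows $g$ is strictly decreasing on $[\mu,\infty)$ — the numerator of $g'(a)$ is a downward parabola in $a$ that is already negative at $a=\mu$ — so $g(a)\leq g(\mu+1)=\frac{\mu(\mu+3)}{(\mu+1)(\mu+2)}$ for all $a\geq\mu+1$. Hence the theorem's hypothesis $\frac{K-1}{N}\geq\frac{\mu(\mu+3)}{(\mu+1)(\mu+2)}$ contradicts the necessary condition for every $a\geq\mu+1$, and combined with Corollary~\ref{cor:ageqmu} this forces $a=\mu$. Note that the right-hand inequality $S_{\mathbf{M}}\leq(a+1)(K-1)$ of~(\ref{eq:piecewise}) alone is not enough here (it merely reproduces $a\geq\mu$); the genuinely new ingredient is the global bound $S_{\mathbf{M}}\leq\mu N$.

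Finally I would assemble the stated equivalence, observing first that the theorem's rate hypothesis implies the milder bound $\frac{K-1}{N}\geq\frac{\mu}{\mu+1}$ demanded by Lemma~\ref{lem:Midhighrate}. For the backward direction, if $d(x^{N},\hat{x}^{N})<N-K+1$ under $\Delta_{\mu}$, then Theorem~\ref{thm:GenASD-allrate} (with $D_{\mu}=N-K+1$) gives~(\ref{eq:maincond}) for $a=\mu$, and Lemma~\ref{lem:Midhighrate} then supplies the interval constraint~(\ref{eq:piecewise}), so Condition~\ref{con:ASDcond} holds. For the forward direction, if Condition~\ref{con:ASDcond} holds for some $a$, the argument above forces $a=\mu$, and Theorem~\ref{thm:GenASD-allrate} turns~(\ref{eq:maincond}) back into $d(x^{N},\hat{x}^{N})<N-K+1$. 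I expect the main obstacle to be the monotonicity step: one must verify that $a=\mu+1$ is truly the worst case, so that the single threshold $g(\mu+1)$ — which coincides exactly with the stated rate bound — excludes all larger $a$ at once rather than having to treat each $a$ separately.
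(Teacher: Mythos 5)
Your proposal is correct, and its skeleton matches the paper's: obtain $a\geq\mu$ from Corollary \ref{cor:ageqmu}, derive the necessary condition $\frac{K-1}{N}<\frac{\mu(2a+1-\mu)}{a(a+1)}$ from (\ref{eq:maincond}), use the rate hypothesis to exclude $a\geq\mu+1$, and then assemble the equivalence via Theorem \ref{thm:GenASD-allrate} and Lemma \ref{lem:Midhighrate} (noting, as the paper does, that the rate hypothesis implies the milder one of Lemma \ref{lem:Midhighrate}). Where you differ is the derivation of the key inequality: the paper re-expands $S_{\mathbf{M}}$ and $C_{\mathbf{M}}$ through the counting identities (\ref{eq:score}) and (\ref{eq:cost}) and drops non-negative terms to reach (\ref{eq:aconstr}), whereas you obtain the same bound purely algebraically by eliminating the cost via Lemma \ref{lem:GenASD} ($2C_{\mathbf{M}}\geq(\mu+1)S_{\mathbf{M}}$) and then invoking the unconditional bound $S_{\mathbf{M}}\leq\mu N$. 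Your route is cleaner in that it reuses already-proved lemmas instead of repeating the counting argument, and it also makes explicit the monotonicity of $g(a)=\frac{\mu(2a+1-\mu)}{a(a+1)}$ on $[\mu,\infty)$ — a fact the paper silently relies on when it concludes ``$a<\mu+1$'' from $g(a)>g(\mu+1)$ — so you actually close a small gap in the published argument. The only caveat is that your elimination step divides by $2a+1-\mu$ and thus presupposes $a\geq\mu$ (so Corollary \ref{cor:ageqmu} must be invoked first), whereas the paper's (\ref{eq:aconstr}) is valid for any $a$; since Corollary \ref{cor:ageqmu} is needed in either proof, this ordering constraint is harmless.
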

\begin{proof}
We first see that \[(a+1)\left[S_{\mathbf{M}}-\frac{a}{2}(K-1)\right]>C_{\mathbf{M}}\]
in (\ref{eq:maincond}) implies \[S_{\mathbf{M}}-\frac{a}{2}(K-1)>\frac{C_{\mathbf{M}}}{a+1}\]
and, with the score $S_{\mathbf{M}}$ and the cost $C_{\mathbf{M}}$
computed in (\ref{eq:score}) and (\ref{eq:cost}), we obtain

\begin{align*}
&\mu\!\left(\!N\!-\!\sum_{k=2}^{T}e_{k}\!-\!\nu_{1,1}\!\right)\!+\!\sum_{k=2}^{T}\sum_{j=1}^{\ell}m_{j,k}(e_{k}\!-\!\nu_{j,k})\!-\!\frac{a}{2}(K-1) \nonumber \\
& \phantom{\mathbb{EEEEEEEEEEEEEEEEEEEEEEEEEEEE}} >\frac{\mu(\mu+1)}{2(a+1)}\left(N-\sum_{k=2}^{T}e_{k}\right)+\sum_{k=2}^{T}e_{k}\sum_{j=1}^{\ell}\frac{m_{j,k}(m_{j,k}+1)}{2(a+1)}.
\end{align*}

This gives
\begin{align}
 & \left(\!\mu\!-\!\frac{\mu(\mu+1)}{2(a+1)}\!\right)\!N\!-\!\frac{a}{2}(K-1)>\mu\nu_{1,1}\!+\!\sum_{j=2}^{T}\sum_{j=1}^{\ell}\nu_{j,k}\!+\!\sum_{k=2}^{T}\!e_{k}\!\left(\!\mu\!-\!\sum_{j=1}^{\ell}m_{j,k}\!+\!\sum_{j=1}^{\ell}\frac{m_{j,k}(m_{j,k}+1)}{2(a+1)}\right)\nonumber\\
 & \phantom{\left(\!\mu\!-\!\frac{\mu(\mu+1)}{2(a+1)}\!\right)\!N\!-\!\frac{a}{2}(K-1)\!} \geq\sum_{k=2}^{T}e_{k}\left(\mu-\sum_{j=1}^{\ell}m_{j,k}\right)\label{eqthm2_5}\\
 & \phantom{\left(\!\mu\!-\!\frac{\mu(\mu+1)}{2(a+1)}\!\right)\!N\!-\!\frac{a}{2}(K-1)\!} \geq 0\label{eq:aconstr}
\end{align}
where (\ref{eqthm2_5}) is obtained by dropping non-negative terms.

Combining this inequality with the high-rate constraint implies that
\[
\frac{\mu(2a+1-\mu)}{a(a+1)}>\frac{K-1}{N}\geq\frac{\mu(\mu+3)}{(\mu+1)(\mu+2)}\]
which leads to $a<\mu+1$, i.e. $a\leq\mu$. 

This, together with $a\geq\mu$ according to Corollary \ref{cor:ageqmu},
leave $a=\mu$ as the only possible choice. Finally, by seeing that
\begin{align*}
& \frac{K}{N}\geq\frac{1}{N}+\frac{\mu(\mu+3)}{(\mu+1)(\mu+2)}>\frac{1}{N}+\frac{\mu}{\mu+1}
\end{align*}
and applying Lemma \ref{lem:Midhighrate} we conclude the proof.\end{proof}
\begin{cor}
\label{cor:For-mASD}When the RD approach is used, $R(D)$ is positive
for $D_{\min}\leq D<D_{\max}$ and is zero for $D\geq D_{\max}$.
Computing $D_{\max}$ reveals how good the distortion measure matrix
is at rates close to zero (i.e., the erasure codebook has only one
entry). For mASD-$\mu$, \begin{align*}
\begin{split}
&D_{\max}(\mbox{mASD-}\mu)=\sum_{i=1}^{N}\min_{k=2,\ldots,T}\left\{ 2(1-p_{i,1}),\rho_{k,\mu}-\sum_{j=1}^{\ell}\frac{m_{j,k}}{\mu}p_{i,j}\right\} 
\end{split}
\end{align*}
while for mBM-$\ell$, \begin{align*}
D_{\max}(\mbox{mBM-}\ell)=\sum_{i=1}^{N}\min\{1,2(1-p_{i,1})\}.\end{align*}
Moreover, if mASD-$\mu$ uses multiplicity type $(0,0,\ldots0$) then
$D_{\max}($mASD-$\mu)\leq D_{\max}($mBM-$\ell$) for every $\mu,\ell$.\end{cor}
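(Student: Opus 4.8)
The plan is to reduce the whole statement to the standard rate-distortion fact that $R(D)$ vanishes exactly at the smallest average distortion attainable with zero mutual information, and then to evaluate that quantity for the two distortion matrices involved.

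First I would establish, for the i.n.d.\ source at hand, that $R(D)=0$ if and only if $D\ge D_{\max}$ with $D_{\max}=\sum_{i=1}^{N}\min_{k}\sum_{j}p_{i,j}\delta_{jk}$. The key observation is that $R(D)=0$ forces $I(X^{N};\hat{X}^{N})=0$, so the reproduction $\hat{X}^{N}$ is independent of the source $X^{N}$; by additivity of the letter-by-letter distortion the expected distortion then equals $\sum_{i}E[\delta(X_{i},\hat{X}_{i})]$ with each $\hat{X}_{i}$ independent of $X_{i}$, whence each term is bounded below by $\min_{k}\sum_{j}p_{i,j}\delta_{jk}$, with equality attained by a product of point masses. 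This is exactly what the factored Blahut algorithm (Algorithm~\ref{thm:(Factored-Blahut)}) predicts, namely $R(D)=\sum_{i}R_{i}(D_{i})$ vanishing precisely when every component distortion meets its own zero-rate threshold. Monotonicity and convexity of $R(D)$ then give the first sentence: were $R(D)=0$ for some $D<D_{\max}$, a zero-information reproduction would attain distortion below $D_{\max}$, contradicting its definition, so $R(D)>0$ on $[D_{\min},D_{\max})$.

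Next I would substitute the two distortion matrices. For \mbox{mBM-}$\ell$ with $\Delta$ from Proposition~\ref{thm:ExtBMA-1}, the erasure column $\hat{x}=0$ gives $\bar{d}_{i,0}=\sum_{j}p_{i,j}=1$, while each column $\hat{x}=k\ge1$ gives $\bar{d}_{i,k}=2(1-p_{i,k})$; the reliability ordering $p_{i,1}\ge p_{i,2}\ge\cdots$ makes $\min_{k\ge1}2(1-p_{i,k})=2(1-p_{i,1})$, so $\min_{k}\bar{d}_{i,k}=\min\{1,2(1-p_{i,1})\}$ and summing over $i$ gives the stated $D_{\max}(\mbox{mBM-}\ell)$. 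For \mbox{mASD-}$\mu$ with $\Delta_{\mu}$ from Theorem~\ref{thm:GenASD-allrate} (taking $a=\mu$), using $\sum_{j=0}^{\ell}p_{i,j}=1$ collapses each column average to $\bar{d}_{i,k}=\rho_{k,\mu}-\tfrac{2}{\mu}\sum_{j=1}^{\ell}m_{j,k}p_{i,j}$; multiplicity type~$1=(\mu,0,\ldots,0)$ reduces this to $2(1-p_{i,1})$, matching the first argument of the stated minimum, and minimizing over the remaining types $k=2,\ldots,T$ yields the formula for $D_{\max}(\mbox{mASD-}\mu)$.

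Finally I would prove the inequality by exhibiting two cheap columns. When the all-zero type $(0,\ldots,0)$ is admitted it is some type $k_{0}$, and since all its multiplicities vanish one finds $\rho_{k_{0},\mu}=1$ and hence $\bar{d}_{i,k_{0}}=1$. Thus for each $i$ the \mbox{mASD-}$\mu$ minimum is taken over a set that contains both the value $2(1-p_{i,1})$ (type~$1$) and the value $1$ (the all-zero type), so $\min_{k}\bar{d}_{i,k}\le\min\{1,2(1-p_{i,1})\}$ term by term; summing over $i$ gives $D_{\max}(\mbox{mASD-}\mu)\le D_{\max}(\mbox{mBM-}\ell)$ for every $\mu,\ell$. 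The only genuine content is the zero-rate characterization in the first step; once $D_{\max}=\sum_{i}\min_{k}\bar{d}_{i,k}$ is in hand, the remaining steps are direct substitutions into the matrices of Proposition~\ref{thm:ExtBMA-1} and Theorem~\ref{thm:GenASD-allrate}, and the inequality follows from the presence of the two distinguished columns.
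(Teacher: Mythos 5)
Your proposal is correct and follows essentially the same route as the paper's proof in Appendix~\ref{sec:AppCorDmax}: both rest on the zero-rate characterization $D_{\max}=\sum_{i=1}^{N}\min_{k}\sum_{j}p_{i,j}\delta_{jk}$, then substitute the distortion matrices of Proposition~\ref{thm:ExtBMA-1} and Theorem~\ref{thm:GenASD-allrate} column by column, and obtain the inequality from the two distinguished columns (type $1$ giving $2(1-p_{i,1})$ and the all-zero type giving $1$). The only difference is that the paper simply cites this characterization from Berger, whereas you derive it from the independence forced by zero mutual information; this makes your argument more self-contained but does not change the substance of the proof.
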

\begin{proof}
See Appendix \ref{sec:AppCorDmax}.\end{proof}
\begin{example}
Consider mASD-2 with distortion matrix in (\ref{eq:dstmASD2}). We
have
\begin{align*}
&D_{\max}(\mbox{mASD-}2)=\sum_{i=1}^{N}\min\left\{1,2(1-p_{i,1}),\frac{5}{3}-\frac{2}{3}(p_{i,1}+p_{i,2})\right\} 
\end{align*} 
 which is less than or equal to $D_{\max}(\mbox{mBM-}\ell)$ for every
$\ell$. This fact can be seen in Fig. \ref{fig:rdcurve} which is
obtained by simulation. This also predicts that, as expected, ASD
decoding will be superior when $R$ is small.
\end{example}

\subsubsection{Lower-rate Reed-Solomon codes}

Without the high-rate constraint as in Theorem \ref{thm:GenASD},
we may not have $a=\mu$. However, we can obtain a range for $a$
and heuristically choose the integer $a$ that potentially give the
highest rate-distortion exponent. After that, we can also apply the
algorithms proposed in Section \ref{sec:Proposed-Algorithm} with
the corresponding distortion measure $\Delta_{a}$ and distortion
threshold $D_{a}$ derived in Theorem \ref{thm:GenASD-allrate}.
\begin{table}[t!]
\caption{Example ranges of possible $a$}
\label{table_1}
\centering{}\begin{tabular}{|c|c|c|}
\hline 
 & RS(255,191) & RS(255,127)\tabularnewline
\hline
\hline 
$\mu=2$ & $2\leq a\leq3$ & $2\leq a\leq6$\tabularnewline
\hline 
$\mu=3$ & $3\leq a\leq5$ & $3\leq a\leq9$\tabularnewline
\hline
\end{tabular}
\end{table}

The following lemma tells us the range of possible $a$.
\begin{lemma}
\label{lem:rangea}Consider an ($N$,$K$) RS code. In order to satisfy
(\ref{eq:maincond}), one must have\[ \mu\leq a\leq\Big\lceil\mu\theta-\nicefrac{1}{2}+\sqrt{\mu^{2}\theta\left(\theta-1\right)+\nicefrac{1}{4}}\Big\rceil-1\]%
where $\theta\triangleq\frac{N}{K-1}$.\end{lemma}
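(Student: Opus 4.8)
The plan is to derive the upper bound on $a$ directly from the inequality \eqref{eq:aconstr} established in the proof of Theorem \ref{thm:GenASD}, which holds without any rate constraint. Recall that the chain of implications from \eqref{eq:maincond} through to \eqref{eq:aconstr} shows that whenever \eqref{eq:maincond} is satisfiable, one must have
\[
\left(\mu-\frac{\mu(\mu+1)}{2(a+1)}\right)N-\frac{a}{2}(K-1)>0,
\]
or equivalently $\frac{\mu(2a+1-\mu)}{a(a+1)}>\frac{K-1}{N}=\frac{1}{\theta}$. The lower bound $a\geq\mu$ is already furnished by Corollary \ref{cor:ageqmu}, so the entire task reduces to extracting the largest integer $a$ compatible with this strict inequality.

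First I would clear denominators in $\frac{\mu(2a+1-\mu)}{a(a+1)}>\frac{1}{\theta}$, using $a(a+1)>0$, to obtain $\mu\theta(2a+1-\mu)>a(a+1)$. Rearranging into a quadratic in $a$ gives
\[
a^{2}+a-2\mu\theta a+\mu\theta(\mu-1)<0,
\]
that is, $a^{2}+(1-2\mu\theta)a+\mu\theta(\mu-1)<0$. Completing the square yields
\[
\left(a+\frac{1-2\mu\theta}{2}\right)^{2}<\left(\frac{1-2\mu\theta}{2}\right)^{2}-\mu\theta(\mu-1)=\mu^{2}\theta^{2}-\mu\theta+\tfrac14=\mu^{2}\theta(\theta-1)+\tfrac14,
\]
after expanding $\left(\frac{2\mu\theta-1}{2}\right)^{2}=\mu^{2}\theta^{2}-\mu\theta+\frac14$ and combining the $-\mu\theta(\mu-1)$ term. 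Taking square roots, the admissible $a$ satisfy $a<\mu\theta-\frac12+\sqrt{\mu^{2}\theta(\theta-1)+\frac14}$ (the lower root is irrelevant since $a\geq\mu$ already).

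Finally I would convert this strict real upper bound into the integer bound stated. Since $a$ is an integer and the constraint is a strict inequality $a<B$ with $B\triangleq\mu\theta-\frac12+\sqrt{\mu^{2}\theta(\theta-1)+\frac14}$, the largest permissible integer is $a\leq\lceil B\rceil-1$, which is exactly the claimed expression. The main subtlety to handle carefully is this ceiling/strict-inequality bookkeeping: one must confirm that $\lceil B\rceil-1$ is the correct floor-type quantity for a strict bound (it equals $\lceil B\rceil-1$ rather than $\lfloor B\rfloor$ precisely to guard against the edge case where $B$ is itself an integer, in which case $a$ can be at most $B-1$). Everything upstream—the derivation of \eqref{eq:aconstr} and the lower bound $a\geq\mu$—is already available from Corollary \ref{cor:ageqmu} and Theorem \ref{thm:GenASD}, so no new structural work is needed; the only real care lies in the algebra of completing the square and in the integer-rounding step.
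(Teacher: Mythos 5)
Your proposal is correct and follows essentially the same route as the paper: the paper's proof likewise starts from the observation that (\ref{eq:aconstr}) holds for any $(N,K)$, rewrites it as $\mu-\frac{\mu(\mu+1)}{2(a+1)}>\frac{a(K-1)}{2N}$, and combines this with $a\geq\mu$ from Corollary \ref{cor:ageqmu}. The only difference is that the paper leaves the quadratic-in-$a$ algebra and the strict-inequality/ceiling bookkeeping implicit, which you carry out explicitly and correctly.
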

\begin{proof}
First note that (\ref{eq:aconstr}) holds for any $(N,K)$. Therefore,
we have \[
\mu-\frac{\mu(\mu+1)}{2(a+1)}>\frac{a(K-1)}{2N}.\]
Combining this with $a\geq\mu$ in Corollary \ref{cor:ageqmu}, we obtain the stated result.
\end{proof}

\begin{example}
 Table \ref{table_1} gives several example ranges of possible $a$ for some choices of $\mu$ and RS codes.
\end{example}

Among possible choices of $a$, we are interested in choosing $a$
that gives the largest rate-distortion exponent and therefore has
a better chance to satisfy Condition \ref{con:ASDcond}. The following
lemma can give us an insight of how to choose such an integer $a$.
\begin{lemma}
If\begin{equation}
a>\frac{1}{2}\left(\sqrt{1+4\theta\mu(\mu+1)}-3\right)\label{eq:aFdec}\end{equation}
 where $\theta=\frac{N}{K-1}$ then starting from $a$, the rate-distortion
exponent $F_{a}$ strictly decreases until reaching zero, i.e., $F_{a}>F_{a+1}>F_{a+2}>\ldots\geq0$
if rate $R$ is fixed.\end{lemma}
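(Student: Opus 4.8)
The plan is to analyze how the rate-distortion exponent $F_a$ depends on the parameter $a$ by exploiting the explicit dependence of the distortion threshold $D_a$ on $a$, as derived in Theorem~\ref{thm:GenASD-allrate}. Recall that the distortion measure $\Delta_a$ and the threshold $D_a = \frac{\mu(2a+1-\mu)}{a(a+1)}N-K+1$ both scale with $a$ in a controlled way: every entry of $\Delta_a$ carries a factor of $\frac{1}{a}$ (or $\frac{1}{a(a+1)}$ through $\rho_{k,a}$). The key observation I would pursue is that changing $a$ effectively rescales the entire distortion geometry, so comparing $F_a$ and $F_{a+1}$ reduces to comparing the effective normalized threshold $D_a$ against the distortion scale set by $\Delta_a$.

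First I would make the scaling explicit. I would factor the common $\frac{2}{a}$ (or $\frac{1}{a(a+1)}$) dependence out of $\Delta_a$, writing $\delta_a(x,\hat{x})$ in terms of an $a$-independent core plus the $a$-dependent prefactor. Since the RDE function $F(R,D)$ (with $R$ fixed) is monotonically increasing in the distortion threshold $D$ when measured against a fixed distortion scale, the sign of $F_{a+1}-F_a$ should be governed by whether the threshold-to-scale ratio increases or decreases as $a\mapsto a+1$. The natural quantity to track is the coefficient $g(a)\triangleq\frac{\mu(2a+1-\mu)}{a(a+1)}$ appearing in $D_a$, which is the asymptotic per-symbol room available for distortion. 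I expect $F_a$ to be strictly decreasing precisely on the range of $a$ where $g(a)$ is strictly decreasing, because a smaller per-symbol distortion allowance tightens the covering requirement and hence lowers the achievable exponent at fixed rate.

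The computational heart of the argument is therefore to differentiate (or take the forward difference of) $g(a)$ and locate where it begins to decrease. Treating $a$ as a continuous variable, I would compute $g'(a)$ and find that $g'(a)<0$ is equivalent to a quadratic inequality in $a$; solving it yields the threshold $a>\frac{1}{2}\left(\sqrt{1+4\theta\mu(\mu+1)}-3\right)$ stated in the lemma, where $\theta=\frac{N}{K-1}$ enters because $D_a$ must be compared to the fixed code parameters $N$ and $K-1$ that define the covering problem. The appearance of $\theta$ strongly suggests that the correct quantity to analyze is not $g(a)$ alone but the ratio of $D_a$ to a reference distortion, i.e. something like $\frac{g(a)N}{K-1}=g(a)\theta$ relative to unity, which is exactly what sets whether the normalized threshold still admits positive exponent.

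The main obstacle I anticipate is rigorously transferring monotonicity of the scalar threshold $g(a)$ into strict monotonicity of the exponent $F_a$. The RDE function is defined through a max-min optimization (equation~(\ref{eq:RDEmaxmin})) over the factored single-letter problems (Algorithm~\ref{pro:FactoredRDE}), and one must argue that, at fixed rate $R$, a strict decrease in the attainable distortion threshold forces a strict decrease in $F_a$. This requires either a monotonicity/convexity property of $\bar{F}(R,D)$ in $D$ together with the fact that $F>0$ means we are on the strictly increasing part of the curve, or a direct comparison of the constraint sets $\mathcal{P}_{R,D_a}$ as $a$ varies. The subtlety flagged in the earlier remark about slope discontinuities in the implicit RDE representation is exactly the kind of pathology that could break strictness, so I would either invoke differentiability of the component RDE functions at the optimum or argue the comparison on the level of the defining optimization to avoid relying on the parametric representation. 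Once strict monotonicity of $F$ in $D$ on the region $F>0$ is secured, the conclusion $F_a>F_{a+1}>F_{a+2}>\cdots\geq0$ follows immediately from $g(a)>g(a+1)>\cdots$ for $a$ beyond the stated bound, terminating at zero once the threshold $D_a$ drops to $D_{\min}$.
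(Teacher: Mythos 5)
There is a genuine gap, and it sits at the heart of your reduction. Your plan rests on the claim that $\Delta_a$ is a common $a$-dependent prefactor times an $a$-independent core, so that comparing $F_a$ with $F_{a+1}$ reduces to tracking a scalar ``threshold-to-scale ratio.'' That factorization is false: the entries of $\Delta_a$ in (\ref{eq:dstmASDmu_allrate}) are $\rho_{k,a}-\frac{2m_{j,k}}{a}$ with $\rho_{k,a}=\frac{\mu(2a+1-\mu)+\sum_{j}m_{j,k}(m_{j,k}+1)}{a(a+1)}$, whose numerator itself depends on $a$; the matrix mixes $O(1/a)$ and $O\bigl(1/(a(a+1))\bigr)$ terms and is \emph{not} a rescaling of a fixed geometry. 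What is actually true --- and what the paper proves --- is only a one-sided entry-wise domination $\frac{a+1}{a}\Delta_{a+1}\geq\Delta_{a}$, whose verification reduces to $\mu(\mu+1)\geq\sum_{j}m_{j,k}(m_{j,k}+1)$ and hence genuinely requires the allowable-multiplicity-type constraint $\sum_{j}m_{j,k}\leq\mu$ of Definition \ref{def:AllowableTypes}. Combined with the scale-invariance of the exponent (scaling both measure and threshold leaves $F$ unchanged), this domination shows that the exponent computed with $\Delta_a$ at threshold $\frac{a+1}{a}D_{a+1}$ is at least $F_{a+1}$; no symmetric or ``ratio'' statement is available or needed.

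The second, related failure is that your scalar criterion cannot reproduce the stated threshold. The coefficient $g(a)=\frac{\mu(2a+1-\mu)}{a(a+1)}$ satisfies $g(a)-g(a+1)=\frac{2\mu(a+1-\mu)}{a(a+1)(a+2)}>0$ for every $a\geq\mu$, so $g$ (and likewise $D_a$ itself) is strictly decreasing on the \emph{entire} admissible range; ``where $g$ starts to decrease'' is vacuous, involves no $\theta$, and therefore cannot yield (\ref{eq:aFdec}). The threshold arises only from the correctly normalized comparison $D_a>\frac{a+1}{a}D_{a+1}$ --- the factor $\frac{a+1}{a}$ being exactly the one dictated by the measure domination above --- which expands to $(a+1)(a+2)>\theta\mu(\mu+1)$ and is equivalent to (\ref{eq:aFdec}). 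Your final concern, transferring strict monotonicity in $D$ (at fixed $R$) to $F_a>F_{a+1}$, is legitimate but is the easy part: the paper disposes of it by citing the monotonicity of the exponent in the distortion threshold from Blahut. So the missing ideas are (i) replacing the false rescaling claim with the inequality $\frac{a+1}{a}\Delta_{a+1}\geq\Delta_{a}$ proved via the allowable-type condition, and (ii) performing the threshold comparison with the $\frac{a+1}{a}$ normalization rather than comparing $g(a)$ or $D_a$ across $a$ directly.
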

\begin{proof}
For a fixed rate $R$, the distortion measure $\Delta_{a+1}$ and
distortion $D_{a+1}$ yield exponent $F_{a+1}$. Scaling both $\Delta_{a+1}$
and $D_{a+1}$ leaves $F_{a+1}$ unchanged. Hence, $\frac{a+1}{a}\Delta_{a+1}$
and $\frac{a+1}{a}D_{a+1}$ also yield $F_{a+1}$. Next, we will show
that\begin{equation}
\frac{a+1}{a}\Delta_{a+1}\geq\Delta_{a}.\label{eq:lowerdist}\end{equation}

To prove (\ref{eq:lowerdist}), it suffices to show
\begin{equation}
\frac{a+1}{a}\rho_{k,a+1}\geq\rho_{k,a}\label{rhocond}
\end{equation}
since \[\frac{a+1}{a}\left(\rho_{k,a+1}-\frac{2m_{j,k}}{a+1}\right)\geq\rho_{k,a}-\frac{2m_{j,k}}{a}\]
is also equivalent to (\ref{rhocond}).

Equivalently, we need to show \[
\mu(\mu+1)\geq\sum_{j=1}^{\ell}m_{j,k}(m_{j,k}+1)\]
which is true because $\mu\geq\sum_{i=1}^{\ell}m_{j,k}$ by the definition
of allowable multiplicity types.

Thus, (\ref{eq:lowerdist}) holds and, therefore, the exponent yielded
by $\Delta_{a}$ and $\frac{a+1}{a}D_{a+1}$ is at least $F_{a+1}.$
From (\ref{eq:aFdec}) we have \begin{align*}
&D_{a}=\frac{\mu(2a+1-\mu)}{a(a+1)}N-K+1\\
&\phantom{D_{a}}>\frac{\mu(2a+3-\mu)}{a(a+2)}N-\frac{a+1}{a}(K-1)\\
&\phantom{D_{a}}=\frac{a+1}{a}D_{a+1}.\end{align*}

Since for a fixed $R$, exponent $F$ is increasing in distortion
$D$ \cite[Thm 6.6.2]{Blahut-1987}, we know that $F_{a}>F_{a+1}$
where $F_{a}$ is the exponent yielded by $\Delta_{a}$ and $D_{a}$.\end{proof}
\begin{table}
\caption{Example ranges of $a$ that gives the largest exponent}
\label{table_2}
\centering{}\begin{tabular}{|c|c|c|}
\hline 
 & RS(255,191) & RS(255,127)\tabularnewline
\hline
\hline 
$\mu=2$ & $a=2$ & $a\in\{2,3\}$\tabularnewline
\hline 
$\mu=3$ & $a=3$ & $a\in\{3,4\}$\tabularnewline
\hline 
$\mu=12$ & $a\in\{12,13\}$ & $12\leq a\leq17$\tabularnewline
\hline
\end{tabular}
\end{table}

\begin{figure}[t]
\centering{}\includegraphics[scale=0.85]{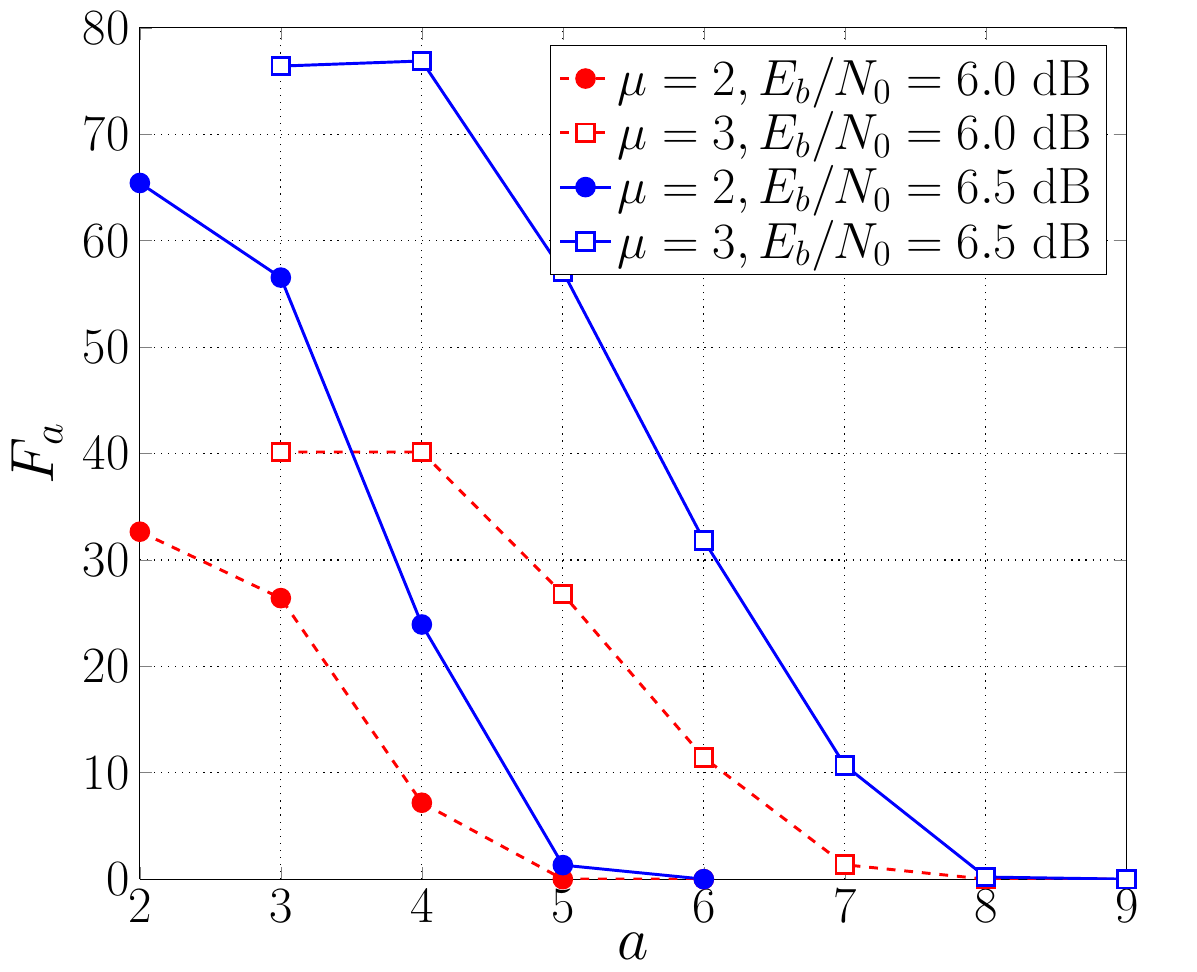}\caption{\label{fig:Fvsa} Plot of exponent $F_{a}$ versus $a$ for $\mu=2$
and $\mu=3$ with a fixed rate $R=6$. Simulations are conducted for
the (255,127) RS code using BPSK over an AWGN channel at $E_{b}/N_{0}=6.0$
dB and $6.5$ dB.}

\end{figure}[h]
\begin{cor}
The integer $a$ that gives the largest exponent lies in the range\[ \mu\leq a\leq\Big\lfloor\frac{1}{2}\left(\sqrt{1+4\theta\mu(\mu+1)}-3\right)\Big\rfloor+1.\] 
\end{cor}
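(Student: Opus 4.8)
The plan is to combine two ingredients already in hand: the lower bound $a\geq\mu$ from Corollary~\ref{cor:ageqmu}, and the strict-monotonicity statement of the immediately preceding lemma. Write $a^{*}\triangleq\frac{1}{2}\left(\sqrt{1+4\theta\mu(\mu+1)}-3\right)$. That lemma guarantees that as soon as an integer exceeds $a^{*}$, the exponents obey $F_{a}>F_{a+1}>F_{a+2}>\cdots$ for a fixed rate $R$. The whole task therefore reduces to locating the largest integer that could still be the maximizer.

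First I would set $a_{0}\triangleq\lfloor a^{*}\rfloor+1$ and check the boundary: $a_{0}>a^{*}$ in every case, since $a^{*}<\lfloor a^{*}\rfloor+1$ when $a^{*}$ is not an integer, and $a_{0}=a^{*}+1>a^{*}$ when it is. Thus the lemma's hypothesis $a>a^{*}$ is met by $a_{0}$, and indeed by every integer $a\geq a_{0}$, so the chain $F_{a_{0}}>F_{a_{0}+1}>F_{a_{0}+2}>\cdots$ holds. This chain is the key observation: every integer strictly larger than $a_{0}$ yields an exponent strictly below $F_{a_{0}}$ and hence cannot be the maximizer. Consequently the integer $a$ achieving the largest exponent satisfies $a\leq a_{0}=\lfloor\frac{1}{2}(\sqrt{1+4\theta\mu(\mu+1)}-3)\rfloor+1$, and together with $a\geq\mu$ this is exactly the claimed range.

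I do not expect a substantive obstacle; the only point requiring care is the bookkeeping at the threshold $a^{*}$, that is, verifying the smallest candidate $a_{0}$ satisfies the strict inequality $a_{0}>a^{*}$ rather than falling on the boundary, so that the monotone chain is actually licensed to start at $a_{0}$. Once that is confirmed, the strict decrease forces the maximizer to sit at or below $a_{0}$ and the corollary follows immediately.
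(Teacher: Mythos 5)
Your proposal is correct and is exactly the argument the paper intends: the corollary is stated without proof precisely because it follows immediately from Corollary~\ref{cor:ageqmu} (giving $a\geq\mu$) and the preceding lemma applied at $a_{0}=\lfloor a^{*}\rfloor+1$, whose strict-decrease chain $F_{a_{0}}>F_{a_{0}+1}>\cdots$ rules out any larger integer as the maximizer. Your boundary check that $a_{0}>a^{*}$ in both the integer and non-integer cases is the right (and only) point of care, so the proof stands as written.
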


\begin{example}
 The following Table \ref{table_2} presents several example ranges of $a$ that gives the largest exponent for some choices of $\mu$ and RS codes.
\end{example}

\begin{remrk}
Simulation results also confirm our analysis. For example, in Fig.
\ref{fig:Fvsa}, $a=3$ and $a=4$ give roughly same and the largest
exponents for $\mu=3$ while $a=2$ yields the largest exponent for
$\mu=2$. In fact, simulation results suggest that, typically, either
$a=\mu$ or $a=\mu+1$ gives the best exponent.

In Condition \ref{con:ASDcond}, for lower-rate RS codes, so far we
have only paid attention to (\ref{eq:maincond}). However, it is also
required that \[a(K-1)<S_{\mathbf{M}}\leq(a+1)(K-1),\] or equivalently
\begin{equation}
a+1=\Big\lceil\frac{S_{\mathbf{M}}}{K-1}\Big\rceil. \label{acondt}
\end{equation}
 While it is hard to tell exactly which $a$ will satisfy (\ref{acondt})
with high probability right away, we can propose a heuristic method
to choose the integer $a$ that is likely to work. We first need the
following lemma.\end{remrk}
\begin{lemma}
Suppose we have obtained a test-channel input-probability distribution
matrix $\mathbf{Q}$ (e.g., during Step 2a or Step 2b in the proposed
algorithms in Section \ref{sec:Proposed-Algorithm}) and the set of
erasure patterns for mASD is generated independently and randomly
according to $\mathbf{Q}$. Then, the expected score can be computed
as follows:\begin{align}
\mathbb{E}[S_{\mathbf{M}}]=\sum_{k=1}^{T}\sum_{j=1}^{\ell}\sum_{i=1}^{N}m_{j,k}p_{i,j}q_{i,k}.\label{eq:Escore}\end{align}
\end{lemma}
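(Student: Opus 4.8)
The plan is to compute the expectation of the score $S_{\mathbf{M}}(\mathbf{c})=\sum_{j=1}^N M_{[c_j],j}$ by first writing it in the indicator form used in the proof of Lemma~\ref{lem:GenASD}, namely
\[
S_{\mathbf{M}}=\sum_{k=1}^{T}\sum_{j=1}^{\ell}m_{j,k}\chi_{j,k},
\]
which follows from equation (\ref{eqscorechi}). Here $\chi_{j,k}=|\{i:x_i=j,\hat{x}_i=k\}|$ counts codeword positions that simultaneously have error-letter $j$ (the $j$-th most likely symbol is correct) and use multiplicity type $k$. The multiplicity coefficient $m_{j,k}$ is deterministic once the allowable type $k$ is fixed, so the only randomness enters through the counts $\chi_{j,k}$, and taking expectations reduces to computing $\mathbb{E}[\chi_{j,k}]$.

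\emph{First} I would expand $\chi_{j,k}=\sum_{i=1}^N \mathbbm{1}\{x_i=j\}\,\mathbbm{1}\{\hat{x}_i=k\}$ as a sum of products of indicators over positions. By linearity of expectation,
\[
\mathbb{E}[\chi_{j,k}]=\sum_{i=1}^N \mathbb{E}\!\left[\mathbbm{1}\{X_i=j\}\,\mathbbm{1}\{\hat{X}_i=k\}\right].
\]
The \emph{key step} is the independence between the error pattern and the erasure pattern at each index: the error pattern $X^N$ is determined by the channel (through which symbol is actually correct), whereas the erasure pattern $\hat{X}^N$ is generated independently and randomly according to the test-channel distribution $\mathbf{Q}$ (as stated in the lemma's hypothesis). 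Hence the joint indicator expectation factors as
\[
\Pr(X_i=j,\hat{X}_i=k)=\Pr(X_i=j)\,\Pr(\hat{X}_i=k)=p_{i,j}\,q_{i,k},
\]
using the interchangeable notations $p_{i,j}=p_{j_i}$ and $q_{i,k}=q_{k_i}$ established in the footnotes of Section~\ref{sec:Computing-RD}.

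\emph{Finally} I would substitute $\mathbb{E}[\chi_{j,k}]=\sum_{i=1}^N p_{i,j}q_{i,k}$ back into the linearized score and interchange the (finite) order of summation to arrive at
\[
\mathbb{E}[S_{\mathbf{M}}]=\sum_{k=1}^{T}\sum_{j=1}^{\ell}m_{j,k}\,\mathbb{E}[\chi_{j,k}]=\sum_{k=1}^{T}\sum_{j=1}^{\ell}\sum_{i=1}^{N}m_{j,k}\,p_{i,j}\,q_{i,k},
\]
which is exactly (\ref{eq:Escore}). The argument is essentially a bookkeeping exercise, so I do not expect a serious obstacle; the one point requiring care is justifying that the erasure choice $\hat{X}_i$ is probabilistically independent of the true-symbol event $X_i=j$ at each position, which is guaranteed by the stated assumption that the erasure patterns are generated independently according to $\mathbf{Q}$ rather than adaptively from the received word itself.
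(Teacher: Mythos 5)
Your proposal is correct and follows essentially the same route as the paper's proof: rewrite $S_{\mathbf{M}}=\sum_{k}\sum_{j}m_{j,k}\chi_{j,k}$ via (\ref{eqscorechi}), expand $\chi_{j,k}$ as a sum of indicators, apply linearity of expectation, and factor $\Pr(X_{i}=j,\hat{X}_{i}=k)=p_{i,j}q_{i,k}$. If anything, you are slightly more careful than the paper in explicitly justifying that last factorization by the independence of the erasure-pattern generation from the error pattern, a step the paper uses silently.
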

\begin{proof}
The proof follows from the following equations: \begin{align}
&\mathbb{E}[S_{\mathbf{M}}]= \mathbb{E}\left[\sum_{k=1}^{T}\sum_{j=1}^{\ell}m_{j,k}\chi_{j,k}\right]\label{eqES}\\ 
 &\phantom{\mathbb{E}[S_{\mathbf{M}}]} =  \sum_{k=1}^{T}\sum_{j=1}^{\ell}m_{j,k}\mathbb{E}[\chi_{j,k}]\nonumber\\
 &\phantom{\mathbb{E}[S_{\mathbf{M}}]} =  \sum_{k=1}^{T}\sum_{j=1}^{\ell}m_{j,k}\mathbb{E}\left[\sum_{i=1}^{N}\mathbbm{1}_{\{X_{i}=j,\hat{X}_{i}=k\}}\right]\nonumber\\
 &\phantom{\mathbb{E}[S_{\mathbf{M}}]} =  \sum_{k=1}^{T}\sum_{j=1}^{\ell}\sum_{i=1}^{N}m_{j,k}\Pr(X_{i}=j,\hat{X}_{i}=k)\nonumber\\
 &\phantom{\mathbb{E}[S_{\mathbf{M}}]} =  \sum_{k=1}^{T}\sum_{j=1}^{\ell}\sum_{i=1}^{N}m_{j,k}p_{i,j}q_{i,k}\nonumber\end{align}
where (\ref{eqES}) is implied by (\ref{eqscorechi}).
\end{proof}
Next, we propose a heuristic method to find the appropriate integer
$a$ to work with as follows.

\begin{alg}~
\begin{itemize}
\item Step 1: Start with $a=\mu$, using distortion measure $\Delta_{a}$
and distortion threshold $D_{a}$ to get the corresponding distribution
matrix $\mathbf{Q}$ as discussed above.
\item Step 2: Compute the expected score $\mathbb{E}[S_{\mathbf{M}}]$ using
(\ref{eq:Escore}). If $\Big\lceil\frac{\mathbb{E}[S_{\mathbf{M}}]}{K-1}\Big\rceil=a+1$
then output $a$ and stop. If not set $a\leftarrow a+1$ and return
to Step 1.
\end{itemize}
\end{alg}
\begin{remrk}
In simulations with small to moderate $\mu$, it is usually found that
$a$ is either $\mu$ or $\mu+1$. Typically, $\frac{\mathbb{E}[S_{\mathbf{M}}]}{K-1}>\mu$
and a unit increase of $a$ produces a small increase in $\frac{\mathbb{E}[S_{\mathbf{M}}]}{K-1}$.
\end{remrk}

\begin{remrk}
So far, we have considered only the allowable multiplicity types in
Definition \ref{def:AllowableTypes}. It is possible to obtain better
performance if we relax some constraints and allow multiplicity types
to be in the relaxed set \[
\mathcal{A}_{0}(\mu,\ell)\triangleq\left\{ (m_{1},m_{2},\ldots,m_{\ell})\Big|\begin{array}{c}
\sum_{j=1}^{\ell}m_{j}\leq\mu\end{array}\right\} .\]
In this case, some theoretical results, e.g., results in Lemma 1 and
Theorem 2, do not hold. However, this modification combined with the
heuristic method above can improve the decoding performance, especially
with large $\mu$. Specifically, we consider mASD$_{0}$-$\mu$ which
denotes our proposed multiple ASD decoding algorithm that only uses
multiplicity types $(0,0)$ and$(m_{1},m_{2})$ of the form $m_{1}+m_{2}=\mu$.
These multiplicity types form a subset of $\mathcal{A}_{0}(\mu,2)$.
The choice of $\ell=2$ is suggested by observations that top-$2$
decoding performs almost as good as top-$\ell$ decoding for $\ell>2$.
The integer $a$ used in mASD$_{0}$-$\mu$ is found through the heuristic
method. In Fig. \ref{fig:Fvsa-410}, simulations are conducted for
the (458,410) RS code using BPSK over an AWGN channel. For $\mu=10,$
it can again be observed that $a=\mu$ gives the best exponent. More
simulation results of this heuristic method can be seen in Section
\ref{sec:Simulation-results}.
\end{remrk}
\begin{figure}[t]
\centering{}\includegraphics[scale=0.85]{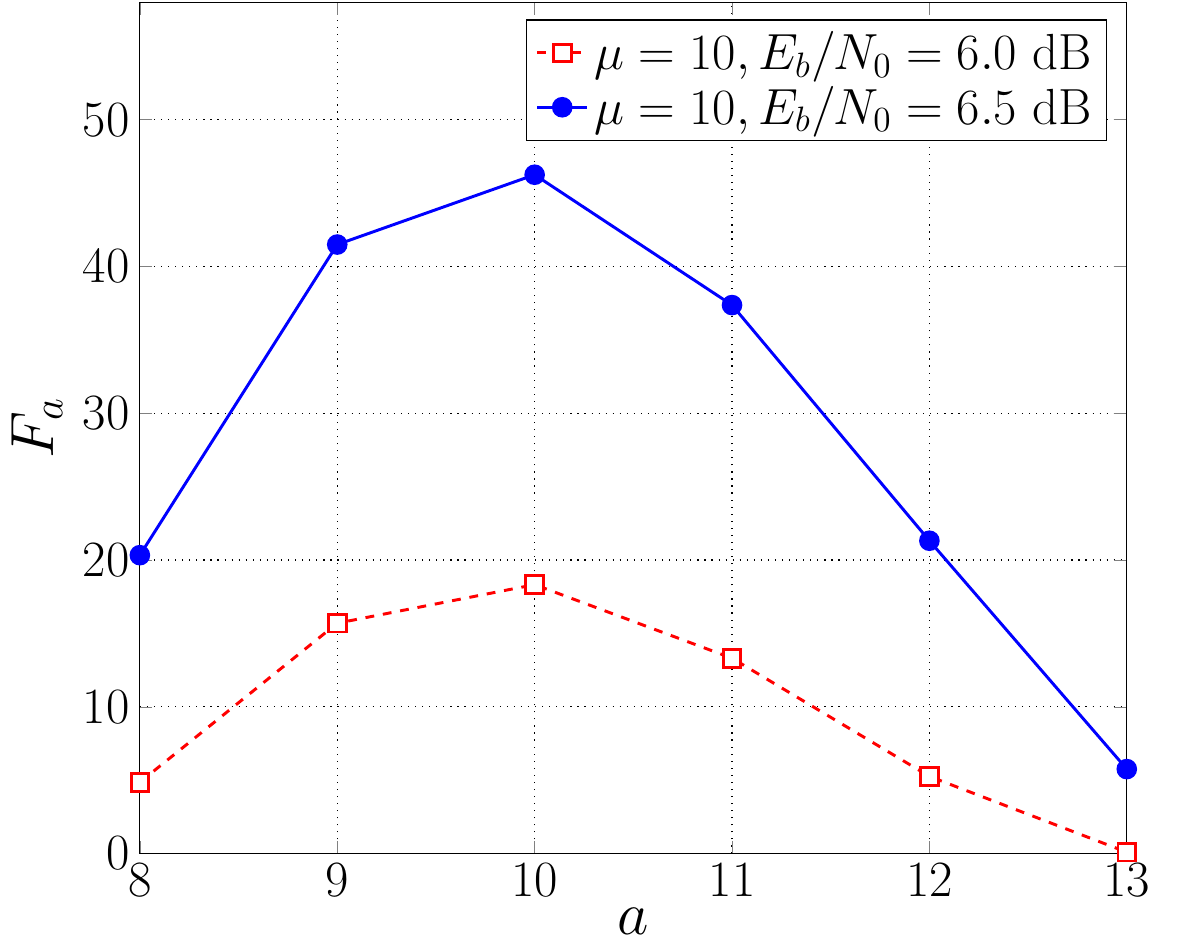}\caption{\label{fig:Fvsa-410} Plot of exponent $F_{a}$ versus $a$ for $\mu=10$
with a fixed rate $R=6$. The set of multiplicity types considered
is the relaxed set $\mathcal{A}_{0}(10,2)$. Simulations are conducted
for the (458,410) RS code over $\mathbb{F}_{2^{10}}$ using BPSK over
an AWGN channel at $E_{b}/N_{0}=6.0$ dB and $6.5$ dB. }

\end{figure}

\section{Closed-Form Analysis of RD and RDE Functions for Some Distortion
Measures\label{sec:Closed-Form-Analysis-of}}

\subsection{Closed-form RD function\label{sub:AnalyticalRD}}

For some simple distortion measures, we can compute the RD functions
analytically in closed form. First, we observe an error pattern as
a sequence of i.n.d. random source components. Then, we compute the
component RD functions at each index of the sequence and use convex
optimization techniques to allocate the total rate and distortion
to various components. This method converges to the solution faster
than the numerical method in Section \ref{sec:Computing-RD}. The
following two theorems describe how to compute the RD functions for
the simple distortion measures of Proposition \ref{prop:bma1} and
\ref{prop:BitASD}.
\begin{lemma}
\label{lem:OneVar}Consider a binary source $X$ where $\Pr(X=1)=p$
and $\Pr(X=0)=1-p$ . With the distortion measure
in (\ref{eq:dstfnBMA}), the rate-distortion function for this source
is%
\footnote{The binary entropy function is $H(u)\triangleq-u\log u-(1-u)\log(1-u)$.%
} \[R(D)=\left[H(p)-H(D+p-1)\right]^{+}.\]\end{lemma}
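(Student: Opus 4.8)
The plan is to reduce the computation to the classical binary Hamming rate--distortion function by exploiting the fact that subtracting a row-dependent constant from the distortion matrix shifts the distortion axis without changing the underlying test-channel optimization. Writing the distortion measure of (\ref{eq:dstfnBMA}) as a matrix, row $x=0$ has entries $(1,2)$ and row $x=1$ has entries $(1,0)$; I would subtract the row minima $c_0=1$ and $c_1=0$ to obtain $\delta'(x,\hat x)=\delta(x,\hat x)-c_x$, which is exactly the Hamming distortion $\delta'(0,0)=\delta'(1,1)=0$, $\delta'(0,1)=\delta'(1,0)=1$. For any test channel $w_{k|j}$ one has $\sum_{j,k}p_j w_{k|j}\delta'(j,k)=\sum_{j,k}p_j w_{k|j}\delta(j,k)-\sum_j p_j c_j$, and since $\sum_j p_j c_j=(1-p)\cdot 1+p\cdot 0=1-p$ while the mutual information $I(X;\hat X)$ depends only on $w$ and $p$, the feasible set and the objective coincide with those of the Hamming problem up to relabeling the distortion threshold. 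Hence $R(D)=R_{\mathrm H}(D-(1-p))=R_{\mathrm H}(D+p-1)$, where $R_{\mathrm H}$ denotes the rate--distortion function of the same binary source under Hamming distortion.

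It then remains to invoke the standard binary result $R_{\mathrm H}(u)=[H(p)-H(u)]^+$ (see \cite[Chapter 13]{Cover-1991}), which upon substituting $u=D+p-1$ yields the claim. For completeness I would record the two directions. The converse is $I(X;\hat X)=H(p)-H(X\mid\hat X)\ge H(p)-H(X\oplus\hat X)\ge H(p)-H(u)$, using $H(X\mid\hat X)\le H(X\oplus\hat X)$ (since $X$ is determined by $(X\oplus\hat X,\hat X)$) together with $\Pr(X\oplus\hat X=1)=\sum_{j,k}p_j w_{k|j}\delta'(j,k)\le u\le\tfrac12$ and the monotonicity of $H$ on $[0,\tfrac12]$. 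Achievability uses the reverse binary symmetric test channel $X=\hat X\oplus Z$ with $Z\sim\mathrm{Bernoulli}(u)$ independent of $\hat X$ and $\Pr(\hat X=1)$ chosen so that $\Pr(X=1)=p$; this yields $\mathbb{E}[\delta']=u$ and $I(X;\hat X)=H(p)-H(u)$, with the defining probabilities lying in $[0,1]$ precisely when $u\le\min\{p,1-p\}$.

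The only point requiring care---and the main, albeit minor, obstacle---is matching the admissible range of $D$ to the valid range $u\in[0,\min\{p,1-p\}]$ of the Hamming formula and verifying that the $[\cdot]^+$ absorbs the zero-rate regime for both $p\le\tfrac12$ and $p\ge\tfrac12$. One computes $D_{\min}=1-p$ (so $u=0$) and $D_{\max}=\min\{1,2(1-p)\}$; in either regime $u=D+p-1$ sweeps exactly $[0,\min\{p,1-p\}]$ as $D$ sweeps $[D_{\min},D_{\max}]$, so on this interval $[H(p)-H(D+p-1)]^+=H(p)-H(D+p-1)$. For $D\ge D_{\max}$ the minimizing test channel already attains zero rate by always emitting the cheaper of the two reproduction symbols, and the $[\cdot]^+$ returns $0$ accordingly, which completes the identification with the stated formula.
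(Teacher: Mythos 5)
Your proof is correct, but it takes a genuinely different route from the paper's. The paper proves the lemma by direct optimization: it writes out the constrained minimization of $I(X;\hat{X})$ subject to $\bar{p}w_{0|0}+pw_{0|1}+2\bar{p}w_{1|0}=D$ (with $\bar{p}=1-p$), forms the Lagrangian, uses the KKT conditions together with Berger's lemma to rule out corner solutions on $[D_{\min},D_{\max})$, and solves the resulting equations explicitly, obtaining $\alpha=\frac{D+p-1}{2-(D+p)}$ and the optimizing test-channel input distribution $q_{0}=\frac{2(1-p)-D}{3-2(D+p)}$, $q_{1}=\frac{1-D}{3-2(D+p)}$, whence $R=H(p)-H(D+p-1)$. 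You instead observe that the matrix in (\ref{eq:dstfnBMA}) differs from the Hamming distortion by row constants $c_{0}=1$, $c_{1}=0$, so every test channel's distortion is shifted by $\sum_{j}p_{j}c_{j}=1-p$ while $I(X;\hat{X})$ is untouched; hence $R(D)=R_{\mathrm{H}}(D+p-1)$, and the classical Bernoulli/Hamming formula (which you also reprove via the standard converse and reverse-BSC achievability) finishes the job. Your range bookkeeping is exactly right: $u=D+p-1$ sweeps $[0,\min\{p,1-p\}]$ as $D$ sweeps $[D_{\min},D_{\max}]$, matching the domain where the bracket is active. What the paper's heavier computation buys is the explicit optimizer: the $q_{0},q_{1}$ above are reused verbatim (rewritten with $\tilde{D}=D+p-1$) in Theorem \ref{thm:(BMA-RD)} and in Step 2a of the decoding algorithms, where the distribution $\mathbf{Q}$, not just the value of $R(D)$, is what is needed. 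Your proof actually recovers the same distribution implicitly, since your reverse-channel parameter $q=\frac{p-u}{1-2u}$ with $u=D+p-1$ equals the paper's $q_{1}$; it would be worth stating that identification explicitly so the lemma can still feed the algorithm. Finally, one cosmetic caveat shared equally by your argument and the paper's: the closed form $\left[H(p)-H(D+p-1)\right]^{+}$ is only meaningful for $D$ up to the largest achievable distortion $\max\{1,2(1-p)\}$; beyond that the bracket turns positive again while the true $R(D)$ remains $0$. This is a defect of the statement's shorthand, not of either proof.
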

\begin{proof}
See Appendix \ref{sec:AppLemRDmBM1}.\end{proof}
\begin{thm}
\label{thm:(BMA-RD)}(Conventional errors-and-erasures {}``mBM-1''
decoding) Let $p_{i,1}\triangleq\Pr(X_{i}=1)$ for $i=1,\ldots,N$.
The overall rate-distortion function is given by \[R(D)=\sum_{i=1}^{N}\left[H(p_{i,1})-H(\tilde{D}_{i})\right]^{+}\]
where $\tilde{D}_{i}\triangleq D_{i}+p_{i,1}-1$ and $\tilde{D}_{i}$
can be found be a reverse water-filling procedure (see \cite[Theorem 13.3.3]{Cover-1991}):\[
\tilde{D}_{i}=\begin{cases}
\lambda & \mbox{if}\,\,\lambda<\min\{p_{i,1},1-p_{i,1}\}\\
\min\{p_{i,1},1-p_{i,1}\} & \mbox{otherwise}\end{cases}\]
where $\lambda$ should be chosen so that \[\sum_{i=1}^{N}\tilde{D}_{i}=D+\sum_{i=1}^{N}p_{i,1}-N.\]
The $R(D)$ function can be achieved by the test-channel input-probability
distribution\[
q_{i,0}\triangleq\Pr(\hat{X}_{i}=0)=\frac{1-p_{i,1}-\tilde{D}_{i}}{1-2\tilde{D}_{i}}\]
and
\[q_{i,1}\triangleq\Pr(\hat{X}_{i}=1)=\frac{p_{i,1}-\tilde{D}_{i}}{1-2\tilde{D}_{i}}.\]
\end{thm}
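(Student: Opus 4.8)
The plan is to reduce the $N$-dimensional problem to $N$ scalar problems using the independence of the source components, solve each scalar problem with Lemma~\ref{lem:OneVar}, and then optimally allocate the distortion budget across components by reverse water-filling.

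First I would invoke the additivity of the rate-distortion function for independent sources. Because the components $X_i$ are independent and the distortion $d(x^N,\hat x^N)=\sum_i\delta(x_i,\hat x_i)$ is additive, the overall RD function decomposes as
\begin{equation*}
R(D)=\min_{\{D_i\}:\,\sum_i D_i=D}\sum_{i=1}^{N}R_i(D_i),
\end{equation*}
where $R_i(D_i)$ is the single-component RD function. This is the same decomposition underlying the factored Blahut algorithm (Algorithm~\ref{thm:(Factored-Blahut)}) and can be justified rigorously from \cite[Corollary 2.8.3]{Berger-1971}. Substituting $R_i(D_i)=[H(p_{i,1})-H(\tilde D_i)]^{+}$ from Lemma~\ref{lem:OneVar}, with $\tilde D_i\triangleq D_i+p_{i,1}-1$, the total-distortion constraint $\sum_i D_i=D$ becomes the single linear constraint $\sum_i\tilde D_i=D+\sum_i p_{i,1}-N$ on the shifted variables.

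Next I would solve the resulting convex program. Since $H$ is concave and increasing on $[0,\tfrac12]$ and the admissible range is $0\le\tilde D_i\le\min\{p_{i,1},1-p_{i,1}\}\le\tfrac12$, each term $[H(p_{i,1})-H(\tilde D_i)]^{+}$ is convex and nonincreasing in $\tilde D_i$, so the whole problem is convex and the KKT conditions are sufficient. Forming the Lagrangian with a single multiplier for the equality constraint and using $-\tfrac{d}{d\tilde D_i}H(\tilde D_i)=\log\tfrac{\tilde D_i}{1-\tilde D_i}$, the stationarity condition forces all interior components to share a common value $\tilde D_i=\lambda$, while components whose unconstrained optimum would exceed the cap are clipped at $\tilde D_i=\min\{p_{i,1},1-p_{i,1}\}$ (exactly where $H(\tilde D_i)=H(p_{i,1})$ and the individual rate has saturated to zero). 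This is precisely the reverse water-filling rule in the statement, with $\lambda$ tuned so that $\sum_i\tilde D_i=D+\sum_i p_{i,1}-N$; the equivalence to the standard binary-source water-filling of \cite[Theorem 13.3.3]{Cover-1991} is immediate once one notes that after the shift the objective is identical to that of $N$ independent binary sources under Hamming distortion.

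Finally, the test-channel marginals $q_{i,0}$ and $q_{i,1}$ would be read off from the optimal backward test channel of the scalar problem produced in the proof of Lemma~\ref{lem:OneVar}: requiring that the reproduction marginal, passed through that test channel, recovers the source marginal $p_{i,1}$ at distortion $D_i$ yields a small linear system whose solution is the claimed pair, and one checks the two limiting cases $\tilde D_i=0$ (giving $q_{i,1}=p_{i,1}$) and $\tilde D_i=\min\{p_{i,1},1-p_{i,1}\}$ (giving a constant reproduction) for consistency. I expect the main obstacle to be the careful handling of the cap in the water-filling --- in particular verifying that clipping at $\min\{p_{i,1},1-p_{i,1}\}$ is exactly where $R_i=0$ and that the $[\cdot]^{+}$ does not introduce spurious KKT multipliers --- together with confirming that the product form of the optimal test channel (guaranteed by the convergence argument in Section~\ref{sec:Computing-RD}) is consistent with the per-component marginals; the remaining algebra for $q_{i,0},q_{i,1}$ is routine.
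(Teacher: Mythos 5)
Your proposal is correct and takes essentially the same route as the paper's own proof: decompose the overall RD function into per-component functions via additivity for independent sources, substitute the scalar result of Lemma~\ref{lem:OneVar}, solve the distortion-allocation problem by a Lagrange-multiplier/Kuhn--Tucker argument that yields the reverse water-filling rule with clipping exactly where the component rate saturates to zero, and read the test-channel marginals off the scalar solution. The only differences are cosmetic, e.g.\ you carry out the optimization in the shifted variables $\tilde{D}_i$ from the start, whereas the paper works in $D_i$ and substitutes $\tilde{D}_{i}=D_{i}+p_{i,1}-1$ at the end.
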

\begin{proof}
See Appendix \ref{sec:ProofThmBMARD}.\end{proof}
\begin{thm}
\label{thm:(bASD-RD)}(Bit-level ASD {}``m-bASD'' decoding) Let
$r_{i,1}\triangleq\Pr(B_{i}=1)$ and $r_{i,0}\triangleq\Pr(B_{i}=0)$
for $i=1,\ldots,n$. The overall rate-distortion function in m-bASD
scheme is given by \[R(D)=\sum_{i=1}^{n}\left[R_{i}(\lambda)\right]^{+}\]
where
\begin{align*}
&R_{i}(\lambda)=H(r_{i,1})-H\left(\frac{1+\lambda}{1+\lambda+\lambda^{2}}\right)+\left(r_{i,1}-\frac{1+\lambda}{1+\lambda+\lambda^{2}}\right)H\left(\frac{\lambda}{1+\lambda}\right)
 \end{align*}
and the distortion component $D_{i}$ is given by\[
D_{i}=\begin{cases}
\frac{1+2\lambda+3\lambda^{2}}{1+\lambda+\lambda^{2}}-r_{i,1}\frac{1+2\lambda}{1+\lambda} & \mbox{if}\,\, R_{i}(\lambda)>0\\
\min\{1,3(1-r_{i,1})\} & \mbox{otherwise}\end{cases}\]
where $\lambda\in(0,1)$ should be chosen so that $\sum_{i=1}^{n}D_{i}=D$.
The $R(D)$ function can be achieved by the following test-channel
input-probability distribution\[
s_{i,0}\triangleq\Pr(\hat{B}_{i}=0)=\frac{(1+\lambda)-r_{i,1}(1+\lambda+\lambda^{2})}{1-\lambda^{2}}\]
and \[ s_{i,1}\triangleq\Pr(\hat{B}_{i}=1)=\frac{r_{i,1}(1+\lambda+\lambda^{2})-\lambda(1+\lambda)}{1-\lambda^{2}}.\]
\end{thm}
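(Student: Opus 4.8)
The plan is to follow the same two-stage strategy used for Theorem~\ref{thm:(BMA-RD)}: first solve the rate--distortion problem for a \emph{single} binary bit-source under the bit-level distortion measure of Proposition~\ref{prop:BitASD}, obtaining a parametric description of the optimal $(R_i,D_i)$ pair, and then combine the $n$ independent coordinates by an optimal (reverse water-filling) allocation of distortion. The decomposition into coordinates is justified exactly as in Section~\ref{sec:Computing-RD}: because the bit-error pattern $B^n$ has independent coordinates and the distortion is additive, the product structure of the optimal test channel (Algorithm~\ref{thm:(Factored-Blahut)}, or equivalently \cite[Cor.~2.8.3]{Berger-1971}) reduces the overall problem to minimizing $\sum_i R_i(D_i)$ subject to $\sum_i D_i = D$.

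For a single coordinate with $p_1 = r_{i,1}$ and $p_0 = 1-r_{i,1}$, I would write the standard KKT/parametric solution of the Lagrangian $R - tD$: the optimal forward test channel has the exponential form $w_{k|j} \propto q_k\,\lambda^{\delta_{jk}}$ with slope parameter $t \le 0$ and $\lambda \triangleq 2^{t} \in (0,1)$, where the $q_k$ are the (as yet unknown) optimal reproduction probabilities. Writing $Z_j = \sum_k q_k \lambda^{\delta_{jk}}$, the normalizers become $Z_0 = q_0\lambda + q_1\lambda^{3}$ and $Z_1 = q_0\lambda + q_1$ because $\delta(0,1) = 3$; this is precisely where the present computation departs from the $\delta(0,1)=2$ case of Lemma~\ref{lem:OneVar}, which collapses to a single binary entropy. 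The key step is to impose the reproduction-marginal consistency conditions $q_k = \sum_j p_j w_{k|j}$: these are two linear equations in $1/Z_0$ and $1/Z_1$ whose solution forces the factor $1+\lambda+\lambda^2$ to appear and yields exactly the claimed output distribution $q_0 = s_{i,0}$, $q_1 = s_{i,1}$ (one checks $s_{i,0}+s_{i,1}=1$ and that both are nonnegative precisely on the active range of $\lambda$), thereby also proving the final assertion of the theorem about the achieving distribution.

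Once $q_0, q_1$ are known, both the distortion $D_i = \sum_{j,k} p_j w_{k|j}\delta_{jk}$ and the rate, via the parametric identity $R_i = tD_i - \sum_j p_j \log Z_j$, are explicit functions of $\lambda$; substituting the solved $Z_j$ and simplifying (using $-p_0\log p_0 - p_1\log p_1 = H(r_{i,1})$) gives the stated $R_i(\lambda)$ and $D_i(\lambda)$. I would then handle the boundary regime: when the interior solution would make $R_i(\lambda) \le 0$, the rate is clamped to zero and the single best reproduction symbol is used, giving the saturation distortion $D_{\max} = \min\{1,\,3(1-r_{i,1})\}$, the two candidates being ``always erase'' (distortion $1$) and ``never erase'' (distortion $3(1-r_{i,1})$). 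This produces the $[\,\cdot\,]^{+}$ and the two-case formula for $D_i$.

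Finally, for the i.n.d.\ combination: since each $R_i(D_i)$ is convex and nonincreasing, Lagrangian optimality equalizes the slopes $R_i'(D_i) = t$ over all active coordinates, i.e.\ all active coordinates share one common parameter $\lambda = 2^{t}$ while the inactive ones are clamped at their $D_{\max}$; choosing $\lambda$ so that $\sum_i D_i(\lambda) = D$ then gives $R(D) = \sum_i [R_i(\lambda)]^{+}$ as claimed. I expect the main obstacle to be purely algebraic rather than conceptual: (i) solving the consistency equations and verifying the $1+\lambda+\lambda^2$ simplification that produces $s_{i,0}, s_{i,1}$, and (ii) rearranging the parametric rate $tD_i - \sum_j p_j \log Z_j$ into the compact two-entropy form $H(r_{i,1}) - H\!\left(\tfrac{1+\lambda}{1+\lambda+\lambda^2}\right) + \left(r_{i,1} - \tfrac{1+\lambda}{1+\lambda+\lambda^2}\right) H\!\left(\tfrac{\lambda}{1+\lambda}\right)$. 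Keeping track of which $\lambda$-range keeps all probabilities valid, and checking that this range meets the clamping threshold $D_{\max}$ continuously, is the only delicate bookkeeping.
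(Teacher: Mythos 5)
Your proposal is correct and takes essentially the same route as the paper's (sketched) proof: Berger's Chapter~2 parametric method applied to each binary bit-source under the distortion measure of Proposition~\ref{prop:BitASD}, followed by a Kuhn--Tucker equal-slope allocation across the i.n.d.\ components with a common $\lambda$, exactly as in the proof of Theorem~\ref{thm:(BMA-RD)}. Your writeup in fact supplies details the paper leaves implicit, namely the consistency equations yielding $s_{i,0},s_{i,1}$ and the clamping of inactive components at $\min\{1,3(1-r_{i,1})\}$, and these all check out.
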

\begin{proof}
[Sketch of proof] With the distortion measure in (\ref{eq:dstfnBGMD}),
using the method in \cite[Chapter 2]{Berger-1971} we can compute
the rate-distortion function components 
\begin{align*}&R_{i}(\lambda_{i})=H(r_{i,1})-H\left(\frac{1+\lambda_{i}}{1+\lambda_{i}+\lambda_{i}^{2}}\right)+\left(r_{i,1}-\frac{1+\lambda_{i}}{1+\lambda_{i}+\lambda_{i}^{2}}\right)H\left(\frac{\lambda_{i}}{1+\lambda_{i}}\right)
\end{align*}
where $\lambda_{i}$ is a Lagrange multiplier such that 
\[D_{i}=\frac{1+2\lambda_{i}+3\lambda_{i}^{2}}{1+\lambda_{i}+\lambda_{i}^{2}}-r_{i,1}\frac{1+2\lambda_{i}}{1+\lambda_{i}}\]
for each bit index $i$. Then, the Kuhn-Tucker conditions define the
overall rate allocation using the similar argument as in the proof
of Theorem \ref{thm:(BMA-RD)}.
\end{proof}

\subsection{Closed-form RDE function}

In this subsection, we consider the case mBM-1 whose distortion measure
is given in (\ref{eq:dstfnBMA}). We study the setup that RS codewords
defined over Galois field $\mathbb{F}_{m}$ are transmitted over the
$m$-ary symmetric channel ($m$-SC) which for each parameter $p$
can be modeled as\[
\Pr(r|c)=\begin{cases}
p & \mbox{if}\,\, r=c\\
(1-p)/(m-1) & \mbox{if}\,\, r\neq c\end{cases}.\]
Here, $c$ (resp. $r$) is the transmitted (resp. received) symbol
and $r,c\in\mathbb{F}_{m}$. For this channel model, we restrict our
attention to the range of $p$ where the received symbol is the most-likely
(i.e., $p>(1-p)/(m-1)$). Therefore, at each index $i$ of the codeword,
the hard-decision is also the received symbol and then it is correct
with probability $p$. Thus, we have $p_{i,1}=\Pr(X_{i}=1)=p$ for
every index $i$ of the error pattern $x^{N}$. That means, in this
context we have a source $x^{N}$ with i.i.d. binary components $x_{i}$.
Since the components $x_{i}$'s are i.i.d, we can treat each $x_{i}$
as a binary source $X$ with $\Pr(X=1)=p$ and first compute the RDE
function for this source $X$ as given by an analysis in Appendix
\ref{sec:RDEanalysis}. Based on this analysis, we obtain the following
lemmas and theorems for the mBM-1 decoding algorithm of RS codes over
an $m$-SC channel.
\begin{lemma}
\label{lem:Lemmah}Let $h(u)=H(u)-H(u+D-1)$ map $u\in\left[1-D,1-\frac{D}{2}\right)$
to $R$. Then, the inverse mapping of $h$, \[
h^{-1}:(0,H(1-D)]\to\left[1-D,1-\frac{D}{2}\right),\]
is well-defined and maps $R$ to $u$.\end{lemma}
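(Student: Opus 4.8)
The plan is to show that $h$ is a continuous, strictly decreasing function on $[1-D,1-\frac{D}{2})$ whose image is exactly $(0,H(1-D)]$; once this is in hand, the existence of $h^{-1}$ together with its stated domain and range follow immediately from the inverse-function property of a strictly monotone continuous map. Throughout I would invoke the implicit nondegeneracy condition $0<D<1$, which is forced by the requirement that the codomain $(0,H(1-D)]$ be a nonempty interval (equivalently $1-D\in(0,1)$, since $H(1-D)>0$ exactly in that range).

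First I would record the two endpoint values. Using $H(0)=0$ gives $h(1-D)=H(1-D)-H(0)=H(1-D)$, so the left endpoint attains the top value $H(1-D)$, which is included in the claimed codomain. Using the symmetry $H(x)=H(1-x)$ gives, as $u\to(1-\frac{D}{2})^-$, the limit $H(1-\frac{D}{2})-H(\frac{D}{2})=H(\frac{D}{2})-H(\frac{D}{2})=0$, approached but not attained since the right endpoint is excluded from the domain. This already pins the two ends of the image to $H(1-D)$ and $0$.

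Next I would establish strict monotonicity by differentiation. Recalling that, with logarithms to base $2$, $H'(x)=\log\frac{1-x}{x}$ is strictly decreasing on $(0,1)$ because $H$ is strictly concave, I would compute $h'(u)=H'(u)-H'(u+D-1)$. For $u$ in the open interval $(1-D,1-\frac{D}{2})$ both arguments $u$ and $u+D-1$ lie in $(0,1)$, and since $0<D<1$ we have $u+D-1<u$; strict monotonicity of $H'$ then yields $H'(u+D-1)>H'(u)$, hence $h'(u)<0$. Thus $h$ is strictly decreasing on its entire domain.

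Finally, combining continuity of $h$ on the closed interval $[1-D,1-\frac{D}{2}]$ with strict monotonicity and the intermediate value theorem, $h$ is a continuous strictly decreasing bijection from $[1-D,1-\frac{D}{2})$ onto $(0,H(1-D)]$, so $h^{-1}$ is well-defined on $(0,H(1-D)]$ with range $[1-D,1-\frac{D}{2})$ and sends $R$ back to $u$, as claimed. The main obstacle, modest as it is, is the monotonicity step: one must confirm that both arguments of $H'$ remain inside $(0,1)$ so that the concavity/derivative argument applies, and that the excluded endpoint at $1-\frac{D}{2}$ is treated by a one-sided limit rather than a direct evaluation, which is precisely what makes the image half-open at $0$.
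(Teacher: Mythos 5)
Your proof is correct, and it diverges from the paper's in one meaningful respect. The monotonicity half matches the paper: the paper also asserts that $h$ is strictly decreasing because its derivative is negative on $\left[1-D,1-\frac{D}{2}\right)$ (you supply the details the paper omits, namely $h'(u)=H'(u)-H'(u+D-1)<0$ from strict concavity of $H$ and $u+D-1<u$, with the careful observation that both arguments must lie in $(0,1)$, which is why you restrict the derivative argument to the open interval). Where you genuinely differ is surjectivity: the paper does not prove ontoness internally but instead cites its Appendix on the parametric RDE computation, where it is shown that as the Lagrange multipliers $s,t$ vary, $u$ sweeps $\left(1-D,1-\frac{D}{2}\right)$ and $R$ sweeps $(0,H(1-D))$. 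You instead evaluate the endpoints directly --- $h(1-D)=H(1-D)$ and $h(u)\to H\bigl(\tfrac{D}{2}\bigr)-H\bigl(\tfrac{D}{2}\bigr)=0$ as $u\to\bigl(1-\tfrac{D}{2}\bigr)^{-}$ --- and invoke the intermediate value theorem. Your route is self-contained and more elementary, making the lemma independent of the parametric machinery; the paper's route costs nothing extra given that the appendix analysis is needed anyway for the surrounding theorems, but it leaves the lemma logically entangled with that analysis. Your explicit identification of the standing hypothesis $0<D<1$ (forced by nonemptiness of the codomain) is also a point the paper leaves implicit.
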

\begin{proof}$h(u)$ is strictly decreasing since the derivative
is negative over $\left[1-D,1-\frac{D}{2}\right)$. Hence, the mapping
$h:\left[1-D,1-\frac{D}{2}\right)\rightarrow(0,H(1-D)]$ is one-to-one.
From the analysis in Appendix \ref{sec:RDEanalysis}, one can also
see that $h$ is onto.\end{proof}
\begin{thm}
Using mBM-1 with $2^{R}$ decoding attempts where $R\in(0,NH(1-\frac{D}{N})]$,
the maximum rate-distortion exponent that can be achieved is%
\footnote{The Kullback-Leibler divergence is $D_{KL}(u||p)\triangleq u\log\frac{u}{p}+(1-u)\log\frac{1-u}{1-p}$.%
}\begin{equation}
F=N\, D_{KL}\left(h^{-1}\left(\frac{R}{N}\right)\,\bigg|\bigg|\, p\right).\label{eq:ComputeF}\end{equation}
\end{thm}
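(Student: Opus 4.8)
The plan is to exploit the i.i.d.\ structure of the source and reduce the whole problem to a single-letter computation. Over the $m$-SC every index yields $p_{i,1}=p$, so the error pattern $X^N$ consists of $N$ i.i.d.\ binary components with $\Pr(X_i=1)=p$. By the additivity of the RDE for independent components (the factored Arimoto algorithm, Algorithm~\ref{pro:FactoredRDE}), the total unnormalized exponent equals $N$ times the per-letter exponent $\bar F(\bar R,\bar D)$ of a single binary source $X$ with $\Pr(X=1)=p$, evaluated at the per-letter rate $\bar R=R/N$ and per-letter threshold $\bar D=D/N$. Thus it suffices to show that $\bar F(\bar R,\bar D)=D_{KL}\!\left(h^{-1}(\bar R)\,\|\,p\right)$ and then multiply by $N$.

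To compute $\bar F$ I would use the Marton/Sanov-type characterization of the covering exponent that underlies the max-min RDE formula~(\ref{eq:RDEmaxmin}): the event $\{\min_{\hat x^N}d(X^N,\hat x^N)>D\}$ is dominated by the most likely source \emph{type} that cannot be $D$-covered by a codebook of rate $R$, i.e.\ a type whose own rate-distortion function at distortion $\bar D$ is at least $\bar R$. Writing a binary type as $u=\Pr(X=1)$ and invoking Lemma~\ref{lem:OneVar}, the rate-distortion function of the type-$u$ source under the distortion measure~(\ref{eq:dstfnBMA}) is $R_u(\bar D)=[H(u)-H(u+\bar D-1)]^+=[h(u)]^+$. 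Hence the exponent reduces to the constrained minimization
\[
\bar F(\bar R,\bar D)=\min_{u:\,h(u)\ge\bar R}D_{KL}(u\,\|\,p).
\]

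The remaining step is to solve this minimization explicitly. By Lemma~\ref{lem:Lemmah}, $h$ is strictly decreasing on $\left[1-\bar D,1-\tfrac{\bar D}{2}\right)$, so the feasibility constraint $h(u)\ge\bar R$ is equivalent to $u\le h^{-1}(\bar R)$, which is well-defined precisely for $\bar R\in(0,H(1-\bar D)]$; this is the origin of the admissible range $R\in(0,NH(1-D/N)]$. Since $D_{KL}(u\,\|\,p)$ is convex in $u$ with its unique minimum at $u=p$ and is strictly decreasing for $u<p$, the minimizer over the feasible half-line $\{u\le h^{-1}(\bar R)\}$ is attained at the boundary $u^\star=h^{-1}(\bar R)$ whenever $h^{-1}(\bar R)\le p$, which holds in the operating regime of interest (where $h^{-1}(\bar R)\in\left[1-\bar D,1-\tfrac{\bar D}{2}\right)$ stays below the reliable most-likely probability $p$). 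This gives $\bar F(\bar R,\bar D)=D_{KL}\!\left(h^{-1}(\bar R)\,\|\,p\right)$, and restoring the normalization $\bar R=R/N$, $\bar D=D/N$ and multiplying by $N$ yields exactly~(\ref{eq:ComputeF}).

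The main obstacle is the second step: rigorously justifying that the RDE equals the single-type covering exponent $\min_{u:\,R_u(\bar D)\ge\bar R}D_{KL}(u\,\|\,p)$. This requires verifying that the inner optimal test channel $w$ in the max-min~(\ref{eq:RDEmaxmin}) realizes the rate-distortion function of each candidate type, so that the mutual-information constraint in $\mathcal{P}_{R,D}$ collapses to $R_u(\bar D)\ge\bar R$, and confirming that the minimizing type lies on the boundary rather than on the $u>p$ side. I expect this to be supplied by the detailed single-source computation carried out in Appendix~\ref{sec:RDEanalysis}; given that, the additivity and boundary arguments above are routine.
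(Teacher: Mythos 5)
Your proposal reaches the correct result but by a genuinely different route from the paper's. The paper shares your first step (the i.i.d.\ reduction to per-letter rate $R/N$ and exponent $F/N$), but its core computation is the parametric Lagrangian analysis of Appendix~\ref{sec:RDEanalysis}: Blahut's max-min formula (\ref{eq:RDEmaxmin}) is optimized explicitly over the test-channel input $q_{1}$ for each pair of multipliers $(s,t)$, and Case~3 of that analysis produces the parametric family $R=h(u)$, $F=D_{KL}(u||p)$, which Lemma~\ref{lem:Lemmah} then inverts to give (\ref{eq:ComputeF}). You instead invoke the Marton/method-of-types characterization $\bar F(\bar R,\bar D)=\min_{u:\,R_{u}(\bar D)\geq\bar R}D_{KL}(u||p)$, identify $R_{u}(\bar D)=[h(u)]^{+}$ from Lemma~\ref{lem:OneVar}, and finish with a one-dimensional boundary argument using monotonicity of $h$ and of $D_{KL}(\cdot||p)$ on $[0,p]$. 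Your route is more transparent about which error types dominate the failure event, and it makes explicit the regime condition $h^{-1}(\bar R)\leq p$ (without it the true exponent is zero while the formula is positive --- a caveat the paper leaves implicit, hidden in the fact that Case~3 automatically yields $u\leq p$). What the paper's route buys is self-containedness and a by-product your argument does not supply: the optimizing test-channel distribution $q^{\star}$, which is precisely what Step~2b of the proposed algorithms needs for codebook generation. One correction to your closing paragraph: the ``main obstacle'' you flag --- that the RDE equals the single-type covering exponent --- is \emph{not} what Appendix~\ref{sec:RDEanalysis} establishes; the appendix never passes through types but computes the dual parametrically. That identity is Marton's theorem, available off the shelf in \cite{Marton-it74} (which the paper cites), so your proof should rest on that citation rather than on the appendix. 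Finally, a small imprecision that does not affect the conclusion: for types $u<1-\bar D$ the function $R_{u}(\bar D)$ is undefined rather than equal to $[h(u)]^{+}$ (there $\bar D<D_{\min}$); such types are uncoverable and hence feasible, but since $D_{KL}(\cdot||p)$ is decreasing on $[0,p]$ they lie strictly deeper in the KL ball than $h^{-1}(\bar R)$, so your boundary minimizer stands.
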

\begin{proof}
First, note that in our context where we have a source sequence $x^{N}$
of $N$ i.i.d. source components, the rate and exponent for each source
component are now $\frac{R}{N}$ and $\frac{F}{N}$. From Case 3 in
Appendix \ref{sec:RDEanalysis} and from Lemma \ref{lem:Lemmah},
we have \[
\frac{F}{N}=D_{KL}(u||p)=D_{KL}\left(h^{-1}\left(\frac{R}{N}\right)\,\bigg|\bigg|\, p\right)\]
and the theorem follows.\end{proof}
\begin{lemma}
Let $g(u)=D_{KL}(u||p)$ map $u\in[1-D,p]$ to $F$. Then, the inverse
mapping of $g$, \[
g^{-1}:[0,D_{KL}(1-D\,||\, p)]\rightarrow[1-D,p]\]
is well-defined and maps $F$ to $u$.\end{lemma}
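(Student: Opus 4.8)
The plan is to mirror the structure of Lemma~\ref{lem:Lemmah}, where the same argument was carried out for the map $h$. The final statement asserts that $g(u)=D_{KL}(u||p)$, restricted to the interval $[1-D,p]$, is a well-defined bijection onto $[0,D_{KL}(1-D\,||\,p)]$, so that its inverse $g^{-1}$ exists on that range. First I would recall the standard fact that, for a fixed $p\in(0,1)$, the Kullback-Leibler divergence $D_{KL}(u||p)=u\log\frac{u}{p}+(1-u)\log\frac{1-u}{1-p}$ is a strictly convex function of $u$ on $(0,1)$ with a unique minimum of $0$ attained at $u=p$. This immediately gives the endpoint values: $g(p)=0$ and $g(1-D)=D_{KL}(1-D\,||\,p)$.

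Next I would establish monotonicity on the relevant interval. Differentiating gives $g'(u)=\log\frac{u}{p}-\log\frac{1-u}{1-p}=\log\frac{u(1-p)}{p(1-u)}$, which is negative precisely when $u<p$. Since the interval of interest is $[1-D,p]$ and (in the operating regime $p>(1-p)/(m-1)$ with $p>1-D$, consistent with the setup preceding Lemma~\ref{lem:Lemmah}) we have $1-D\le u\le p$, the derivative is nonpositive throughout, strictly negative on $[1-D,p)$. Hence $g$ is strictly decreasing on $[1-D,p]$, which makes it one-to-one there. I would note that this is the exact analogue of the derivative computation used to prove $h$ strictly decreasing in Lemma~\ref{lem:Lemmah}.

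Finally, surjectivity onto $[0,D_{KL}(1-D\,||\,p)]$ follows from continuity of $g$ together with the Intermediate Value Theorem: $g$ is continuous on the closed interval $[1-D,p]$, takes the value $D_{KL}(1-D\,||\,p)$ at the left endpoint and $0$ at the right endpoint, so it attains every value in between. Combining injectivity (from strict monotonicity) with surjectivity (from the IVT) shows $g:[1-D,p]\to[0,D_{KL}(1-D\,||\,p)]$ is a bijection, so the inverse mapping $g^{-1}$ from $F$ back to $u$ is well-defined, as claimed.

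I do not anticipate a genuine obstacle here; the result is a routine convexity-and-monotonicity argument and the proof should be a short paragraph, exactly parallel to the proof of Lemma~\ref{lem:Lemmah}. The only point requiring a word of care is confirming that the interval $[1-D,p]$ lies entirely to the left of (or at) the minimizer $u=p$, so that the sign of $g'$ is controlled; this is guaranteed by the channel assumptions already in force, under which $1-D\le p$.
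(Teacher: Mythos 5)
Your proof is correct and follows essentially the same route as the paper's: both establish that $g$ is strictly decreasing on $[1-D,p]$ (the paper via strict convexity with minimum at $u=p$, you via the explicit derivative $g'(u)=\log\frac{u(1-p)}{p(1-u)}$, which are the same argument) and then conclude bijectivity. The only minor difference is that for surjectivity the paper appeals to the parametric analysis in its appendix, whereas your continuity-plus-intermediate-value-theorem argument is self-contained; this is a slight improvement in rigor rather than a different approach.
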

\begin{proof}
We first see that $g(u)$ is a strictly convex function and achieves
minimum value at $u=p$ and therefore $g(u)$ is strictly decreasing
over $[1-D,p]$. Thus, the mapping $g:[1-D,p]\to[0,D_{KL}(1-D\,||\, p)]$
is one-to-one. From the analysis in Appendix \ref{sec:RDEanalysis},
one can also see that $g$ is onto.\end{proof}
\begin{thm}
In order to achieve a rate-distortion exponent of $F\in\left[0,N\, D_{KL}\left(1-D\,||\, p\right)\right]$,
the minimum number of decoding attempts required for mBM-1 is $2^{R}$
where\[
R=N\left[H\left(g^{-1}\left(\frac{F}{N}\right)\right)-H\left(g^{-1}\left(\frac{F}{N}\right)+\frac{D}{N}-1\right)\right]^{+}.\]
\end{thm}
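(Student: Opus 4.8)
The plan is to obtain this statement by directly inverting the rate-to-exponent relationship already established in equation~(\ref{eq:ComputeF}). Writing $\bar R = R/N$, $\bar F = F/N$ and $d = D/N$ for the per-symbol rate, exponent and distortion threshold, equation~(\ref{eq:ComputeF}) says exactly that $\bar F = D_{KL}\!\left(h^{-1}(\bar R)\,\|\,p\right) = g\!\left(h^{-1}(\bar R)\right)$, where $g(u)=D_{KL}(u\|p)$ and $h(u)=H(u)-H(u+d-1)$ are the single-source maps from Lemma~\ref{lem:Lemmah} and from the subsequent lemma defining $g^{-1}$. Since the theorem asks for the \emph{minimum} number $2^{R}$ of decoding attempts achieving a target exponent $F$, the first thing I would record is that $\bar F$ is strictly increasing in $\bar R$ on the relevant interval: $h^{-1}$ is strictly decreasing (Lemma~\ref{lem:Lemmah}), so $u=h^{-1}(\bar R)$ decreases from $1-\tfrac{d}{2}$ toward $1-d$ as $\bar R$ grows, while $g(u)=D_{KL}(u\|p)$ is strictly decreasing in $u$ on $[1-d,p]$ (the branch used in the lemma defining $g^{-1}$). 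The composition $\bar R\mapsto\bar F$ is therefore a strictly increasing bijection, so a unique minimal $R$ realizes any admissible $F$, and inverting the monotone relation is legitimate.

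With monotonicity in hand, the core of the argument is a two-step inversion. Starting from $\bar F = g(h^{-1}(\bar R))$, I would apply $g^{-1}$ (well-defined on $[0,D_{KL}(1-d\|p)]$ by the lemma defining $g^{-1}$) to get $h^{-1}(\bar R)=g^{-1}(\bar F)$, and then apply $h$ (equivalently, use that $h$ and $h^{-1}$ are mutual inverses by Lemma~\ref{lem:Lemmah}) to obtain $\bar R = h\!\left(g^{-1}(\bar F)\right)$. Substituting the explicit form of $h$ and then rescaling by $N$ gives
\[
R = N\left[H\!\left(g^{-1}\!\left(\tfrac{F}{N}\right)\right)-H\!\left(g^{-1}\!\left(\tfrac{F}{N}\right)+\tfrac{D}{N}-1\right)\right],
\]
which is precisely the claimed expression once the truncation $[\,\cdot\,]^{+}$ is appended.

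The step I expect to require the most care is matching the domains so that the two inverses may be composed, and justifying the $[\,\cdot\,]^{+}$. Concretely, $g^{-1}$ outputs values in $[1-d,p]$, whereas $h^{-1}$ produces values in $[1-d,1-\tfrac{d}{2})$; for the identity $h^{-1}(\bar R)=g^{-1}(\bar F)$ to hold I must confine attention to the common sub-interval where $u\le p$, which is exactly the regime analyzed in Appendix~\ref{sec:RDEanalysis} (Case~3) and is what forces the admissible range $F\in[0,N\,D_{KL}(1-d\|p)]$: the endpoint $u=1-d$ gives the largest exponent $D_{KL}(1-d\|p)$, while $u=p$ gives $\bar F=0$. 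Finally, when the target $F$ is so small that $g^{-1}(F/N)$ returns a value $u$ with $h(u)\le 0$, the single-pattern codebook ($R=0$) already attains at least that exponent; appending $[\,\cdot\,]^{+}$ records that no positive rate is needed in this degenerate case, completing the proof.
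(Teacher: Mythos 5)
Your proposal is correct and takes essentially the same route as the paper: the paper's proof likewise reduces to per-component quantities $R/N$, $D/N$, $F/N$ and reads the formula off the parametric cases of Appendix~\ref{sec:RDEanalysis} --- Case~3 together with the two inverse-function lemmas gives the positive-rate branch $\bar{R}=h\left(g^{-1}(\bar{F})\right)$, which is exactly your inversion of (\ref{eq:ComputeF}), while Case~2 supplies the zero-rate regime you invoke to justify the $[\,\cdot\,]^{+}$. Your explicit monotonicity argument for minimality and the domain-matching discussion simply spell out steps the paper compresses into ``combining all the cases \ldots{} the theorem follows.''
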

\begin{proof}
We also note that the rate, distortion and exponent for each source
component are $\frac{R}{N},\frac{D}{N}$ and $\frac{F}{N}$ respectively.
Combining all the cases in Appendix \ref{sec:RDEanalysis}, we have
\[
\frac{R}{N}=\left[H\left(g^{-1}\left(\frac{F}{N}\right)\right)-H\left(g^{-1}\left(\frac{F}{N}\right)+\frac{D}{N}-1\right)\right]^{+}\]
and the theorem follows. 
\end{proof}
\begin{figure}[t]
\centering{}\includegraphics[scale=0.85]{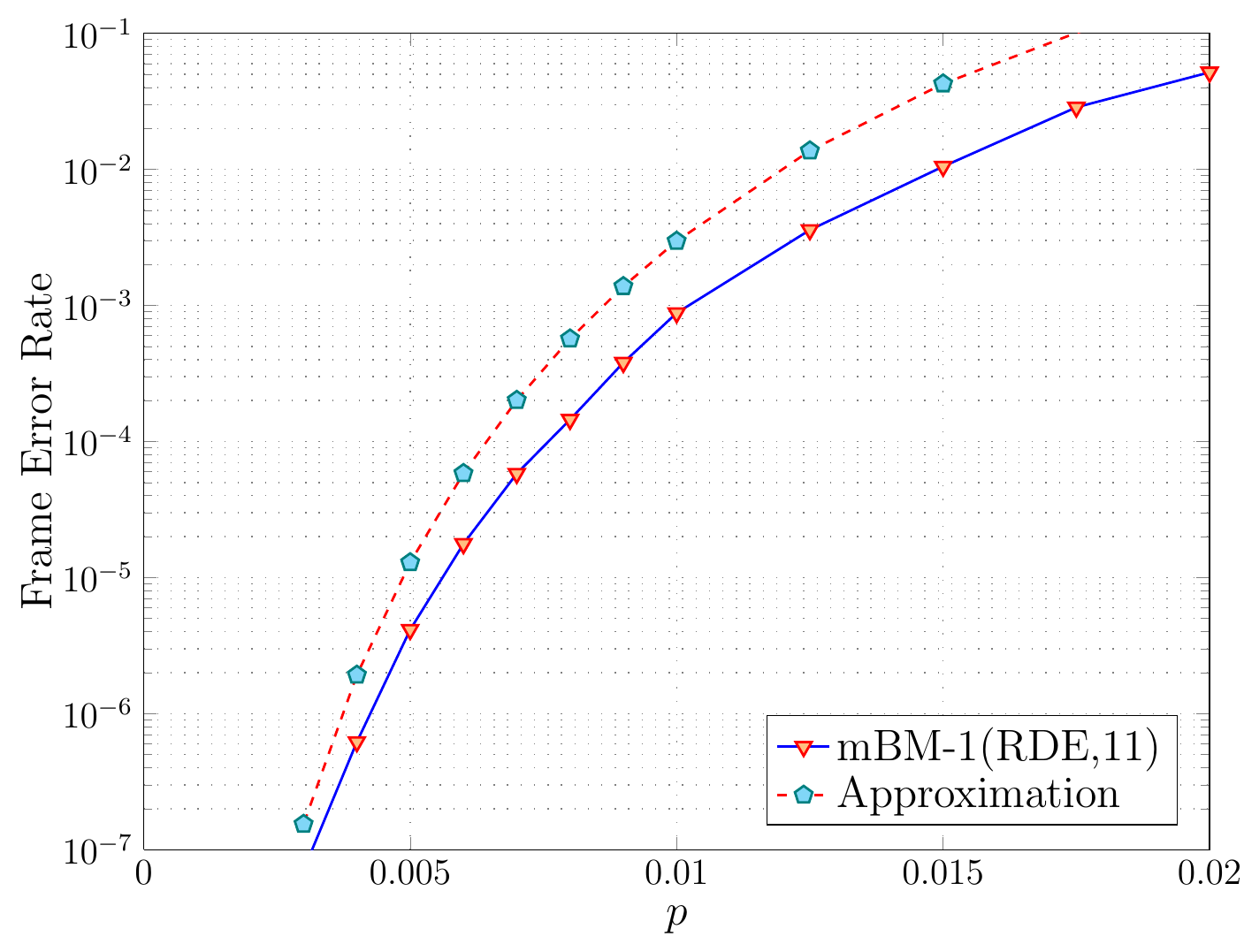}\caption{\label{fig:simqSC} Performance of mBM-1(RDE,11) and its approximation
$2^{-F}$ where $F$ is given in (\ref{eq:ComputeF}) for the (255,239)
RS code over an $m$-SC($p$) channel.}

\end{figure}

\begin{remrk}
In Fig. \ref{fig:simqSC}, we simulate the performance of mBM-1(RDE,11)
for the (255,239) RS code over an $m$-SC channel. One curve reflects
the simulated frame-error rate (FER) and the other is the approximation
derived from $2^{-F}$ where $F$ is given in (\ref{eq:ComputeF})
with $R=11$.
\end{remrk}

\section{Some Extensions \label{sec:Ext-and-Gen}}

\subsection{Erasure patterns using covering codes}

The RD framework we use is most suitable when $N\rightarrow\infty$.
For a finite $N$, choosing random codes for only a few LRPs can be
risky. We can instead use good covering codes to handle these LRPs.
In the scope of covering problems, one can use an $\ell$-ary $t_{c}$-covering
code (e.g., a perfect Hamming or Golay code) with covering radius
$t_{c}$ to cover the whole space of $\ell$-ary vectors of the same
length. The covering may still work well if the distortion measure
is close to, but not exactly equal to the Hamming distortion. The
method of using covering codes in the LRPs was proposed earlier in
\cite{Tokushige-ieice03} to choose the test patterns in iterative
bounded distance decoding algorithms for binary linear block codes.

In order take care of up to the $\ell$ most likely symbols at each
of the $n_{c}$ LRPs of an $(N,K)$ RS, we consider an $(n_{c},k_{c})$
$\ell$-ary $t_{c}$-covering code whose codeword alphabet is $\mathbb{Z}_{\ell+1}\setminus\{0\}=\{1,2,\ldots,\ell\}.$
Then, we give a definition of the (generalized) error patterns and
erasure patterns for this case. In order to draw similarities between
this case and the previous cases, we still use the terminology {}``generalized
erasure pattern'' and shorten it to erasure pattern even if errors-only
decoding is used. For errors-only decoding, Condition \ref{con:BMAerr-n-era}\emph{
}for successful decoding becomes\[
\nu<\frac{1}{2}(N-K+1).\]

\begin{definitn}
(Error patterns and erasure patterns for errors-only decoding) Let
us define $x^{N}\in\mathbb{Z}_{\ell+1}^{N}$ as an error pattern where,
at index $i$, $x_{i}=j$ implies that the $j$-th most likely symbol
is correct for $j\in\{1,2,\ldots\ell\}$, and $x_{i}=0$ implies none
of the first $\ell$ most likely symbols is correct. Let $\hat{x}^{N}\in\{1,2,\ldots,\ell\}^{N}$
be an erasure pattern where, at index $i$, $\hat{x}_{i}=j$ implies
that the $j$-th most likely symbol is chosen as the hard-decision
symbol for $j\in\{1,2,\ldots,\ell\}$.\end{definitn}
\begin{prop}
If we choose the \emph{letter-by-letter} distortion measure $\delta:\mathbb{Z}_{\ell+1}\times\mathbb{Z}_{\ell+1}\setminus\{0\}\rightarrow\mathbb{R}_{\geq0}$
defined by $\delta(x,\hat{x})=[\Delta]_{x,\hat{x}}$ in terms of the
$(\ell+1)\times\ell$ matrix\begin{equation}
\Delta=\left(\begin{array}{cccc}
1 & 1 & \ldots & 1\\
0 & 1 & \ldots & 1\\
1 & 0 & \ldots & 1\\
\vdots & \vdots & \ddots & \vdots\\
1 & 1 & \ldots & 0\end{array}\right)\label{eq:PFdst}\end{equation}
then the condition for successful errors-only decoding then becomes\begin{equation}
d(x^{N},\hat{x}^{N})<\frac{1}{2}(N-K+1).\label{eq:SCPF}\end{equation}
\end{prop}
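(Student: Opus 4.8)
The plan is to follow the template of the proofs of Proposition~\ref{prop:bma1} and Proposition~\ref{thm:ExtBMA-1}, reducing the claim to a single counting identity. First I would reintroduce the counts $\chi_{j,k}\triangleq|\{i : x_i = j,\ \hat{x}_i = k\}|$ for $j\in\{0,1,\ldots,\ell\}$ and $k\in\{1,\ldots,\ell\}$, so that each of the $N$ codeword positions contributes to exactly one $\chi_{j,k}$. Reading off the entries of the displayed matrix $\Delta$, the distortion measure is just a Hamming-type penalty: $\delta(x,\hat{x})=[\Delta]_{x,\hat{x}}=0$ exactly when $x=\hat{x}$ (both lying in $\{1,\ldots,\ell\}$), and $\delta(x,\hat{x})=1$ otherwise. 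Summing the letter-by-letter contributions then gives $d(x^N,\hat{x}^N)=\sum_{k=1}^{\ell}\sum_{j=0,\,j\neq k}^{\ell}\chi_{j,k}$.

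Next I would identify this sum with the error count $\nu$. Because this is errors-only decoding there are no erasures, so at each index~$i$ the decoder takes the $\hat{x}_i$-th most likely symbol as its hard decision; by the definitions this symbol is correct precisely when $x_i=\hat{x}_i$ and is in error otherwise. Hence $\nu=|\{i:x_i\neq\hat{x}_i\}|=\sum_{k=1}^{\ell}\sum_{j=0,\,j\neq k}^{\ell}\chi_{j,k}=d(x^N,\hat{x}^N)$. Setting $e=0$ in Condition~\ref{con:BMAerr-n-era} collapses $2\nu+e<N-K+1$ to the errors-only threshold $\nu<\tfrac{1}{2}(N-K+1)$ stated above, and substituting $\nu=d(x^N,\hat{x}^N)$ yields inequality~(\ref{eq:SCPF}).

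I do not expect any real obstacle here; the argument is pure bookkeeping. The one point worth flagging is the alphabet asymmetry, $\mathcal{X}=\mathbb{Z}_{\ell+1}$ versus $\hat{\mathcal{X}}=\mathbb{Z}_{\ell+1}\setminus\{0\}$: an erasure-letter value $\hat{x}\in\{1,\ldots,\ell\}$ can never equal the error-letter value $x=0$, which is exactly what forces the all-ones top row of $\Delta$ and guarantees that the case in which none of the top-$\ell$ symbols is correct always contributes unit distortion. Once this is noted, the identity $d(x^N,\hat{x}^N)=\nu$ is immediate and the proposition follows.
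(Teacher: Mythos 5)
Your proof is correct and follows essentially the same route as the paper: the paper's proof is exactly the one-line identity $d(x^{N},\hat{x}^{N})=\sum_{k=1}^{\ell}\sum_{j=0,\,j\neq k}^{\ell}\chi_{j,k}=\nu$, combined with the errors-only threshold $\nu<\tfrac{1}{2}(N-K+1)$ stated just before the proposition. Your additional remarks (setting $e=0$ in Condition~\ref{con:BMAerr-n-era} and noting the alphabet asymmetry that forces the all-ones top row of $\Delta$) simply spell out bookkeeping the paper leaves implicit.
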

\begin{proof}
It follows directly from \[d(x^{N},\hat{x}^{N})=\sum_{k=1}^{\ell}\sum_{j=0,j\neq k}^{\ell}\chi_{j,k}=\nu.\]\end{proof}
\begin{remrk}
If we delete the first row which corresponds to the case where none
of the first $\ell$ most likely symbols is correct then the distortion
measure is exactly the Hamming distortion. 
\end{remrk}

\paragraph*{Split covering approach}

We can break an error pattern $x^{N}$ into two sub-error patterns
$x^{LRPs}\triangleq x_{\sigma(1)}x_{\sigma(2)}\ldots x_{\sigma(n_{c})}$
of $n_{c}$ least reliable positions and $x^{MRPs}\triangleq x_{\sigma(n_{c}+1)}\ldots x_{\sigma(N)}$
of $N-n_{c}$ most reliable positions. Similarly, we can break an
erasure pattern $\hat{x}^{N}$ into two sub-erasure patterns $\hat{x}^{LRPs}\triangleq\hat{x}_{\sigma(1)}\hat{x}_{\sigma(2)}\ldots\hat{x}_{\sigma(n_{c})}$
and $\hat{x}^{MRPs}\triangleq\hat{x}_{\sigma(n_{c}+1)}\ldots\hat{x}_{\sigma(N)}$.
Let $z_{n_{c}}$ be the number of positions in the $n_{c}$ LRPs where
none of the first $\ell$ most likely symbols is correct, or \[z_{n_{c}}=\left|\left\{ i=1,2,\ldots,n_{c}:x_{\sigma(i)}=0\right\} \right|.\]
If we assign the set of all sub-error patterns $\hat{x}^{LRPs}$ to
be an $(n_{c},k_{c})$ $t_{c}$-covering code then \[d(x^{LRPs},\hat{x}^{LRPs})\leq t_{c}+z_{n_{c}}\]
because this covering code has covering radius $t_{c}$. Since \[d(x^{N},\hat{x}^{N})=d(x^{LRPs},\hat{x}^{LRPs})+d(x^{MRPs},\hat{x}^{MRPs}),\]
in order to increase the probability that the condition (\ref{eq:SCPF})
is satisfied we want to make $d(x^{MRPs},\hat{x}^{MRPs})$ as small
as possible by the use of the RD approach. The following proposition
summarizes how to generate a set of $2^{R}$ erasure patterns for
multiple runs of errors-only decoding.
\begin{prop}
In each erasure pattern, the letter sequence at $n_{c}$ LRPs is set
to be a codeword of an $(n_{c},k_{c})$ $\ell$-ary $t_{c}-$covering
code. The letter sequence of the remaining $N-n_{c}$ MRPs is generated
randomly by the RD method (see Section \ref{sec:Proposed-Algorithm})
with rate $R_{MRPs}=R-k_{c}\log_{2}\ell$ and the distortion measure
in (\ref{eq:PFdst}). Since this covering code has $\ell^{k_{c}}$
codewords, the total rate is $R_{MRPs}+\log_{2}\ell^{k_{c}}=R.$\end{prop}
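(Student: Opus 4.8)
The plan is to treat the construction as a Cartesian product of two independent codebooks — a deterministic covering code on the $n_{c}$ LRPs and a random RD codebook on the $N-n_{c}$ MRPs — and then account for the sizes. First I would invoke the additive decomposition of the distortion already noted in the split covering discussion, namely $d(x^{N},\hat{x}^{N})=d(x^{LRPs},\hat{x}^{LRPs})+d(x^{MRPs},\hat{x}^{MRPs})$, which holds because the distortion measure in (\ref{eq:PFdst}) is letter-by-letter and the LRP and MRP indices partition $\{1,2,\ldots,N\}$ under the permutation $\sigma$. This decomposition is what licenses designing the two pieces of each erasure pattern separately while still controlling the total distortion against the threshold in (\ref{eq:SCPF}).

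Next I would count the erasure patterns. By construction, the set $\mathcal{B}$ consists of every pattern obtained (after reassembly through $\sigma^{-1}$) by concatenating one codeword of the $(n_{c},k_{c})$ $\ell$-ary covering code, placed on the $n_{c}$ LRPs, with one reproduction sequence of the RD-generated codebook $\mathcal{B}_{MRPs}$, placed on the $N-n_{c}$ MRPs. Since the two choices are made independently, $|\mathcal{B}|=\ell^{k_{c}}\cdot|\mathcal{B}_{MRPs}|$. The covering code has exactly $\ell^{k_{c}}$ codewords, and the RD method on the MRPs is run at rate $R_{MRPs}$, so $|\mathcal{B}_{MRPs}|=2^{R_{MRPs}}$. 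Substituting $R_{MRPs}=R-k_{c}\log_{2}\ell$ then gives $|\mathcal{B}|=\ell^{k_{c}}\,2^{R-k_{c}\log_{2}\ell}=2^{k_{c}\log_{2}\ell}\,2^{R-k_{c}\log_{2}\ell}=2^{R}$, which is the claimed total rate; this is just the statement that the log-sizes of the two component codebooks add.

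Finally, to confirm the construction serves its purpose, I would combine the covering guarantee $\min_{\hat{x}^{LRPs}}d(x^{LRPs},\hat{x}^{LRPs})\leq t_{c}+z_{n_{c}}$ — valid because the code has covering radius $t_{c}$ and $z_{n_{c}}$ counts the LRP positions where none of the top-$\ell$ symbols is correct — with the RD guarantee that $d(x^{MRPs},\hat{x}^{MRPs})$ is made small with high probability by the random codebook $\mathcal{B}_{MRPs}$, so that their sum falls below $\tfrac{1}{2}(N-K+1)$ for typical error patterns and (\ref{eq:SCPF}) is met. The only genuinely subtle point — what I would flag as the main obstacle — is that the measure in (\ref{eq:PFdst}) is \emph{not} exactly the Hamming distortion (because of its first row), so the covering-radius bound holds only up to the correction term $z_{n_{c}}$; one must argue that on the LRPs the contribution of positions with $x_{\sigma(i)}=0$ is small enough that the covering code still acts as a near-perfect cover over the error patterns that matter.
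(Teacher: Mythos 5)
Your proposal is correct and follows essentially the same route as the paper: the paper's own justification is just the preceding ``split covering'' discussion (the letter-by-letter additivity $d(x^{N},\hat{x}^{N})=d(x^{LRPs},\hat{x}^{LRPs})+d(x^{MRPs},\hat{x}^{MRPs})$, the covering bound $d(x^{LRPs},\hat{x}^{LRPs})\leq t_{c}+z_{n_{c}}$, and minimizing the MRP distortion by the RD method) together with the same product-of-codebook-sizes count $\ell^{k_{c}}\cdot 2^{R_{MRPs}}=2^{R}$. Your closing caveat about the first row of the distortion matrix making it only near-Hamming is also exactly the point the paper itself flags in the remark following the proposition.
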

\begin{example}
For a (7,4,3) binary Hamming code which has covering radius $t_{c}=1$,
we take care of the $2$ most likely symbols at each of the 7 LRPs.
We see that $1001001$ is a codeword of this Hamming code and then
form erasure patterns $1001001\hat{x}_{8}\hat{x}_{9}\ldots\hat{x}_{n}$
with assumption that the positions are written in increasing reliability
order. The $2^{R-4}$ sub-erasure patterns $\hat{x}_{8}\hat{x}_{9}\ldots\hat{x}_{n}$
are generated randomly using the RD approach with rate $(R-4)$.
\end{example}

\begin{remrk}
While it also makes sense to use a covering codes for the $n_{c}$
LRPs of the erasure patterns and set the rest to be letter $1$ (i.e.,
chose the most likely symbol as the hard-decision), our simulation
results shows that the performance can usually be improved by using
a combination of a covering code and a random (i.e., generated by
the RD approach) code. More discussions are presented in Section \ref{sec:Simulation-results}.
\end{remrk}

\subsection{A single decoding attempt}

In this subsection, we investigate a special case of our proposed
RDE framework when $R=0$ (i.e., the set of erasure patterns consists
of one pattern). In this case, our proposed approach is related to
another line of work where one tries to design a good erasure pattern
for a single BM decoding or a good multiplicity matrix for a single
ASD decoding \cite{Parvaresh-isit03,Ratnakar-it05,El-Khamy-dimacs05,Das-isit09}.
We will see that the RDE approach for $R=0$ is quite similar to optimizing
a Chernoff bound \cite{Ratnakar-it05,El-Khamy-dimacs05} or using
the method of types \cite{Das-isit09}. The main difference is that
this approach starts from Condition \ref{con:ASDcond} rather than
its large multiplicity approximation.

\begin{lemma}
\label{lem:Rate0} When rate $R=0$, the distribution matrix $\mathbf{Q}$
that optimizes the RDE/RD function consists of only binary entries.
Consequently, the random codebook using the proposed RDE approach
(the set of erasure patterns) becomes a single deterministic pattern. \end{lemma}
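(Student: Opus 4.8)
The plan is to treat the two claims of the lemma together by exploiting the fact that $R=0$ forces the codebook $\mathcal{B}$ to contain a single pattern $\hat{x}^{N}$, so that ``optimizing $\mathbf{Q}$'' amounts to choosing the generating distribution of one reproduction sequence. Because the source components $X_{i}$ are independent and, by the factorization established for Algorithm~\ref{thm:(Factored-Blahut)} and Algorithm~\ref{pro:FactoredRDE}, the optimal $\mathbf{Q}$ is a product distribution $q_{\mathcal{K}}=\prod_{i}q_{i,\cdot}$, it suffices to show that each coordinate distribution $q_{i,\cdot}$ collapses to a point mass, i.e., $q_{i,k}\in\{0,1\}$. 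I would handle the RD case first and then the RDE case.

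For the RD approach I would start from $R=0$, which by the expression for $R(D)$ in Section~\ref{sub:Computing-the-R-D} forces $I(X;\hat{X})=0$ and hence $\hat{X}_{i}\perp X_{i}$, i.e., $w_{k|j}=q_{i,k}$. The distortion then becomes linear in $\mathbf{Q}$,
\begin{equation*}
D=\sum_{i=1}^{N}\sum_{k}q_{i,k}\,\bar{\delta}_{i,k},\qquad \bar{\delta}_{i,k}\triangleq\sum_{j}p_{i,j}\,\delta_{jk}.
\end{equation*}
Minimizing a linear functional over a product of probability simplices attains its minimum at a vertex, so the optimal $q_{i,\cdot}$ is the point mass placing all weight on the letter $k$ minimizing $\bar{\delta}_{i,k}$ at each coordinate. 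This both proves the binary structure and recovers the rate-zero distortion $D_{\max}=\sum_{i}\min_{k}\bar{\delta}_{i,k}$, matching the formulas of Corollary~\ref{cor:For-mASD}.

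For the RDE approach I would argue directly at the level of $p_{e}$ rather than through (\ref{eq:RDEmaxmin}). With the single pattern $\hat{x}^{N}$ the error probability is $p_{e}(\hat{x}^{N})=\Pr\big(\sum_{i}\delta(X_{i},\hat{x}_{i})>D\big)$, a large-deviations tail of a sum of independent terms, so by Cram\'er's theorem its exponent decomposes coordinatewise and is maximized by choosing, at each $i$, the single reproduction letter that minimizes the tilted per-letter contribution. Generating the pattern at random from $\mathbf{Q}$ yields $\bar{p}_{e}=\sum_{\hat{x}^{N}}Q(\hat{x}^{N})\,p_{e}(\hat{x}^{N})$; since this is a nonnegative combination, its exponent equals $\min_{\hat{x}^{N}\in\mathrm{supp}(Q)}\big[-\tfrac{1}{N}\log p_{e}(\hat{x}^{N})\big]$ and is dominated by the \emph{worst} pattern in the support. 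Maximizing the exponent over $\mathbf{Q}$ therefore shrinks the support to the single best pattern, forcing each $q_{i,\cdot}$ to a point mass, which is precisely the optimization of the Chernoff/type bound referred to in the text.

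The main obstacle I anticipate is reconciling the $R=0$ regime with the general RDE formula (\ref{eq:RDEmaxmin})--(\ref{eq:RDEmaxminPRD}): the outer $\max_{\mathbf{w}}$ there is tailored to $R>0$ random coding with joint typicality and, taken naively at $R=0$ where the rate constraint in $\mathcal{P}_{R,D}$ becomes vacuous, would permit a source-dependent test channel and hence a spread output marginal. The careful point is that with a single codeword there is no source-dependent matching available, so the effective test channel must be source-independent, $w_{k|j}=q_{i,k}$, which collapses that freedom and makes the relevant objective extremal in $q$. I would make this rigorous through the support-domination argument above, which works directly with $p_{e}$ and thereby sidesteps the joint-typicality interpretation, and finally note that the product structure guarantees the coordinatewise point masses assemble into a single deterministic erasure pattern.
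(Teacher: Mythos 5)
Your reduction to per-coordinate problems and your RD argument are sound and essentially coincide with the paper's: zero total rate forces every component rate to zero (via the factorization underlying Algorithms \ref{thm:(Factored-Blahut)} and \ref{pro:FactoredRDE}), and at zero rate the minimizing test channel has $I(X;\hat{X})=0$, so the distortion $\sum_{k}q_{i,k}\bar{\delta}_{i,k}$ is linear in $q_{i,\cdot}$ and is minimized at a vertex of the simplex --- this is exactly the Berger $R=0$ result that the paper cites.

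The RDE half, however, rests on a step that is false. You assert that the exponent of $\bar{p}_{e}=\sum_{\hat{x}^{N}}Q(\hat{x}^{N})p_{e}(\hat{x}^{N})$ equals $\min_{\hat{x}^{N}\in\mathrm{supp}(Q)}\bigl[-\tfrac{1}{N}\log p_{e}(\hat{x}^{N})\bigr]$, i.e., that the worst pattern in the support dominates. This ignores that the weights $Q(\hat{x}^{N})$ are themselves exponentially small: a product distribution $\mathbf{Q}$ with any non-degenerate coordinate spreads its mass over exponentially many patterns, so each term carries the exponent $-\tfrac{1}{N}\log Q(\hat{x}^{N})-\tfrac{1}{N}\log p_{e}(\hat{x}^{N})$, and with exponentially many terms no single pattern's $p_{e}$-exponent controls the sum. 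Concretely, let each $q_{i,\cdot}$ put mass $1-\epsilon$ on the best letter and $\epsilon$ on a bad one: the support then contains the all-bad pattern, so your formula pins the exponent of $\bar{p}_{e}$ to that of the worst pattern for every $\epsilon>0$, yet as $\epsilon\to0$ the true exponent tends to that of the all-best pattern. Note also that even if the claim were true, it would not yield the lemma: any $\mathbf{Q}$ supported only on optimal patterns would then be optimal, so point masses would not be forced. The repair is to keep both sources of randomness together: $\bar{p}_{e}=\Pr\bigl(\sum_{i}\delta(X_{i},\hat{X}_{i})>D\bigr)$ with the pairs $(X_{i},\hat{X}_{i})$ independent across $i$, whose Chernoff/Cram\'er exponent is $\sup_{\lambda\geq0}\bigl[\lambda D-\sum_{i}\log\sum_{j,k}p_{i,j}q_{i,k}2^{\lambda\delta_{jk}}\bigr]$ (up to sign conventions for the tail, and with a single tilt $\lambda$ shared across coordinates --- it does not decompose into per-coordinate exponents). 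For each fixed $\lambda$ the inner sums are linear in $q_{i,\cdot}$, so the exponent is maximized over $\mathbf{Q}$ at a vertex of each simplex, and exchanging the two suprema (both are maximizations) completes the argument. Alternatively, you could follow the paper's shorter route: the RDE evaluation is equivalent to an RD problem for a tilted source $\tilde{p}_{j}^{\star}$, so the $R=0$ point-mass property of the RD solution is inherited directly.
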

\begin{proof}
[Sketch of proof]For each $(s,t)$ pair, the total rate is the sum
of $N$ individual components as seen in Proposition \ref{pro:FactoredRDE}.
Therefore, the zero total rate implies all components are zero. Thus,
it suffices to show that if an arbitrary rate component (denoted as
$R$ in the proof) is zero then the corresponding column of $\mathbf{Q}$
has all entries equal to $0$ or $1$. 

For the RD case, it is well known \cite[p. 27]{Berger-1971} that
if $R=0$ then the distortion is given by $D_{\max}=\min_{k}\sum_{j}p_{j}\delta_{jk}$
where $k^{\star}$ is the argument that achieves this minimum and
the test-channel input distribution is \[
q_{k}^{\star}=\begin{cases}
1 & \mbox{if }k=k^{\star}\\
0 & \mbox{otherwise}\end{cases}.\]
Computing the RDE for the source distribution $p_{j}$ is equivalent
to solving the RD problem for an appropriately tilted source distribution
$\tilde{p}_{j}^{\star}$. Therefore, the above property is inherited
by the RDE as well. In particular, the distortion at $R=0$ is given
by $\min_{k}\sum_{j}\tilde{p}_{j}^{\star}\delta_{jk}$ and the test-channel
input distribution is supported on the singleton element that achieves
this minimum. 

This result can also be shown directly by solving (\ref{eq:RDEmaxmin})
while dropping the rate constraint from (\ref{eq:RDEmaxminPRD}).
\end{proof}
Let $G_{k}(D)$ be the large deviation rate-function for the distortion
when the reconstruction symbol is fixed to $k$. It is well-known
that this can be computed using either a Chernoff bound or the method
of types \cite{Cover-1991}. Both techniques result in the same function;
for $\alpha\geq0$, it is described implicitly by\begin{align*}
D(\alpha) & =\frac{\sum_{j}p_{j}2^{\alpha\delta_{j,k}}\delta_{j,k}}{\sum_{j'}p_{j'}2^{\alpha\delta_{j',k}}},\\
G_{k}(\alpha) & =\sum_{j}\frac{p_{j}2^{\alpha\delta_{j,k}}}{\sum_{j'}p_{j'}2^{\alpha\delta_{j',k}}}\log\frac{2^{\alpha\delta_{j,k}}}{\sum_{j'}p_{j'}2^{\alpha\delta_{j',k}}}.\end{align*}

\begin{thm}
The RDE function for $R=0$ is equal to\[
F(0,D)=\max_{k}G_{k}(D).\]
\end{thm}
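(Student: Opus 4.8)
The plan is to start from the variational definition (\ref{eq:RDEmaxmin}) of the RDE function and specialize it to $R=0$. Setting $R=0$ removes the mutual-information constraint from the feasible set $\mathcal{P}_{R,D}$ in (\ref{eq:RDEmaxminPRD}), so that $F(0,D)=\max_{\mathbf{w}}\min_{\tilde{p}}D_{KL}(\tilde{p}\,||\,p)$, where the inner minimization is now over all $\tilde{p}$ satisfying only the distortion constraint $\sum_{j}\sum_{k}\tilde{p}_{j}w_{k|j}\delta_{jk}\geq D$. The first step is to invoke Lemma \ref{lem:Rate0}, which guarantees that an optimal test channel $\mathbf{w}$ has binary entries and places all reconstruction mass on a single letter. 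Consequently the test channel takes the deterministic single-output form $w_{k'|j}=\mathbbm{1}_{\{k'=k\}}$ for every $j$, and the outer maximization over $\mathbf{w}$ collapses to a maximization over the choice of that single reconstruction symbol $k$. For such a channel the mutual-information term vanishes identically (consistent with $R=0$), and the distortion constraint reduces to $\sum_{j}\tilde{p}_{j}\delta_{j,k}\geq D$.

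The second and central step is to identify the resulting inner minimization with $G_{k}(D)$; that is, I would prove
\[
G_{k}(D)=\min_{\tilde{p}:\,\sum_{j}\tilde{p}_{j}\delta_{j,k}\geq D}D_{KL}(\tilde{p}\,||\,p).
\]
This is precisely the equivalence between the method-of-types (Sanov) form on the right and the Chernoff/Cram\'{e}r parametric form defining $G_{k}$ on the left, as asserted in the text preceding the theorem. I would establish it by the standard Lagrangian tilting argument: attaching a multiplier $\alpha\geq0$ to the distortion constraint, the minimizer is the exponentially tilted distribution $\tilde{p}_{j}^{\star}\propto p_{j}2^{\alpha\delta_{j,k}}$. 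Substituting $\tilde{p}^{\star}$ back into $D_{KL}(\tilde{p}^{\star}\,||\,p)$ reproduces the displayed expression for $G_{k}(\alpha)$, while the stationarity condition that the constraint be tight yields exactly $D=D(\alpha)$. Chaining these identities gives $F(0,D)=\max_{k}G_{k}(D)$.

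I expect the main obstacle to be this convex-duality step, specifically matching the parametric Chernoff description of $G_{k}$ to the constrained KL minimization and handling the boundary cases where the distortion constraint is inactive (i.e., $D\leq\sum_{j}p_{j}\delta_{j,k}$, for which both sides are zero) versus active. One must also confirm that the parameter range $\alpha\geq0$ sweeps out the relevant interval of achievable distortions so that $D(\alpha)$ is monotone and invertible there, mirroring the analysis carried out for the closed-form mBM-1 case in Appendix \ref{sec:RDEanalysis} and Lemma \ref{lem:Lemmah}. A clean alternative that avoids re-deriving the duality is purely operational: for $R=0$ the codebook is the single all-$k$ pattern, so the distortion is the i.i.d.\ sum $\sum_{i}\delta(X_{i},k)$, whose per-symbol large-deviation exponent is $G_{k}(D)$ by Cram\'{e}r's theorem; optimizing the one available pattern over $k$ then yields $\max_{k}G_{k}(D)$ directly.
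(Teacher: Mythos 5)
Your proposal is correct and follows essentially the same route as the paper's proof: both invoke Lemma \ref{lem:Rate0} to collapse the reconstruction distribution to a point mass on a single symbol $k$, identify the fixed-$k$ failure exponent with $G_{k}(D)$ via a standard large-deviations (Chernoff/Sanov) argument, and then maximize over $k$. The only difference is one of detail—you spell out the Lagrangian tilting and boundary cases that the paper dismisses as ``standard,'' and your closing operational alternative via Cram\'{e}r's theorem is precisely the argument the paper actually gives.
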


\begin{proof}
Lemma \ref{lem:Rate0} shows that the reconstruction distribution
must be supported on a single element. Since the exponential failure
probability for any fixed reconstruction symbol follows from a standard
large-deviations analysis, the only remaining degree of freedom is
which symbol to use. Choosing the best symbol maximizes the RDE.\end{proof}

\begin{remrk}
This means that the single decoding attempt with the best error-exponent
can be computed as a special case of the RDE approach. Simplifying
our proposed algorithm to use the single Lagrange multiplier $\alpha$
leads to an algorithm that is very similar to the one proposed in
\cite{Das-isit09}. It also seems unlikely that this new algorithm
will provide any significant performance gains either in performance or
complexity. 

\end{remrk}

\section{Simulation results\label{sec:Simulation-results}}
\begin{figure*}[t!]
\centering{}\includegraphics[scale=1]{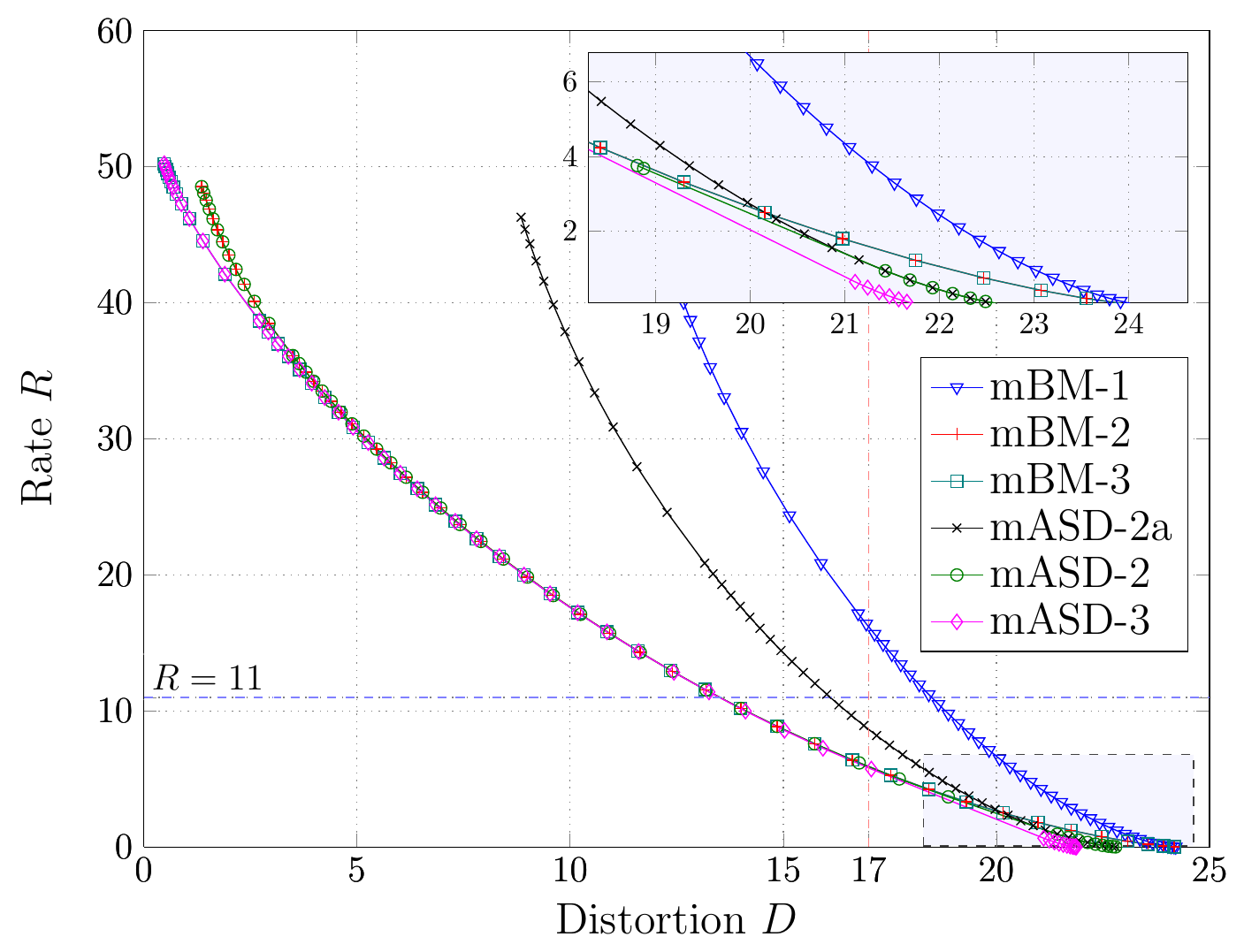}\caption{\label{fig:rdcurve} A realization of RD curves at $E_{b}/N_{0}=5.2$
dB for various decoding algorithms for the (255,239) RS code over
an AWGN channel. }

\end{figure*}

\begin{figure}[t!]
\centering{}\includegraphics[scale=1]{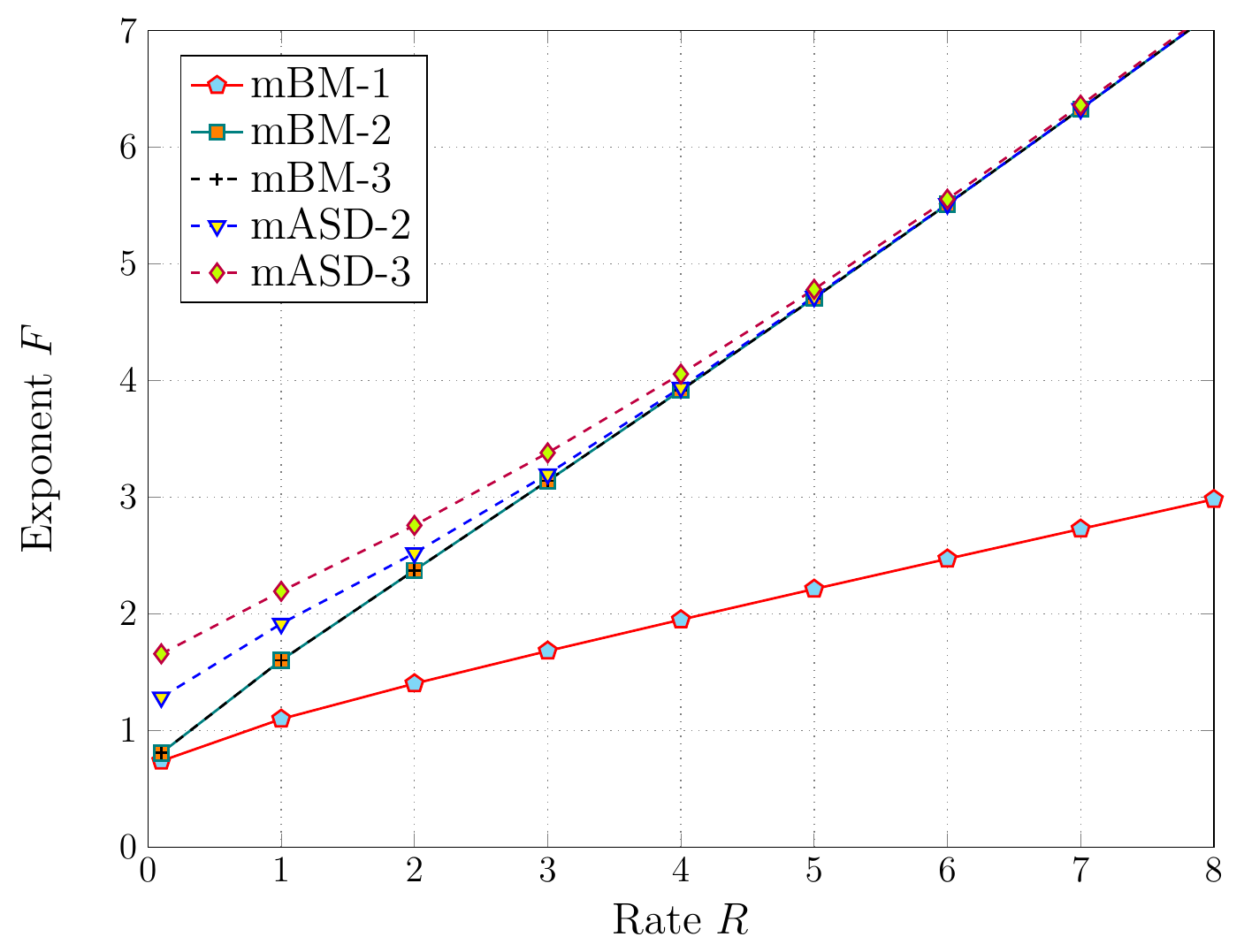}\caption{\label{fig:rdecurve} A realization of RDE curves at $E_{b}/N_{0}=6$
dB for various decoding algorithms for the (255,239) RS code over
an AWGN channel.}

\end{figure}

\begin{figure*}[t!]
\centering{}\includegraphics[clip,scale=1]{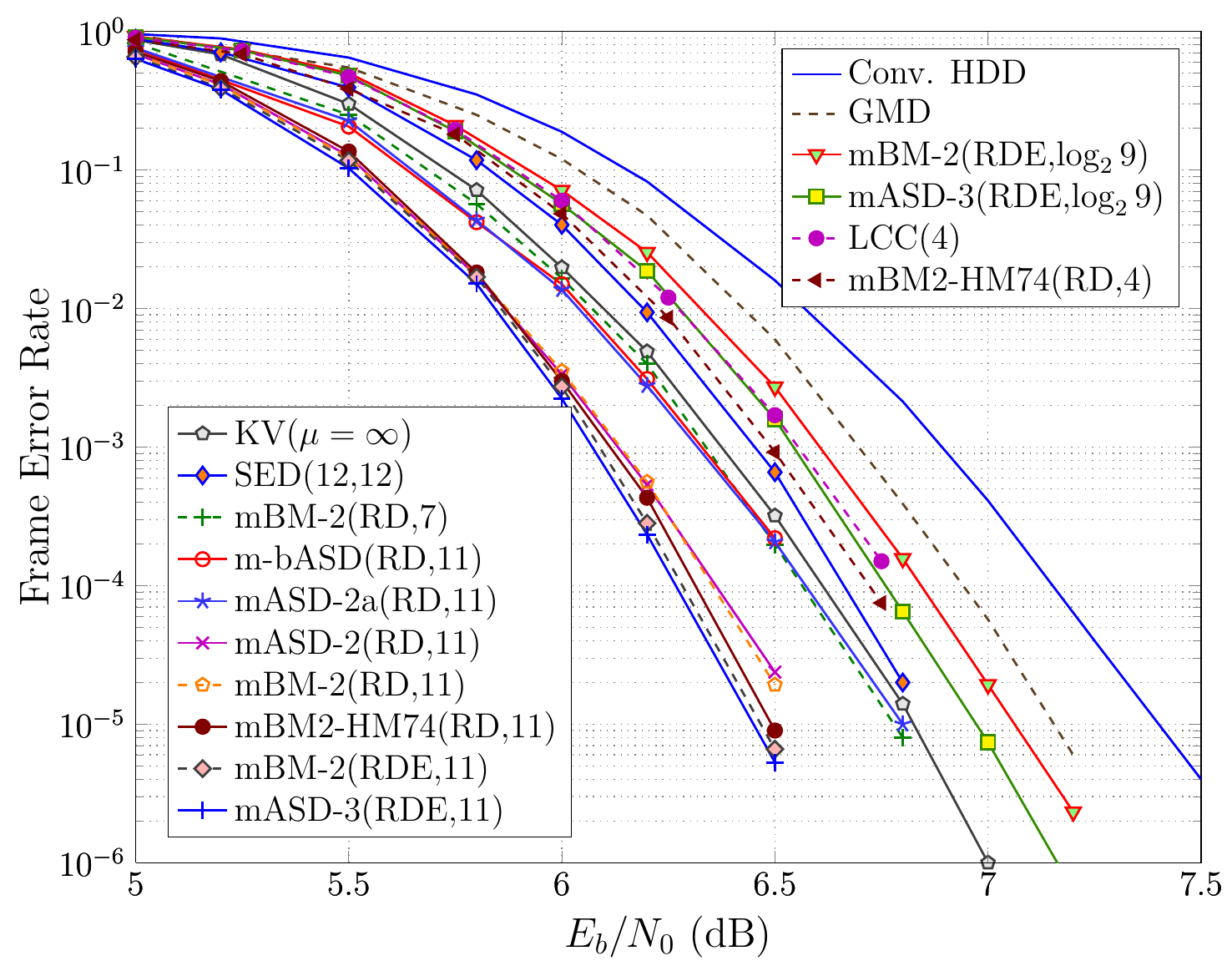}\caption{\label{fig:berRS239} Performance of various decoding algorithms for
the (255,239) RS code using BPSK over an AWGN channel.}
\end{figure*}

\begin{figure}
\centering{}\includegraphics[clip,scale=1]{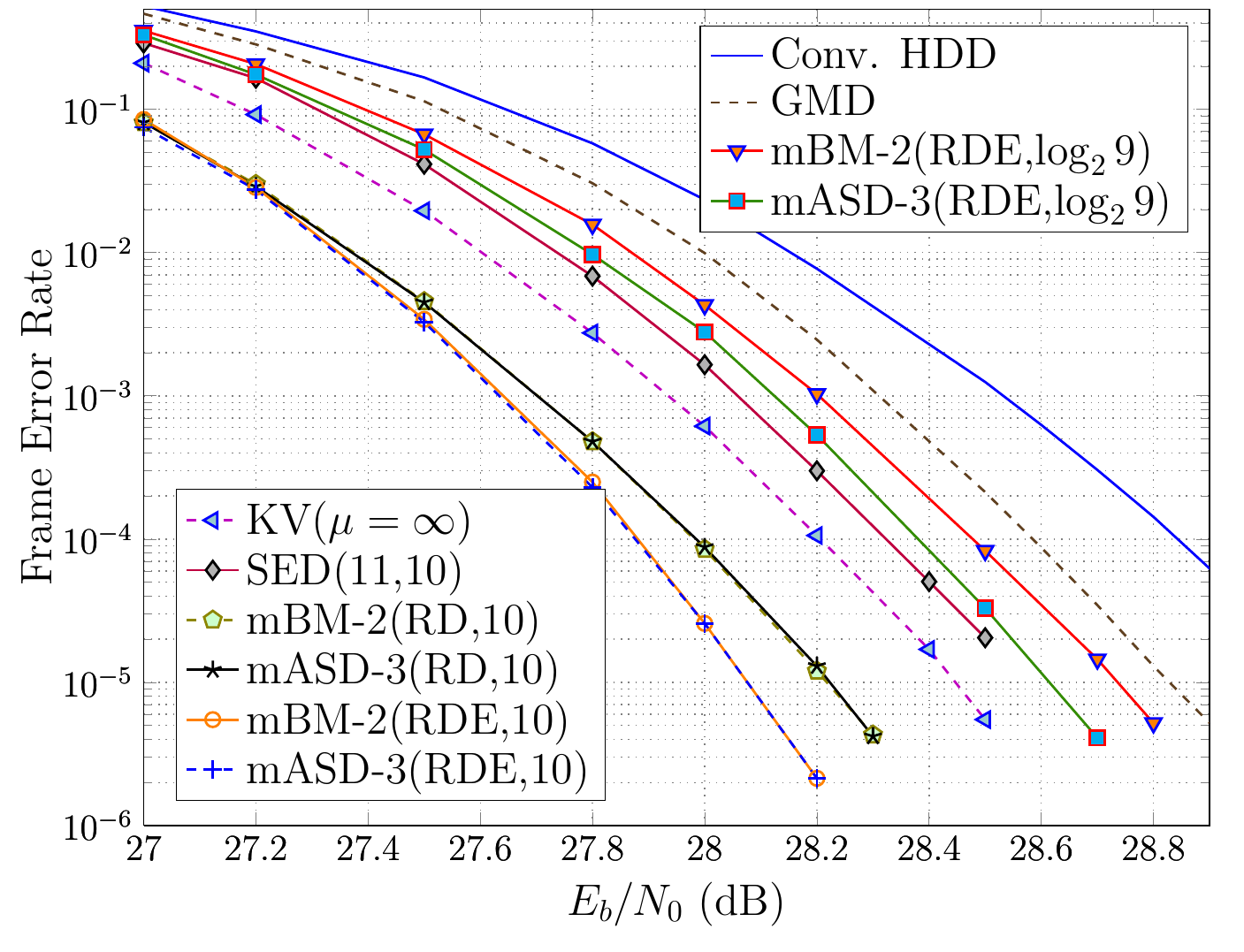}\caption{\label{fig:berRS239-QAM} Performance of various decoding algorithms
for the (255,239) RS code using 256-QAM over an AWGN channel.}
\end{figure}
\begin{figure*}[t]
\centering{}\includegraphics[scale=1]{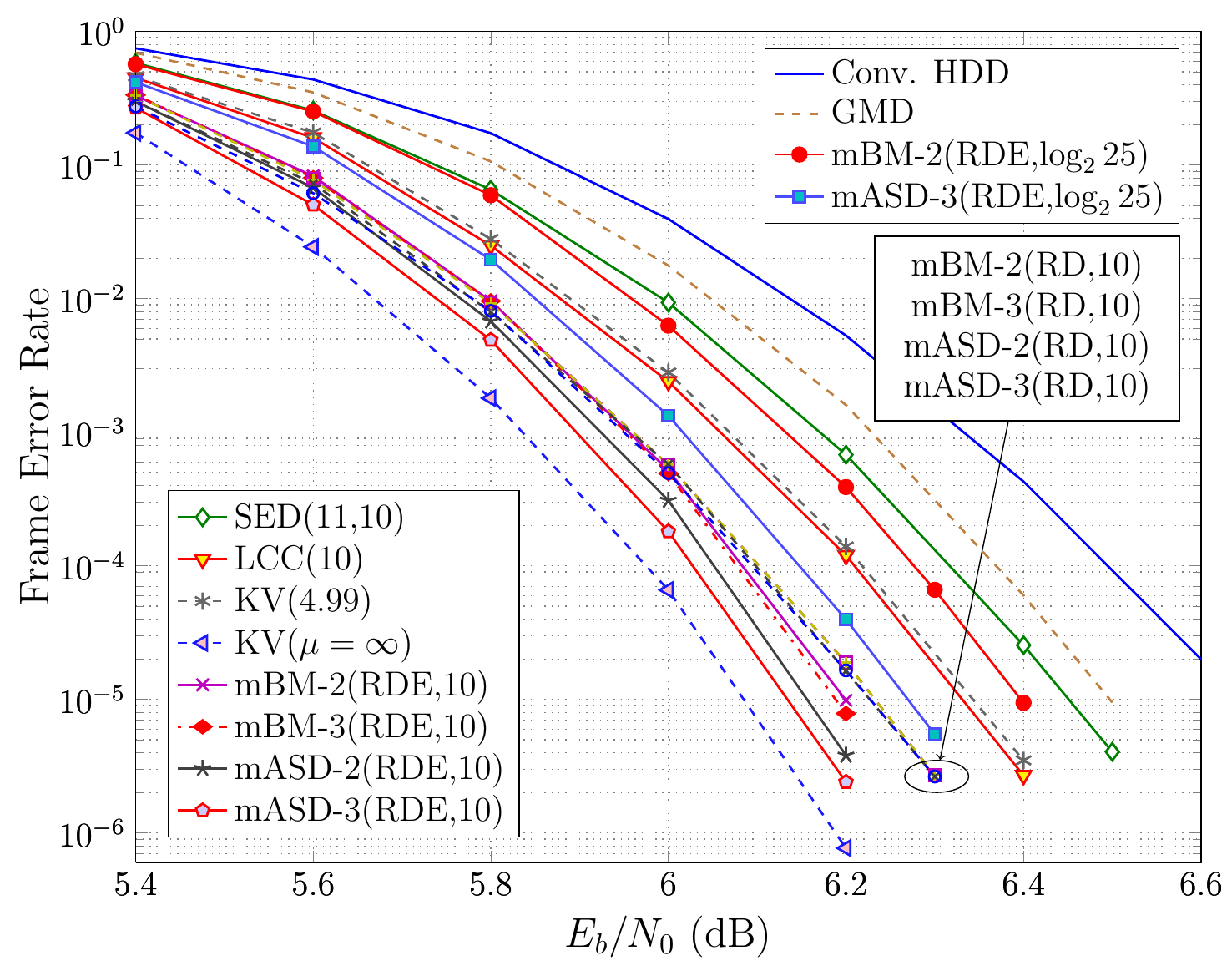}\caption{\label{fig:beRS458} Performance of various decoding algorithms for
the (458,410) RS code over $\mathbb{F}_{2^{10}}$ using BPSK over
an AWGN channel.}
\end{figure*}
\begin{figure}[t]
\centering{}\includegraphics[scale=1]{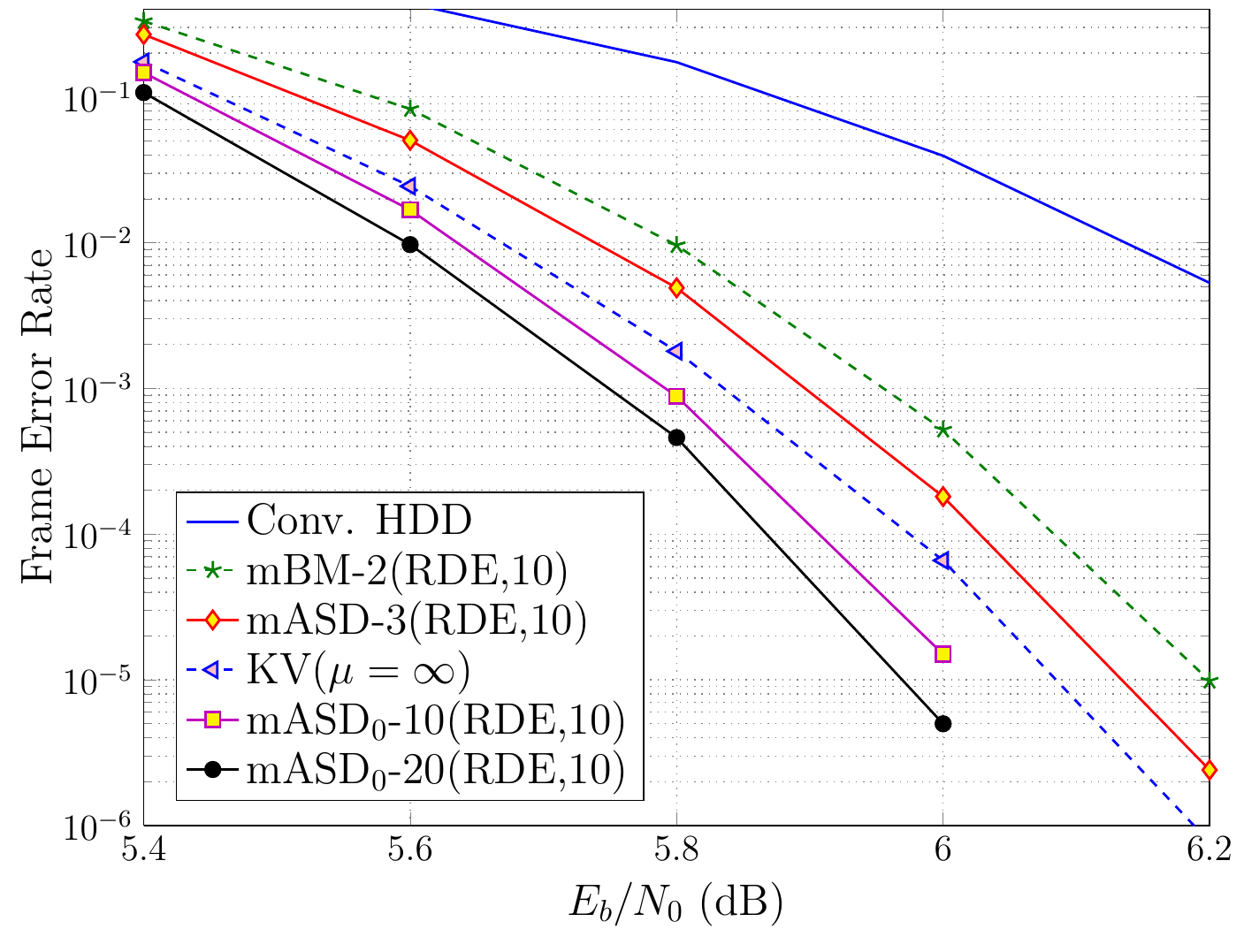}\caption{\label{fig:RS410asdM}Performance of various decoding algorithms for
the (458,410) RS code over $\mathbb{F}_{2^{10}}$ using BPSK over
an AWGN channel.}
\end{figure}
\begin{figure}[t]
\centering{}\includegraphics[scale=1]{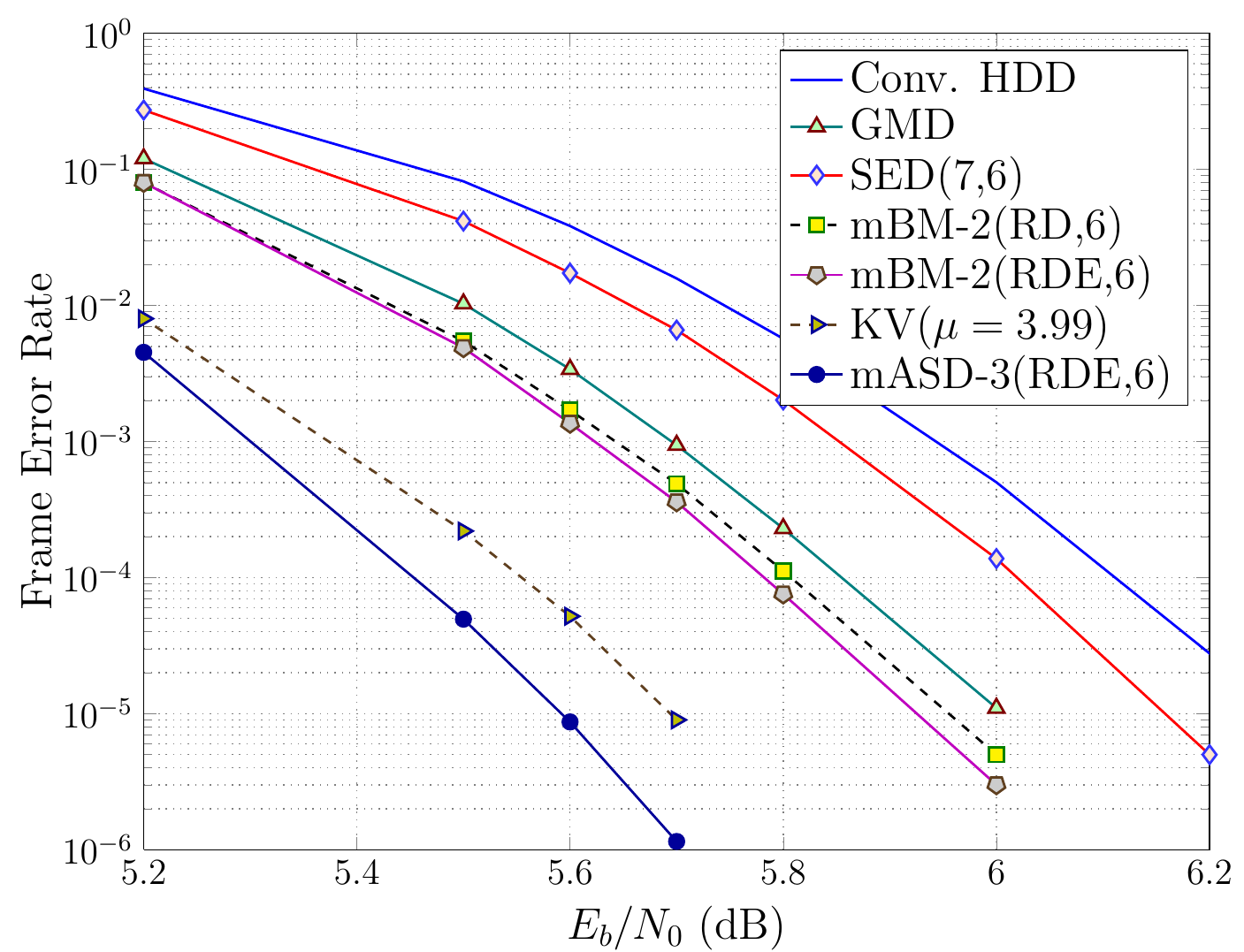}\caption{\label{fig:beRS127} Performance of various decoding algorithms for
the (255,127) RS code using BPSK over an AWGN channel.}
\end{figure}
\begin{figure}[t]
\centering{}\includegraphics[clip,scale=1]{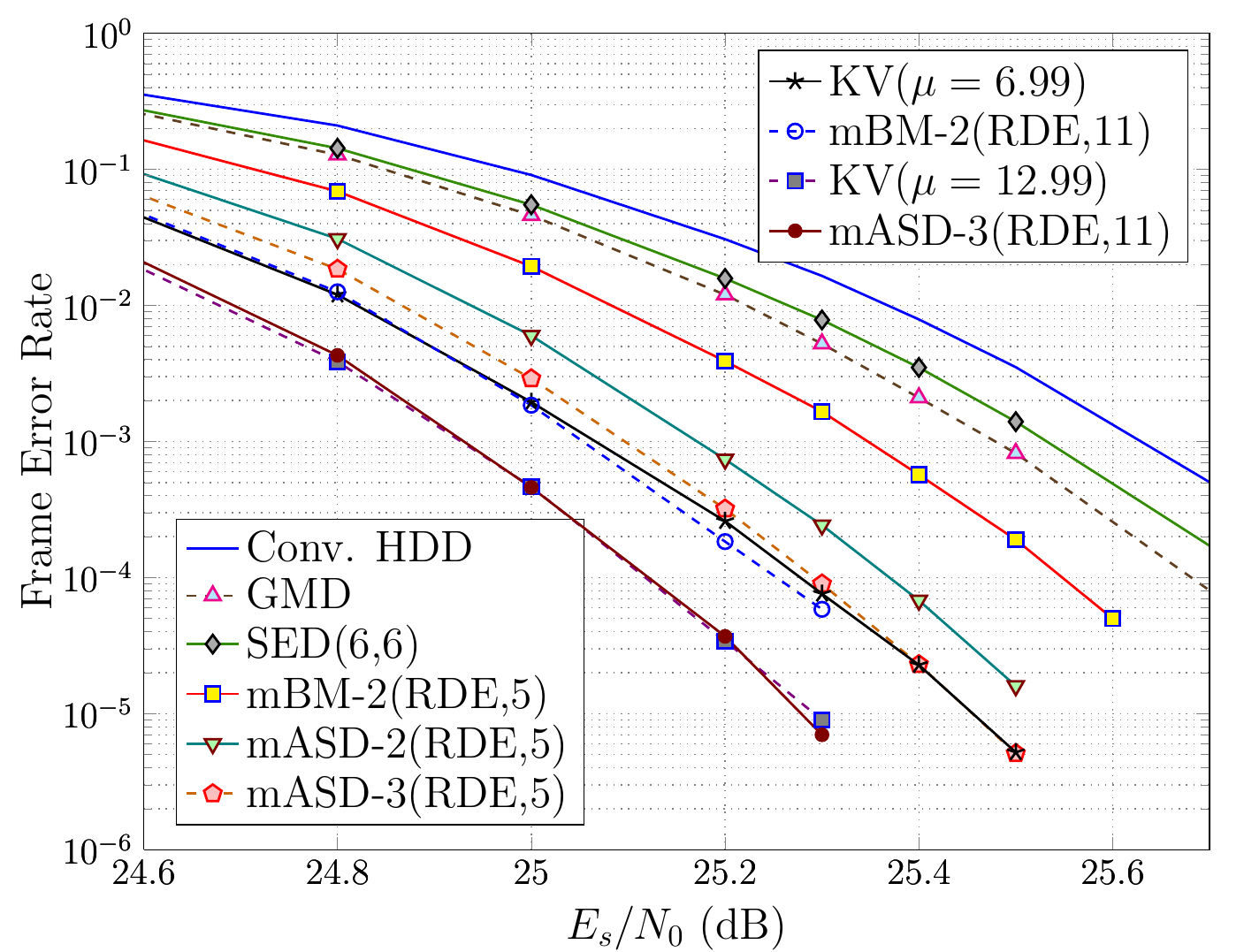}\caption{\label{fig:berRS191QAM} Performance of various decoding algorithms
for the (255,191) RS code using 256-QAM over an AWGN channel.}
\end{figure}
\begin{figure}[t]
\centering{}\includegraphics[clip,scale=1]{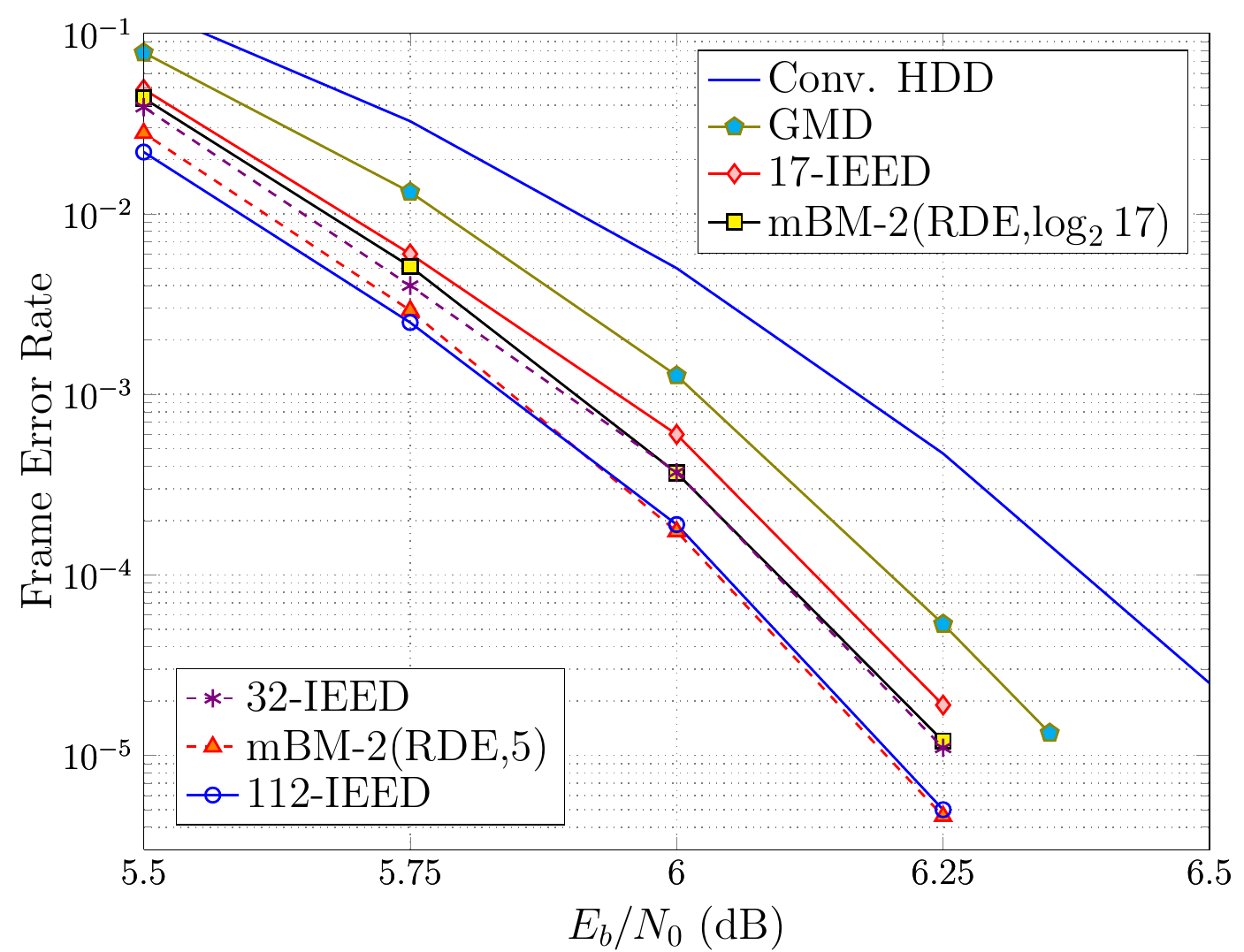}\caption{\label{fig:berRS223} Performance of various decoding algorithms for
the (255,223) RS code using BPSK over an AWGN channel.}
\end{figure}

In this section, we present simulation results on the performance
of RS codes over an AWGN channel with either BPSK or 256-QAM as the
modulation format. In all the figures, the curve labeled mBM-1 corresponds
to standard errors-and-erasures BM decoding with multiple erasure
patterns. For $\ell>1$, the curves labeled mBM-$\ell$ correspond
to errors-and-erasures BM decoding with multiple decoding trials using
both erasures and the top-$\ell$ symbols. The curves labeled mASD-$\mu$
correspond to multiple ASD decoding trials with maximum multiplicity
$\mu$. The number of decoding attempts is $2^{R}$ where $R$ is
denoted in parentheses in each algorithm's acronym (e.g., mBM-2(RD,11)
uses the RD approach with $R=11$ while mBM-2(RDE,10) uses the RDE
approach with $R=10$). Please note that not all the algorithms listed
in this section are of the same complexity unless stated explicitly.


In Fig. \ref{fig:rdcurve}, the RD curves are shown for various algorithms
using the RD approach at $E_{b}/N_{0}=5.2$ dB where BPSK is used.
For the (255,239) RS code, the fixed threshold for decoding is $D=N-K+1=17$.
Therefore, one might expect that algorithms whose average distortion
is less than $17$ should have a frame error rate (FER) less than
$\frac{1}{2}$. The RD curve allows one to estimate the number of
decoding patterns required to achieve this FER. Notice that the mBM-1
algorithm at rate $0$, which is very similar to conventional BM decoding,
has an expected distortion of roughly $24$. For this reason, the
FER for conventional decoding is close to $1$. The RD curve tells
us that trying roughly $2^{16}$ (i.e., $R=16$) erasure patterns
would reduce the FER to roughly $\frac{1}{2}$ because this is where
the distortion drops down to $17$. Likewise, the mBM-2 algorithm
using rate $R=11$ has an expected distortion of less than $14$.
So we expect (and our simulations confirm) that the FER should be
less than $\frac{1}{2}$. 

One weakness of this RD approach is that RD describes only the average
distortion and does not directly consider the probability that the
distortion is greater than $17$. Still, we can make the following
observations from the RD curve. Even at high rates (e.g., $R\ge5$),
we see that the distortion $D$ achieved by mBM-2 is roughly the same
as mBM-3, mASD-2, and mASD-3 but smaller than mASD-2a (see Example
\ref{exa:mASD2a}) and mBM-1. This implies that, for this RS code,
mBM-2 using the RD approach is no worse than the more complicated
ASD based approaches for a wide range of rates (i.e., $5\leq R\leq35$).
This is also true if the RDE approach is used as can be seen in Fig.
\ref{fig:rdecurve} which depicts the trade-off between rate R and
exponent F for various algorithms at $E_{b}/N_{0}=6$ dB. For this
RS code, ASD based approaches have a better exponent than mBM-2 at
low rates (i.e., small number of decoding trials) and have roughly
the same exponent for rates $R\geq5$.

In Fig. \ref{fig:berRS239}, a plot of the FER versus $E_{b}/N_{0}$
is shown for the (255,239) RS code over an AWGN channel with BPSK
as the modulation format. The conventional HDD and the GMD algorithms
have modest performance since they use only one or a few decoding
attempts. Choosing $R=11$ allows us to make fair comparisons with
SED(12,12). With the same number of decoding trials, mBM-2(RD,11)
outperforms SED(12,12) by $0.3$ dB at FER$=10^{-4}$. Even mBM-2(RD,7),
with many fewer decoding trials, outperforms both SED(12,12) and the
KV algorithm with $\mu=\infty$. Among all our proposed algorithms
using the RD approach with rate $R=11$, the mBM2-HM74(RD,11) achieves
the best performance. This algorithm uses the Hamming (7,4) covering
code for the $7$ LRPs and the RD approach for the remaining codeword
positions. Meanwhile, small differences in the performance among mBM-2(RD,11),
mBM-3(RD,11), mASD-2(RD,11), and mASD-3(RD,11) suggest that: (i) taking
care of the $2$ most likely symbols at each codeword position is
good enough for multiple decoding of this RS code and (ii) multiple
runs of errors-and-erasures decoding is generally almost as good as
multiple runs of ASD decoding. Recall that this result is also correctly
predicted by the RD analysis. When the RDE approach is used, mBM-2(RDE,11)
still has roughly the same performance as a more complex mASD-3(RDE,11).
One can also observe that these two algorithms using the RDE approach
achieve better performance than mBM-2(RD,11) and mBM2-HM74(RD,11)
that use the RD approach. We also simulate our proposed algorithm
at $R=\log_{2}9$ to compare with the GMD algorithm. While both mBM-2(RDE,$\log_{2}9$)
and the GMD algorithm use the same number of $9$ errors-and-erasures
decoding attempts, mBM-2(RDE,$\log_{2}9$) yields roughly a $0.1$
dB gain. The simulation results show that, at this low rate $R=\log_{2}9$,
mASD-3 has a larger gain over mBM-2 than at a higher rate $R=11$.
This phenomenon can be predicted in Fig. \ref{fig:rdecurve} where
mASD-3 starts to achieve a larger exponent $F$ at small values of
$R$.

To compare with the Chase-type approach (LCC) used in \cite{Bellorado-it10},
in Fig. \ref{fig:berRS239} we also consider the mBM2-HM74(4) algorithm
that uses the Hamming (7,4) covering code for the $7$ LRPs and the
hard decision pattern for the remaining codeword positions. This shows
that, for the (255,239) RS code, the mBM2-HM74 achieves better performance
than the LCC(4) with the same number ($2^{4}$) of decoding attempts.
For the (458,410) RS code considered in Fig. \ref{fig:beRS458}, one
can also observe that the group of algorithms that we propose have
better performance than LCC(10) with the same number ($2^{10})$ of
decoding attempts. However, the implementation complexity of LCC(10)
may be lower than the algorithms proposed here due to their clever
techniques that reduce the decoding complexity per trial. It is also
interesting to note that the method proposed here, based on covering
codes and random codebook generation, is also compatible with some
of the fast techniques used by the LCC decoding.

We also performed simulations using QAM and Fig. \ref{fig:berRS239-QAM}
shows FER versus $E_{b}/N_{0}$ performance of the same (255,239)
RS code transmitted over an AWGN channel with 256-QAM modulation.
At FER=$10^{-4}$, our proposed algorithms mBM-2(RD,10) and mBM-2(RDE,10)
achieve $0.3-0.4$ dB gain over SED(11,10) (with the same complexity)
and also outperform KV($\mu=\infty)$. At $R=10$, mBM-2 still achieves
roughly the same performance as mASD-3.

In Fig. \ref{fig:beRS458}, a plot of the FER versus $E_{b}/N_{0}$
is shown for the (458,410) RS code that has a longer block length.
In this plot, BPSK is used as the modulation format and we also focus
on rate $R=10$. With algorithms that use the RD approach, mBM-2(RD,10)
still has approximately the same performance as mBM-3(RD,10), mASD-2(RD,10),
mASD-3(RD,10). However, when the RDE approach is employed, algorithms
that run multiple ASD decoding attempts have a recognizable gain over
algorithms that use multiple runs of BM decoding. The performance
gain of the RDE approach (over the RD approach) is small, but can
be seen easily by comparing mASD-3(RDE,10) to mASD-3(RD,10). As a
reference, we also plot the performance of KV($4.99$) which corresponds
to the proportional KV algorithm \cite{Gross-com06} with the scaling
factor $4.99$.

In Fig. \ref{fig:RS410asdM}, the same setting is used as in Fig.
\ref{fig:beRS458}. As can be seen in the figure, KV($\mu=\infty$)
achieve better performance than mASD-3(RDE,10) and mBM-2(RDE,10).
However, by considering higher $\mu$, our algorithms using the heuristic
method mASD$_{0}$-10(RDE,10) and mASD$_{0}$-20(RDE,10) can outperform
KV($\mu=\infty$). 

To target RS codes of lower rate, we also ran simulations of the (255,127)
RS code over an AWGN channel with BPSK modulation and the results
can be seen in Fig. \ref{fig:beRS127}. While mBM-2(RDE,6), mBM-2(RD,6),
SED(7,6) and GMD all use the same number of about $64$ errors-and-erasures
decoding attempts, our proposed mBM-2 algorithms outperforms the other
two algorithms. As seen in the plot, mASD-3(RDE,6) has quite a large
gain over mBM-2(RD,6) which is reasonable since ASD decoding is known
to perform very well compared to BM decoding with low-rate RS codes.
In this figure, KV($3.99$) denotes the proportional KV algorithm
\cite{Gross-com06} with the scaling factor $3.99$ and therefore
with maximum multiplicity $\mu=3$. While mASD-3(RDE,6) with $64$
decoding attempts outperforms KV($3.99$) as expected, the small gain
of roughly $0.5$ dB at FER=$10^{-4}$ suggests that with low-rate
RS codes, one might prefer increasing $\mu$ in a single ASD decoding
attempt to running multiple ASD decoding attempts of a lower $\mu$.

In Fig. \ref{fig:berRS191QAM}, we show the FER versus $E_{s}/N_{0}$
performance for the (255,191) RS codes using 256-QAM. Again, our proposed
algorithm mBM-2(RDE,5) performs favorably compared to SED(6,6) and
GMD with the same number of about $32$ errors-and-erasures decoding
attempts. Under this setup, mASD-2(RDE,5) and mASD-3(RDE,5) achieve
significant gains over mBM-2(RDE,5). Our proposed mASD-3(RDE,11) and
mASD-3(RDE,5) algorithms have fairly the same performance as the proportional
KV algorithm with the scaling factor $12.99$ and $6.99$, respectively.

To compare with the iterative erasure and error decoding (IEED) algorithm
proposed in \cite{Tokushige-ieice06}, we also conducted simulations
of the (255,223) RS code over an AWGN channel using BPSK and the results
are shown in Fig. \ref{fig:berRS223}. With the same number of about
$17$ errors-and-erasures decoding attempts, our proposed mBM-2(RDE,$\log_{2}17$)
algorithm outperforms both the GMD and 17-IEED algorithms. In fact,
at FER smaller than $10^{-3}$, mBM-2(RDE,$\log_{2}17$) has roughly
the same performance as 32-IEED which needs to use $32$ decoding
attempts. Meanwhile, mBM-2(RDE,5) that uses $32$ decoding attempts
performs as good as 112-IEED where $112$ decoding attempts are required.

\section{Conclusion\label{sec:Conclusion}}

A unified framework based on rate-distortion (RD) theory has been
developed to analyze multiple decoding trials, with various algorithms,
of RS codes in terms of performance and complexity. An important contribution
of this paper is the connection that is made between the complexity
and performance (in an asymptotic sense) of these multiple-decoding
algorithms and the rate-distortion of an associated RD problem. Based
on this analysis, we propose two solutions; the first is based on
the RD function and the second on the RD exponent (RDE). The RDE analysis
shows that this approach has several advantages. Firstly, the RDE
approach achieves a near optimal performance-versus-complexity trade-off
among algorithms that consider running a decoding scheme multiple
times (see Remark \ref{rem:advantage}). Secondly, it helps estimate
the error probability using exponentially tight bounds for $N$ large
enough. Further, we have shown that covering codes can also be combined
with the RD approach to mitigate the suboptimality of random codes
when the effective block-length is not large. As part of this analysis,
we also present numerical and analytical computations of the RD and
RDE functions for sequences of i.n.d. sources. Finally, the simulation
results show that our proposed algorithms based on the RD and RDE
approaches achieve a better performance-versus-complexity trade-off
than previously proposed algorithms. One key result is that, for the
$(255,239)$ RS code, multiple-decoding using the standard Berlekamp-Massey
algorithm (mBM) is as good as multiple-decoding using more complex
algebraic soft-decision algorithms (mASD). However, for the $(458,410)$
RS code, the RDE approach improves the performance of mASD algorithms
beyond that of mBM decoding. 

Simulations results suggest an interesting
conjecture that for moderate-rate RS codes, multiple ASD decoding
attempts with small $\mu$ is preferred while for low-rate RS codes,
a single ASD decoding with large $\mu$ may be preferred. This conjecture
remains open for future research. Our future work will also focus on extending this framework to analyze
multiple decoding attempts for intersymbol interference channels.
In this case, it will be appropriate for the decoder to consider
multiple candidate error-events during decoding. Extending the RD
and RDE approaches directly to this case is not straightforward since
computing the RD and RDE functions for Markov sources in the large
distortion regime is still an open problem. Another interesting extension
is to use clever techniques to reuse the computations from one stage
of errors-and-erasures decoding to the next in order to lower the
complexity per decoding trial (e.g., \cite{Bellorado-it10}). 

\appendices

\section{Proof of Corollary \ref{cor:For-mASD}\label{sec:AppCorDmax}}
\begin{proof}
Using the formula in \cite[p. 27]{Berger-1971}, we have\[
D_{\max}=\sum_{i=1}^{N}\min_{k}\sum_{j=0}^{\ell}p_{i,j}\delta_{jk}.\]

For mBM-$\ell$ with distortion matrix in (\ref{eq:dstmBMl}), we
have $\sum_{j=0}^{\ell}p_{i,j}\delta_{jk}=\sum_{j\neq k}2p_{i,j}=2(1-p_{i,k})$
for $k\geq1$ and $\sum_{j=0}^{\ell}p_{i,j}\delta_{j0}=\sum_{j=0}^{\ell}p_{i,j}=1$.
Therefore,\begin{align*}
&D_{\max}(\mbox{mBM-}\ell) = \sum_{i=1}^{N}\min_{k=1,\ldots\ell}\{1,2(1-p_{i,k})\}\\
&\phantom{D_{\max}(\mbox{mBM-}\ell)} = \sum_{i=1}^{N}\min\{1,2(1-p_{i,1})\}
\end{align*}
since $p_{i,1}=\max_{k\geq1}\{p_{i,k}\}$.

Similarly, for mASD-$\mu$ with distortion matrix $\Delta_{\mu}$
in (\ref{eq:dstmASDmu_allrate}), we have
\begin{align*}
&\sum_{j=0}^{\ell}p_{i,j}\delta_{jk} = p_{i,0}\rho_{k,\mu}+\sum_{j=1}^{\ell}p_{i,j}\left(\rho_{k,\mu}-\frac{2m_{j,k}}{\mu}\right)\\
&\phantom{\sum_{j=0}^{\ell}p_{i,j}\delta_{jk}}= \rho_{k,\mu}-\sum_{j=1}^{\ell}\frac{m_{j,k}}{\mu}p_{i,j}
\end{align*}
for $k=1,\ldots,T$. Since multiplicity type 1 is always defined
to be $(\mu,0,\ldots,0)$, we have $\rho_{1,\mu}=2$ and consequently,
\[\sum_{j=0}^{\ell}p_{i,j}\delta_{j1}=2(1-p_{i,1}).\]
Therefore, we obtain
\begin{align*}
\begin{split}
&D_{\max}(\mbox{mASD-}\mu)=\sum_{i=1}^{N}\min_{k=2,\ldots,T}\left\{ 2(1-p_{i,1}),\rho_{k,\mu}-\sum_{j=1}^{\ell}\frac{m_{j,k}}{\mu}p_{i,j}\right\}.	 
\end{split}
\end{align*}
If mASD-$\mu$ uses multiplicity type $(0,0,\ldots0$) which is, for
example, labeled as type $T$ then we have \[\rho_{T,\mu}-\sum_{j=1}^{\ell}\frac{m_{j,T}}{\mu}p_{i,j}=\rho_{T,\mu}=1.\]
Consequently,
\begin{align*}
 & D_{\max}(\mbox{mASD-}\mu) = \sum_{i=1}^{N}\min_{k=2,\ldots,T-1}\left\{ 1,2(1-p_{i,1}),\rho_{k,\mu}-\sum_{j=1}^{\ell}\frac{m_{j,k}}{\mu}p_{i,j}\right\}\\
 &\phantom{D_{\max}(\mbox{mASD-}\mu)} \leq \sum_{i=1}^{N}\min\{1,2(1-p_{i,1})\}\\
 &\phantom{D_{\max}(\mbox{mASD-}\mu)} = D_{\max}(\mbox{mBM-}\ell)
\end{align*}
and this completes the proof.
\end{proof}

\section{Proof of Lemma \ref{lem:OneVar}\label{sec:AppLemRDmBM1}}
\begin{proof}
With the notation $\bar{p}=1-p$, according to \cite[p. 27]{Berger-1971} we have
\begin{align*}
&D_{\min} = \bar{p}\min_{k}\delta_{0k}+p\min_{k}\delta_{1k}=1-p\\
&D_{\max} = \min_{k}(\bar{p}\delta_{0k}+p\delta_{1k})=\min\{1,2(1-p)\}.\end{align*}

The function $R(D)$ is not defined for $D<D_{\min}$ and $R(D)=0$
for $D\geq D_{\max}$. For the case $D_{\min}\leq D<D_{\max}$, the
rate-distortion function $R(D)$ is given by solving the following
convex optimization problem

\begin{equation*}
\begin{array}{cc}
\min_{\mathbf{w}} & I(X;\hat{X})\\
\mbox{subject to} & w_{k|j}\triangleq\Pr(\hat{X}=k|X=j)\geq0\,\,\,\forall j,k\in\{0,1\}\\
 & w_{0|0}+w_{1|0}=1\\
 & w_{0|1}+w_{1|1}=1\\
 & \bar{p}w_{0|0}+pw_{0|1}+2\bar{p}w_{1|0}=D\end{array}
\end{equation*}
where the mutual information \[
I(X;\hat{X})=\bar{p}\sum_{k}w_{k|0}\log\frac{w_{k|0}}{q_{k}}+p\sum_{k}w_{k|1}\log\frac{w_{k|1}}{q_{k}}\]
 and the test-channel input probability-distribution\[
q_{k}=\Pr(\hat{X}=k)=\bar{p}w_{k|0}+pw_{k|1}.\]

We then form the Lagrangian\begin{align*}
&J(W) = I(X;\hat{X})+\sum_{j}\gamma_{j}(w_{0|j}+w_{1|j}-1) +\gamma(\bar{p}w_{0|0}+pw_{0|1}+2\bar{p}w_{1|0}-D)-\sum_{j,k}\lambda_{jk}w_{k|j}\end{align*}

and the Karush-Kuhn-Tucker (KKT) conditions become%
\footnote{Here we use some abuse of notation and still write the optimizing
values in their old forms without a $^{\star}$ notation. %
}\[
\begin{cases}
\frac{\partial J}{\partial w_{k|j}}=0 & \forall j,k\in\{0,1\}\\
w_{0|j}+w_{1|j}-1=0 & \forall j\in\{0,1\}\\
w_{k|j},\lambda_{jk}\geq0 & \forall j,k\in\{0,1\}\\
\lambda_{jk}w_{k|j}=0 & \forall j,k\in\{0,1\}\end{cases}.\]

By \cite[Lemma 1, p. 32]{Berger-1971}, we only need to consider the
following cases.

$\bullet$ Case 1: $w_{0|0}=w_{0|1}=0$. In this case, we further have $w_{1|0}=w_{1|1}=1.$
This leads to $R=0$ and $D=2(1-p)\geq D_{\max}$ which is a contradiction
as we only consider $D\in[D_{\min},D_{\max})$.

$\bullet$ Case 2: $w_{1|0}=w_{1|1}=0$. In this case, we have $w_{0|0}=w_{0|1}=1.$
This leads to $R=0$ and $D=1\geq D_{\max}$ which is also a contradiction.

$\bullet$ Case 3: $w_{k|j}>0\,\,\forall j,k\in\{0,1\}$. In this case, we
know $\lambda_{jk}=0$ and then, from $\frac{\partial J}{\partial w_{k|j}}=0$,
we obtain
\begin{align*}
&\bar{p}(\log\frac{w_{k|0}}{q_{k}}+\delta_{0k}\gamma)+\gamma_{0} = 0\quad \forall k\in\{0,1\},\\
& p(\log\frac{w_{k|1}}{q_{k}}+\delta_{1k}\gamma)+\gamma_{1} = 0\quad \forall k\in\{0,1\}.\end{align*}
Equivalently, we have\begin{align*}
&w_{k|0} = q_{k}2^{-\delta_{0k}\gamma}2^{\frac{-\gamma_{0}}{\bar{p}}}\quad\forall k\in\{0,1\},\\
&w_{k|1} = q_{k}2^{-\delta_{1k}\gamma}2^{\frac{-\gamma_{1}}{p}}\quad\forall k\in\{0,1\}.\end{align*}

Letting $\alpha\triangleq2^{-\mu}$ and noticing that $w_{0|j}+w_{1|j}=1\,\,\forall j\in\{0,1\}$,
we get
\begin{align*}
 w_{0|0}=\frac{q_{0}}{q_{0}+q_{1}\alpha},\quad w_{0|1}=\frac{q_{0}\alpha}{q_{0}\alpha+q_{1}},\\
 w_{1|0}=\frac{q_{1}\alpha}{q_{0}+q_{1}\alpha}, \quad w_{1|1}=\frac{q_{1}}{q_{0}\alpha+q_{1}}.
\end{align*}


Putting this into the constraints\[
\begin{cases}
\bar{p}w_{0|0}+pw_{0|1}+2\bar{p}w_{1|0}=D\\
q_{0}=\bar{p}w_{0|0}+pw_{0|1} \\
q_{1}=\bar{p}w_{1|0}+pw_{1|1}\end{cases}\]
we have a set of 3 equations involving 3 variables $\alpha,q_{0},q_{1}$.
Solving this gives us\begin{align*}
&\alpha = \frac{D+p-1}{2-(D+p)},\\
&q_{0} = \frac{2(1-p)-D}{3-2(D+p)},\\
&q_{1} = \frac{1-D}{3-2(D+p)}.\end{align*}

Therefore, we can obtain the optimizing $w_{k|j}$ and have 
\begin{align*}
&R=H(p)-H(\frac{1}{1+\alpha})\\
&\phantom{R}= H(p)-H(D+p-1).
\end{align*}

Hence, in all cases $R=\left[H(p)-H(D+p-1)\right]^{+}$ and we conclude
the proof.
\end{proof}

\section{Proof of Theorem \ref{thm:(BMA-RD)}\label{sec:ProofThmBMARD}}
\begin{proof}
The objective here is to compute the RD function for a discrete source
sequence $x^{N}$ of i.n.d. source components $x_{i}$. First, with
the notations $p_{i,j}\triangleq\Pr(X_{i}=j)$ and $q_{i,j}\triangleq\Pr(\hat{X}_{i}=j)$
for $j\in\{0,1)$ and $i\in\{1,2,\ldots N\},$ Lemma \ref{lem:OneVar}
gives us the rate-distortion components\[
R_{i}(D_{i})=\left[H(p_{i})-H(D_{i}+p_{i,1}-1)\right]^{+}\]
along with the test-channel input-probability distributions
\begin{align*}
q_{i,0}=\frac{2(1-p_{i,1})-D_{i}}{3-2(p_{i,1}+D_{i})} \quad \mbox{and} \quad q_{i,1}=\frac{1-D_{i}}{3-2(p_{i,1}+D_{i})}
\end{align*}
for each index $i$ of the codeword. The overall rate-distortion function
is given by 
\begin{align*}
&R(D)=\min_{\sum_{i=1}^{N}D_{i}=D}R_{i}(D_{i})\\
&\phantom{R(D)}=\min_{\sum_{i=1}^{N}D_{i}=D}\sum_{i=1}^{N}\left[H(p_{i})-H(D_{i}+p_{i,1}-1)\right]^{+}
\end{align*}
which is a convex optimization problem.

Using Lagrange multipliers, we form the functional
\begin{align*}
&J(D)=\sum_{i=1}^{N}\left(H(p_{i,1})-H(D_{i}+p_{i,1}-1)\right)+\gamma\left(\sum_{i=1}^{N}D_{i}-D\right)
\end{align*}
and compute the derivatives \[\frac{\partial J}{\partial D_{i}}=\log(\frac{D_{i}+p_{i,1}-1}{2-D_{i}-p_{i,1}})+\gamma.\]

The Kuhn-Tucker condition (see the restated version in \cite{Gallager-1968},
page 86) then tells us that there is $\gamma$ such that \[
\frac{\partial J}{\partial D_{i}}\begin{cases}
=0 & \mbox{if\,}R_{i}(D_{i})>0\\
\leq0 & \mbox{if\,}R_{i}(D_{i})=0\end{cases}\]
which is equivalent to\[
\frac{D_{i}+p_{i,1}-1}{2-D_{i}-p_{i,1}}\begin{cases}
=2^{-\gamma} & \mbox{if\,}H(p_{i,1})-H(D_{i}+p_{i,1}-1)>0\\
\leq2^{-\gamma} & \mbox{if\,}H(p_{i,1})-H(D_{i}+p_{i,1}-1)\leq0\end{cases}.\]

With the notations $\tilde{D}_{i}\triangleq D_{i}+p_{i,1}-1$ and
$\lambda\triangleq\frac{2^{-\gamma}}{1+2^{-\gamma}}$ , it is equivalent
to
\[
\tilde{D}_{i}\begin{cases}
=\lambda & \mbox{if\,}\tilde{D}_{i}<\min\{p_{i,1},1-p_{i,1}\}\\
\leq\lambda & \mbox{otherwise}\end{cases}.\]

Finally, it becomes \[
\tilde{D}_{i}=\begin{cases}
\lambda & \mbox{if}\,\lambda<\min\{p_{i,1},1-p_{i,1}\}\\
\min\{p_{i,1},1-p_{i,1}\} & \mbox{otherwise}\end{cases}\]
where 
\begin{align*}
&\sum_{i=1}^{N}\tilde{D}_{i}=\sum_{i=1}^{N}(D_{i}+p_{i,1}-1)\\
&\phantom{\sum_{i=1}^{N}\tilde{D}_{i}}=D+\sum_{i=1}^{N}p_{i,1}-N 
\end{align*}
and we conclude the proof.
\end{proof}

\section{Analysis of RDE Computation\label{sec:RDEanalysis}}

Consider a binary single source $X$ with $\Pr(X=1)=p$ and $\Pr(X=0)=1-p\triangleq\bar{p}$.
According to \cite{Blahut-it74}, for any admissible $(R,D)$ pair
we can find two parameters $s\geq0$ and $t\leq0$ so that $F(R,D)$
can be parametrically evaluated as
\begin{align*}
&F(R,D) = sR-stD+\max_{q_{1}}\left(-\log f(q_{1})\right)\\
&\phantom{F(R,D)}= sR-stD-\log\min_{q_{1}}f(q_{1})\end{align*}
where
\[f(q_{1})=\bar{p}\left(\sum_{k}q_{k}2^{t\delta_{0k}}\right)^{-s}+p\left(\sum_{k}q_{k}2^{t\delta_{1k}}\right)^{-s}\]
and $R,D$ are given in terms of optimizing $\underline{q}^{\star}$.

For the distortion measure in (\ref{eq:dstfnBMA}) and with $q_{0}=1-q_{1}$,
we have \[
f(q_{1})=\bar{p}\left((1-q_{1})2^{t}+q_{1}2^{2t}\right)^{-s}+p\left((1-q_{1})2^{t}+q_{1}\right)^{-s}\]
which is a convex function in $q_{1}$. Taking the derivative $\frac{\partial f}{\partial q_{1}}=0$ gives us
\[q_{1}^{\star}=\frac{1+2^{t}}{1-2^{t}}\left(\frac{1}{1+2^{t}}-\frac{\bar{p}^{\frac{1}{s+1}}}{2^{\frac{st}{s+1}}p^{\frac{1}{s+1}}+\bar{p}^{\frac{1}{s+1}}}\right)\triangleq\beta.\]

In order to minimize $f(q_{1})$ over $q_{1}\in[0,1]$, we consider
three following cases where the optimal $q_{1}^{\star}$ is either
on the boundary or at a point with zero gradient.

$\bullet$ Case 1: $0\leq p\leq\frac{2^{t}}{1+2^{t}}$ then $\beta\leq0$.
Since $f$ convex, it is non-decreasing in the interval $[\beta,\infty)$
and therefore in the interval $[0,1]$. Thus, the optimal $q_{1}^{\star}=0$
and we can also compute \[
D=1;\quad R=0;\quad  F=0=D_{KL}(p||p).\]

$\bullet$ Case 2: $1\geq p\geq\frac{1}{1+2^{t(2s+1)}}$ then $\beta\geq1$.
Since $f$ convex, it is non-increasing in the interval $(-\infty,\beta]$
and therefore in the interval $[0,1]$. Thus, the optimal $q_{1}^{\star}=1$
and we get
\[
D=\frac{2\bar{p}}{p2^{2ts}+\bar{p}}; \quad R=0; \quad F=D_{KL}(u||p)\]
where in this case $u=1-\frac{D}{2}$. We can further see that $D\in[2(1-p),1]$
and $u\in[1-D,p]$. 

$\bullet$ Case 3: $\frac{2^{t}}{1+2^{t}}<p<\frac{1}{1+2^{t(2s+1)}}$
then $\beta\in(0,1)$. In this case, the optimal $q_{1}^{\star}=\beta$.
We can find $w_{k|j}^{\star}=\frac{q_{k}^{\star}2^{t\delta_{jk}}}{\sum_{k}q_{k}^{\star}2^{t\delta_{jk}}}$
according to \cite{Blahut-it74} and then obtain
\begin{align*}
&D = \frac{2^{t}}{1+2^{t}}+1-u,\\
&R = H(u)-H(u+D-1),\\
&F = D_{KL}(u||p)\end{align*}
where \[u=\frac{2^{\frac{st}{s+1}}p^{\frac{1}{s+1}}}{2^{\frac{st}{s+1}}p^{\frac{1}{s+1}}+\bar{p}^{\frac{1}{s+1}}}.\]
With this notation of $u$, we can express 
\[q_{1}^{\star}=\frac{1-D}{3-2(u+D)}\quad
\text{and}\quad q_{0}^{\star}=\frac{2(1-u)-D}{3-2(u+D)}.\]
 We can see that $D\in(1-p,1)$.
It can also be verified that, in this case, by varying $s$ and $t,$
$u$ spans $(1-D,1-\frac{D}{2})$ and $R$ spans $(0,H(1-D))$.

\section*{Acknowledgement}

The authors would like to acknowledge the support of Seagate through
the NSF GOALI Program and thank Fatih Erden and Xinmiao Zhang for
valuable discussions on this topic.  The authors are also grateful to
the associate editor and anonymous reviewers for comments that
improved the quality of the paper.


\begin{thebibliography}{1}
\bibitem{Reed-jsim60}
I.~S. Reed and G.~Solomon, ``Polynomial codes over certain finite fields,''
  \emph{J. Soc. Indust. Math.}, vol.~8, no.~2, pp. 300--304, 1960.

\bibitem{Guruswami-it99}
V.~Guruswami and M.~Sudan, ``Improved decoding of {R}eed-{S}olomon and
  {A}lgebraic-{G}eometry codes,'' \emph{IEEE Trans.\ Inform.\ Theory}, vol.~45,
  no.~6, pp. 1757--1767, Sept. 1999.

\bibitem{Koetter-it03}
R.~Koetter and A.~Vardy, ``Algebraic soft-decision decoding of {R}eed-{S}olomon
  codes,'' \emph{IEEE Trans.\ Inform.\ Theory}, vol.~49, no.~11, pp.
  2809--2825, Nov. 2003.

\bibitem{Forney-it66}
G.~D. {Forney, Jr.}, ``Generalized minimum distance decoding,'' \emph{IEEE
  Trans.\ Inform.\ Theory}, vol.~12, no.~2, pp. 125--131, April 1966.

\bibitem{Lee-globecom08}
S.-W. Lee and B.~V. K.~V. Kumar, ``Soft-decision decoding of {R}eed-{S}olomon
  codes using successive error-and-erasure decoding,'' in \emph{Proc.\ IEEE
  Global Telecom.\ Conf.}, New Orleans, LA, Nov. 2008, pp. 1--5.

\bibitem{Chase-it72}
D.~Chase, ``A class of algorithms for decoding block codes with channel
  measurement information,'' \emph{IEEE Trans.\ Inform.\ Theory}, vol.~18,
  no.~1, pp. 170--182, Jan. 1972.

\bibitem{Tang-comlett01}
H.~Tang, Y.~Liu, M.~Fossorier, and S.~Lin, ``On combining {C}hase-2 and {GMD}
  decoding algorithms for nonbinary block codes,'' \emph{IEEE Commun.\
  Letters}, vol.~5, no.~5, pp. 209--211, May 2001.

\bibitem{Tokushige-ieice06}
H.~Tokushige, I.~Hisadomi, and T.~Kasami, ``Selection of test patterns in an
  iterative erasure and error decoding algorithm for non-binary block codes,''
  \emph{IEICE Trans.\ Fundamentals of Electronics, Communications and Computer
  Sciences}, vol. E89-A, no.~11, pp. 3355--3359, Nov. 2006.

\bibitem{Bellorado-it10}
J.~Bellorado and A.~Kav\v{c}i{\'c}, ``Low-complexity soft-decoding algorithms for
  {R}eed-{S}olomon codes - part {I}: An algebraic soft-in hard-out {C}hase
  decoder,'' \emph{IEEE Trans.\ Inform.\ Theory}, vol.~56, no.~3, pp. 945--959,
  2010.

\bibitem{Xia-com07}
H.~Xia and J.~R. Cruz, ``Reliability-based forward recursive algorithms for
  algebraic soft-decision decoding of {R}eed-{S}olomon codes,'' \emph{IEEE
  Trans.\ Commun.}, vol.~55, no.~7, pp. 1273--1278, July 2007.

\bibitem{Xia-icc08}
H.~Xia, H.~Wang, and J.~R. Cruz, ``A {C}hase-{GMD} algorithm for soft-decision
  decoding of {R}eed-{S}olomon codes on perpendicular channels,'' in
  \emph{Proc.\ IEEE Int.\ Conf.\ Commun.}, Beijing, China, May 2008, pp.
  1977--1981.

\bibitem{Jiang-it06}
J.~Jiang and K.~R. Narayanan, ``Iterative soft-input-soft-output decoding of
  {R}eed-{S}olomon codes by adapting the parity-check matrix,'' \emph{IEEE
  Trans.\ Inform.\ Theory}, vol.~52, no.~8, pp. 3746--3756, 2006.

\bibitem{Fossorier-it95}
M.~Fossorier and S.~Lin, ``Soft-decision decoding of linear block codes based
  on order statistics,'' \emph{IEEE Trans.\ Inform.\ Theory}, vol.~41, no.~5,
  pp. 1379--1396, 1995.

\bibitem{Parvaresh-isit03}
F.~Parvaresh and A.~Vardy, ``Multiplicity assignments for algebraic
  soft-decoding of {R}eed-{S}olomon codes,'' in \emph{Proc.\ IEEE Int.\ Symp.\
  Information Theory}, Yokohama, Japan, July 2003, p. 205.

\bibitem{El-Khamy-dimacs05}
M.~El-Khamy and R.~J. McEliece, ``Interpolation multiplicity assignment
  algorithms for algebraic soft-decision decoding of {R}eed-{S}olomon codes,''
  \emph{AMS-DIMACS volume on Algebraic Coding Theory and Information Theory},
  vol.~68, pp. 99--120, 2005.

\bibitem{Ratnakar-it05}
N.~Ratnakar and R.~Koetter, ``Exponential error bounds for algebraic
  soft-decision decoding of {R}eed-{S}olomon codes,'' \emph{IEEE Trans.\
  Inform.\ Theory}, vol.~51, no.~11, pp. 3899--3917, 2005.

\bibitem{Das-isit09}
H.~Das and A.~Vardy, ``Multiplicity assignments for algebraic soft-decoding of
  {R}eed-{S}olomon codes using the method of types,'' in \emph{Proc.\ IEEE
  Int.\ Symp.\ Information Theory}, Seoul, Korea, June 2009, pp. 1248--1252.

\bibitem{Lin-1983}
S.~Lin and D.~J. Costello, Jr., \emph{Error Control Coding: Fundamentals and
  Applications}.\hskip 1em plus 0.5em minus 0.4em\relax Englewood Cliffs, NJ,
  USA: Prentice-Hall, 1983, iSBN 0-13-283796-X.

\bibitem{Blahut-2003}
R.~E. Blahut, \emph{Algebraic Codes for Data Transmission}.\hskip 1em plus
  0.5em minus 0.4em\relax Cambridge University Press, 2003, iSBN-10 0521553741.

\bibitem{Cover-1991}
T.~M. Cover and J.~A. Thomas, \emph{Elements of Information Theory}.\hskip 1em
  plus 0.5em minus 0.4em\relax Wiley, 1991.

\bibitem{Blahut-it74}
R.~E. Blahut, ``Hypothesis testing and information theory,'' \emph{IEEE Trans.\
  Inform.\ Theory}, vol.~20, no.~4, pp. 405--417, July 1974.

\bibitem{Marton-it74}
K.~Marton, ``Error exponent for source coding with a fidelity criterion,''
  \emph{IEEE Trans.\ Inform.\ Theory}, vol.~20, no.~2, pp. 197--199, March
  1974.

\bibitem{Csiszar-1981}
I.~Csiszar and J.~Korner, \emph{Information Theory: Coding Theorems for
  Discrete Memoryless Channels}.\hskip 1em plus 0.5em minus 0.4em\relax
  Akademiai Kiado, Budapest, Hungary, 1981.

\bibitem{Blahut-1987}
R.~E. Blahut, \emph{Principles and practice of information theory}.\hskip 1em
  plus 0.5em minus 0.4em\relax Addison-Wesley, 1987, iSBN-0201107090.

\bibitem{Blahut-it72}
------, ``Computation of channel capacity and rate distortion functions,''
  \emph{IEEE Trans.\ Inform.\ Theory}, vol.~18, no.~4, pp. 460--473, July 1972.

\bibitem{Berger-1971}
T.~Berger, \emph{Rate Distortion Theory}.\hskip 1em plus 0.5em minus
  0.4em\relax Englewood Cliffs, N.J., USA: Prentice-Hall, Inc., 1971, iSBN
  13-753103-6.

\bibitem{Csiszar-it74}
I.~Csiszar, ``On the computation of rate-distortion functions,'' \emph{IEEE
  Trans.\ Inform.\ Theory}, vol.~20, no.~1, pp. 122--124, Jan. 1974.

\bibitem{Arimoto-it76}
S.~Arimoto, ``Computation of random coding exponent functions,'' \emph{IEEE
  Trans.\ Inform.\ Theory}, vol.~22, no.~6, pp. 665--671, Nov. 1976.

\bibitem{Gallager-1968}
R.~G. Gallager, \emph{Information Theory and Reliable Communication}.\hskip 1em
  plus 0.5em minus 0.4em\relax New York, {NY}, {USA}: Wiley, 1968.

\bibitem{Matz-itw04}
G.~Matz and P.~Duhamel, ``Information geometric formulation and interpretation
  of accelerated {B}lahut-{A}rimoto-type algorithms,'' in \emph{Proc.\ IEEE
  Inform.\ Theory Workshop}, San Antonio, TX, Oct. 2004, pp. 66--70.

\bibitem{McEliece-IPN03}
R.~J. McEliece, ``The {G}uruswami-{S}udan decoding algorithm for
  {R}eed-{S}olomon codes,'' \emph{JPL Interplanetary Network Progress Report},
  pp. 42--153, May 2003.

\bibitem{Gross-com06}
W.~J. Gross, F.~R. Kschischang, R.~Koetter, and P.~G. Gulak, ``Applications of
  algebraic soft-decision decoding of {R}eed-{S}olomon codes,'' \emph{IEEE
  Trans.\ Commun.}, vol.~54, no.~7, pp. 1224--1234, 2006.

\bibitem{Zhang-vlsi06}
X.~Zhang, ``Reduced complexity interpolation architecture for soft-decision
  {R}eed-{S}olomon decoding,'' \emph{IEEE Trans.\ VLSI Syst.}, vol.~14, no.~10,
  pp. 1156--1161, 2006.

\bibitem{Ma-vlsi07}
J.~Ma, A.~Vardy, and Z.~Wang, ``Low-latency factorization architecture for
  algebraic soft-decision decoding of {R}eed-{S}olomon codes,'' \emph{IEEE
  Trans.\ VLSI Syst.}, vol.~15, no.~11, pp. 1225--1238, 2007.

\bibitem{Jiang-it08}
J.~Jiang and K.~R. Narayanan, ``Algebraic soft-decision decoding of
  {R}eed-{S}olomon codes using bit-level soft information,'' \emph{IEEE Trans.\
  Inform.\ Theory}, vol.~54, no.~9, pp. 3907--3928, Sept. 2008.

\bibitem{Tokushige-ieice03}
H.~Tokushige, T.~Koumoto, M.~Fossorier, and T.~Kasami, ``Selection method of
  test patterns in soft-decision iterative bounded distance decoding
  algorithms,'' \emph{IEICE Trans.\ Fundamentals of Electronics, Communications
  and Computer Sciences}, vol. E86-A, no.~10, pp. 2445--2451, Oct. 2003.
 \end{thebibliography}
\end{document}